\newtheorem{theorem}{Theorem}
\theoremstyle{plain}
\newtheorem{acknowledgement}{Acknowledgement}
\newtheorem{corollary}{Corollary}
\newtheorem{definition}{Definition}
\newtheorem{lemma}{Lemma}
\newtheorem{notation}{Notation}
\newtheorem{proposition}{Proposition}
\newtheorem{remark}{Remark}
\numberwithin{equation}{section}
\begin{document}
\title[Quantum \ Field Formulation of p-Adic string amplitudes ]{Euclidean Quantum \ Field Formulation of p-Adic open string amplitudes }
\author[Fuquen-Tibat\'{a}]{A. R. Fuquen-Tibat\'{a}}
\address{Centro de Investigaci\'{o}n y de Estudios Avanzados del Instituto
Polit\'{e}cnico Nacional\\
Departamento de Matem\'{a}ticas, Unidad Quer\'{e}taro\\
Libramiento Norponiente \#2000, Fracc. Real de Juriquilla. Santiago de
Quer\'{e}taro, Qro. 76230\\
M\'{e}xico.}
\email{arfuquen@math.cinvestav.mx}
\author[Garc\'{\i}a-Compe\'{a}n]{H. Garc\'{\i}a-Compe\'{a}n}
\address{Centro de Investigacion y de Estudios Avanzados del I.P.N., Departamento de
F\'{\i}sica, Av. Instituto Politecnico Nacional 2508, Col. San Pedro
Zacatenco, Mexico D.F., C.P. 07360, Mexico}
\email{compean@fis.cinvestav.mx}
\author[Z\'{u}\~{n}iga-Galindo]{W. A. Z\'{u}\~{n}iga-Galindo}
\address{University of Texas Rio Grande Valley\\
School of Mathematical \& Statistical Sciences\\
One West University Blvd\\
Brownsville, TX 78520, United States and Centro de Investigaci\'{o}n y de
Estudios Avanzados del Instituto Polit\'{e}cnico Nacional\\
Departamento de Matem\'{a}ticas, Unidad Quer\'{e}taro\\
Libramiento Norponiente \#2000, Fracc. Real de Juriquilla. Santiago de
Quer\'{e}taro, Qro. 76230\\
M\'{e}xico.}
\email{wilson.zunigagalindo@utrgv.edu, wazuniga@math.cinvestav.edu.mx}
\keywords{String amplitudes, Koba-Nielsen amplitudes, $p$-adic \ numbers, vertex
operators, Igusa's local zeta functions.}

\begin{abstract}
We study in a rigorous mathematical way $p$-adic quantum field theories whose
$N$-point amplitudes are the expectation of products of vertex operators. We
show that this type of amplitudes admit a series expansion where each term is
an Igusa's local zeta function. The lowest term in this series is a
regularized version of the $p$-adic open Koba-Nielsen string amplitude.

\end{abstract}
\maketitle
\tableofcontents

\section{Introduction}

The string amplitudes were introduced by Veneziano in the 60s,
\cite{Veneziano}, further generalizations were obtained by Virasoro
\cite{Virasoro}, Koba and Nielsen \cite{Koba-Nielsen}, among others. In the
80s, Freund, Witten and Volovich, among others, studied string amplitudes at
the tree level over different number fields, and suggested the existence of
connections between these amplitudes, see e.g. \cite{B-F-O-W}-\cite{Vol}. In
this framework the connections with number theory, specifically with local
zeta functions appears naturally, see e.g. \cite{Bocardo:2020mk}%
-\cite{Zun-B-C-JHEP}, and the survey \cite{Symmetry}, see also
\cite{Arefeva-1}-\cite{Arefeva-3}.

The $p$-adic string theories have been studied over time with some periodic
fluctuations in their interest (for some reviews, see \cite{Brekke et al},
\cite{Hloseuk-Spect}, \cite{V-V-Z}, \cite{Dragovich:2017kge}). Recently a
considerable amount of work has been performed on this topic in the context of
the AdS/CFT correspondence
\cite{Gubser:2016guj,Heydeman:2016ldy,Gubser:2016htz,Dutta:2017bja}. String
theory with a $p$-adic world-sheet was proposed and studied for the first time
in \cite{Freund:1987kt}. Later this theory was formally known as $p$-adic
string theory. The $p$-adic strings are related to ordinary strings at least
in two different ways. First, connections through the adelic relations
\cite{Freund:1987ck}, and second, through the limit $p$ tends to $1$
\cite{Gerasimov:2000zp}-\cite{Ghoshal:2006}.

The tree-level string amplitudes were explicitly computed in the case of
$p$-adic string world-sheet in \cite{Brekke:1988dg} and \cite{Frampton:1987sp}%
. Since the 80s there has been interest in constructing field theories whose
correlators are the $p$-adic tree-level string amplitudes (or $p$-adic
Koba-Nielsen amplitudes). Spokoiny \cite{Spokoiny:1988zk} and Zhang
\cite{Zhang}, see also \cite{Parisi}, constructed formally quantum field
theories whose amplitudes are expectation values of products of vertex
operators. In \cite{Zabrodin:1988ep} Zabrodin established that the tree-level
string amplitudes may be obtained starting with a discrete field theory on a
Bruhat-Tits tree. These ideas have been used by Ghoshal and Kawano in the
study of $p$-adic strings in constant B-fields \cite{Ghoshal:2004ay}. This
article aims to provide a rigorous mathematical construction of a class of
quantum field theories whose amplitudes are expectations of products of vertex
operators. By using this approach, we carry out a mathematically rigorous
derivation of the $N$-point Koba-Nielsen amplitudes, thus our approach is
completely different from the one followed in
\cite{Spokoiny:1988zk,Zhang,Zabrodin:1988ep}.

The naive Euclidean version of the $p$-adic $N$-point amplitudes is given by
\begin{align}
\mathcal{A}^{\left(  N\right)  }\left(  \boldsymbol{k}\right)   &
=\left\langle
{\displaystyle\prod\limits_{j=1}^{N}}
\text{ }%
{\displaystyle\int\limits_{\mathbb{Q}_{p}}}
dx_{j}\text{{} }e^{\boldsymbol{k}_{j}\cdot\boldsymbol{\varphi}\left(
x_{j}\right)  }\right\rangle \label{Naive amplitude}\\
&  =\frac{1}{Z_{0}^{phys}}\int D\boldsymbol{\varphi}\text{{}}e^{-S\left(
\boldsymbol{\varphi}\right)  }\left\{
{\displaystyle\int\limits_{\mathbb{Q}_{p}^{N}}}
d^{N}x\text{{} }e^{\sum_{j=1}^{N}\boldsymbol{k}_{j}\cdot\boldsymbol{\varphi
}\left(  x_{j}\right)  }\right\}  ,\nonumber
\end{align}
where $\int_{\mathbb{Q}_{p}}dx_{j}${} $e^{\boldsymbol{k}_{j}\cdot
\boldsymbol{\varphi}\left(  x_{j}\right)  }$ is the tachyonic vertex operator
of the $j-$th tachyon, with momentum $\boldsymbol{k}_{j}=\left(
k_{0,j},\ldots,k_{D-1,j}\right)  $, and field $\boldsymbol{\varphi}(x_{j})$,
the dot denotes the standard Euclidean scalar product, and the action is given
by%
\begin{equation}
S\left(  \boldsymbol{\varphi}\right)  =\frac{T_{0}}{2}%
{\displaystyle\sum\limits_{j=1}^{N}}
\text{ }%
{\displaystyle\int\limits_{\mathbb{Q}_{p}}}
\text{ }%
{\displaystyle\int\limits_{\mathbb{Q}_{p}}}
\genfrac{\{}{\}}{}{}{\varphi_{j}\left(  x_{j}\right)  -\varphi_{j}\left(
y_{j}\right)  }{\left\vert x_{j}-y_{j}\right\vert _{p}}%
^{2}dx_{j}dy_{j}\text{.} \label{Action}%
\end{equation}
It is important to note that in (\ref{Naive amplitude}) the tachyonic fields
must be functions not distributions. These amplitudes are exactly the ones
considered in \cite{Spokoiny:1988zk}, \cite{Zhang}, \cite{Ghoshal:2004ay}.
Since the integral $\int_{\mathbb{Q}_{p}^{N}}d^{N}x$ in the right-hand side of
(\ref{Naive amplitude}) is always divergent, it is necessary to introduce a
cut-off, and to define the amplitude by a limit process. The key observation
is that the action (\ref{Action}) \ corresponds to a free quantum field. In
the Archimedean and non-Archimedean cases, free quantum fields correspond to
Gaussian probability measures on suitable infinite dimensional spaces. The
reader may consult \cite[Section 6.2]{Jaffe-Glimm} for the Archimedean case,
and \cite[Section 5.5]{Arroyo-Zuniga}, \cite{Zuniga-JFAA},
\cite{Zuniga-Preprint} for the $p$-adic case. We construct Gaussian
probability measure $\mathbb{P}_{D}$ on suitable function space
\ ($\mathcal{L}_{\mathbb{R}}^{D}\left(  \mathbb{Q}_{p}\right)  $) \ and
propose that $\mathcal{A}^{\left(  N\right)  }\left(  \boldsymbol{k}\right)
=\lim_{R\rightarrow\infty}\mathcal{A}_{R}^{\left(  N\right)  }\left(
\boldsymbol{k}\right)  $, where
\begin{equation}
\mathcal{A}_{R}^{\left(  N\right)  }\left(  \boldsymbol{k}\right)  =\frac
{1}{Z_{0}}%
{\displaystyle\int\limits_{B_{R}^{N}}}
\left\{  \text{ }%
{\displaystyle\int\limits_{\mathcal{L}_{\mathbb{R}}^{D}\left(  \mathbb{Q}%
_{p}\right)  }}
e^{\sum_{j=1}^{N}\boldsymbol{k}_{j}\cdot\boldsymbol{\varphi}\left(
x_{j}\right)  }d\mathbb{P}_{D}\left(  \boldsymbol{\varphi}\right)  \right\}
\prod_{j=1}^{N}dx_{j}\text{,} \label{Amplitude_R}%
\end{equation}
and $B_{R}^{N}$ denotes an $N$-dimensional ball of radius $p^{R}$. Following
the standard approach in QFT, we expand the right-hand side of
(\ref{Amplitude_R}) around a suitable solution of the equations of
motion.\ The main difficulty is that the solutions of these equation are
distributions, and we are restricted to work with functions. We show that
there is a change of variables in (\ref{Amplitude_R}) such that
\begin{equation}
\mathcal{A}_{R}^{\left(  N\right)  }\left(  \boldsymbol{k}\right)  =\frac
{1}{Z_{0}}%
{\displaystyle\int\limits_{B_{R}^{N}}}
\prod_{j<i}^{N}\left\vert x_{j}-x_{i}\right\vert _{p}^{2\frac{\left(
p-1\right)  }{p\ln p}\boldsymbol{k}_{i}\cdot\boldsymbol{k}_{j}}\left\{  \text{
}%
{\displaystyle\int\limits_{\mathcal{L}_{\mathbb{R}}^{D}\left(  \mathbb{Q}%
_{p}\right)  }}
e^{\sum_{j=1}^{N}\boldsymbol{k}_{j}\cdot\widetilde{\boldsymbol{\varphi}%
}\left(  x_{j}\right)  }d\widetilde{\mathbb{P}}_{D}\left(  \widetilde
{\boldsymbol{\varphi}}\right)  \right\}  \prod_{j=0}^{N}dx_{j},
\label{Amplitude_R_1}%
\end{equation}
where $\widetilde{\mathbb{P}}_{D}$ is a\ probability measure.\ Here is an
important difference with respect to the classical QFT, which is that the
$\boldsymbol{k}$ cannot be considered as a coupling constant, and thus there
is no a standard perturbative expansion for (\ref{Amplitude_R_1}). By taking
the classical \ normalization
\[
x_{1}=0\text{, }x_{N-1}=1\text{, }x_{N}=\infty\text{,}%
\]
and using the expansion of the exponential function, we show that
(\ref{Amplitude_R_1}) admits a series expansion of the form
\begin{multline*}
\mathcal{A}_{R}^{\left(  N\right)  }\left(  \boldsymbol{k}\right)
=\frac{C_{0}}{Z_{0}}\sum\limits_{l=0}^{\infty}%
{\displaystyle\int\limits_{B_{R}^{N-3}}}
\prod\limits_{i=2}^{N-2}\left\vert x_{i}\right\vert _{p}^{2\dfrac{\left(
p-1\right)  }{p\ln p}\boldsymbol{k}_{1}\cdot\boldsymbol{k}_{i}}\left\vert
1-x_{i}\right\vert _{p}^{2\dfrac{\left(  p-1\right)  }{p\ln p}\boldsymbol{k}%
_{N-1}\cdot\boldsymbol{k}_{i}}\\
\times\prod_{2\leq i,j\leq N-2}\left\vert x_{j}-x_{i}\right\vert _{p}%
^{2\dfrac{\left(  p-1\right)  }{p\ln p}\boldsymbol{k}_{i}\cdot\boldsymbol{k}%
_{j}}G_{l}(\boldsymbol{k},\boldsymbol{x})\prod_{j=2}^{N-2}dx_{j}\text{,}%
\end{multline*}
where $G_{0}(\boldsymbol{k},\boldsymbol{x})$ is a constant and the
$G_{l}(\boldsymbol{k},\boldsymbol{x})$s are continuous functions in
$\boldsymbol{x}$, for $l\geq1$. The product $1_{B_{R}^{N-3}}\left(
\boldsymbol{x}\right)  $ $G_{l}(\boldsymbol{k},\boldsymbol{x})$ can be
approximated by a test function in $\boldsymbol{x}$\ depending of
$\boldsymbol{k}$, for $l\geq1$, without altering the analytic dependence of
integral $\mathcal{A}_{R}^{\left(  N\right)  }\left(  \boldsymbol{k}\right)  $
with respect to $\boldsymbol{k}$.

An integral of the form
\begin{multline*}
Z_{\Phi}^{\left(  N\right)  }\left(  \boldsymbol{k}\right)  =%
{\displaystyle\int\limits_{\mathbb{Q}_{p}^{N-3}}}
\prod\limits_{i=2}^{N-2}\left\vert x_{i}\right\vert _{p}^{2\dfrac{\left(
p-1\right)  }{p\ln p}\boldsymbol{k}_{1}\cdot\boldsymbol{k}_{i}}\left\vert
1-x_{i}\right\vert _{p}^{2\dfrac{\left(  p-1\right)  }{p\ln p}\boldsymbol{k}%
_{N-1}\cdot\boldsymbol{k}_{i}}\\
\times\prod_{2\leq i,j\leq N-2}\left\vert x_{j}-x_{i}\right\vert _{p}%
^{2\dfrac{\left(  p-1\right)  }{p\ln p}\boldsymbol{k}_{i}\cdot\boldsymbol{k}%
_{j}}\Phi(\boldsymbol{x})\prod_{j=2}^{N-2}dx_{j}\text{,}%
\end{multline*}
where $\Phi$ is a test function is a particular case of a multivariate Igusa
zeta function \cite{Igusa}.

In \cite{Bocardo:2020mk}-\cite{Zun-B-C-JHEP} was established that the integral
$Z_{\Phi}^{\left(  N\right)  }\left(  \boldsymbol{k}\right)  $ is holomorphic
in a certain domain and that if $\Phi=1_{B_{R}^{N-3}}\left(  \boldsymbol{x}%
\right)  $, then%
\begin{multline*}
\lim_{R\rightarrow\infty}Z_{R}^{\left(  N\right)  }\left(  \boldsymbol{k}%
\right)  =%
{\displaystyle\int\limits_{\mathbb{Q}_{p}^{N-3}}}
\prod\limits_{i=2}^{N-2}\left\vert x_{i}\right\vert _{p}^{2\dfrac{\left(
p-1\right)  }{p\ln p}\boldsymbol{k}_{1}\cdot\boldsymbol{k}_{i}}\left\vert
1-x_{i}\right\vert _{p}^{2\dfrac{\left(  p-1\right)  }{p\ln p}\boldsymbol{k}%
_{N-1}\cdot\boldsymbol{k}_{i}}\\
\times\prod_{2\leq i,j\leq N-2}\left\vert x_{j}-x_{i}\right\vert _{p}%
^{2\dfrac{\left(  p-1\right)  }{p\ln p}\boldsymbol{k}_{i}\cdot\boldsymbol{k}%
_{j}}\prod_{j=2}^{N-2}dx_{j}=:Z^{\left(  N\right)  }\left(  \boldsymbol{k}%
\right)  ,
\end{multline*}
where $Z^{\left(  N\right)  }\left(  \boldsymbol{k}\right)  $ is a meromorphic
function which is a regularized version of the $p$-adic Koba-Nielsen
amplitude, \cite{Bocardo:2020mk}.

Therefore%
\begin{align*}
\mathcal{A}^{\left(  N\right)  }\left(  \boldsymbol{k}\right)   &
=\lim_{R\rightarrow\infty}\mathcal{A}_{R}^{\left(  N\right)  }\left(
\boldsymbol{k}\right)  =A^{\left(  N\right)  }\left(  \boldsymbol{k}\right) \\
&  +\lim_{R\rightarrow\infty}\left\{  \sum\limits_{l=1}^{\infty}%
{\displaystyle\int\limits_{B_{R}^{N-3}}}
\prod\limits_{i=2}^{N-2}\left\vert x_{i}\right\vert _{p}^{2\dfrac{\left(
p-1\right)  }{p\ln p}\boldsymbol{k}_{1}\cdot\boldsymbol{k}_{i}}\left\vert
1-x_{i}\right\vert _{p}^{2\dfrac{\left(  p-1\right)  }{p\ln p}\boldsymbol{k}%
_{N-1}\cdot\boldsymbol{k}_{i}}\right. \\
&  \left.  \times\frac{C_{0}}{Z_{0}}\prod_{2\leq i,j\leq N-2}\left\vert
x_{j}-x_{i}\right\vert _{p}^{2\dfrac{\left(  p-1\right)  }{p\ln p}%
\boldsymbol{k}_{i}\cdot\boldsymbol{k}_{j}}G_{l}(\boldsymbol{k},\boldsymbol{x}%
)\prod_{j=2}^{N-2}dx_{j}\right\}  ,
\end{align*}
where $A^{\left(  N\right)  }\left(  \boldsymbol{k}\right)  $ is the $p$-adic
Koba-Nielsen string amplitude in the Euclidean signature, $\frac{C_{0}}{Z_{0}%
}$ is a positive constant. We know that there is a common domain of
convergence in $\boldsymbol{k}$ for $A^{\left(  N\right)  }\left(
\boldsymbol{k}\right)  $\ and all the integrals appearing in the series, but
we do not know if the series converges. The study of the limit $R\rightarrow
\infty$ in the previous formula is an open problem.

In a forthcoming article, we plan to study the $p$-adic quantum field theories
\cite{Zuniga-Preprint} attached to a non-Archimedean version of the open
string action in a background gauge field \cite{Abouelsaood:1986gd}. This
action has cubic and quartic terms in the dynamical fields, which generate
interesting non-trivial one-loop quantum corrections which determine the beta
functions and the effective action for the gauge fields. We would like to find
the corresponding non-Archimedean version for this case. Finally, we expect
that the results presented in this work have a natural counterpart in the case
of standard Koba-Nielsen amplitudes.

\section{ Basic facts on $p$-adic analysis}

In this Section, we collect some basic results on $p$-adic analysis that we
use through the article. For a detailed exposition on $p$-adic analysis the
reader may consult \cite{Alberio et al}, \cite{Taibleson}, \cite{V-V-Z}.

\subsection{The field of $p$-adic numbers}

Throughout this article $p$ will denote a prime number. The field of $p-$adic
numbers $\mathbb{Q}_{p}$ is defined as the completion of the field of rational
numbers $\mathbb{Q}$ with respect to the $p-$adic norm $|\cdot|_{p}$, which is
defined as
\[
|x|_{p}=%
\begin{cases}
0 & \text{if }x=0\\
& \\
p^{-\gamma} & \text{if }x=p^{\gamma}\dfrac{a}{b},
\end{cases}
\]
where $a$ and $b$ are integers coprime with $p$. The integer $\gamma
=ord_{p}(x):=ord(x)$, with $ord(0):=+\infty$, is called the\textit{\ }%
$p-$\textit{adic order of} $x$. We extend the $p-$adic norm to $\mathbb{Q}%
_{p}^{N}$ by taking%
\[
||x||_{p}:=\max_{1\leq i\leq N}|x_{i}|_{p},\qquad\text{for }x=(x_{1}%
,\dots,x_{N})\in\mathbb{Q}_{p}^{N}.
\]
We define $ord(x)=\min_{1\leq i\leq N}\{ord(x_{i})\}$, then $||x||_{p}%
=p^{-ord(x)}$.\ The metric space $\left(  \mathbb{Q}_{p}^{N},||\cdot
||_{p}\right)  $ is a complete ultrametric space. As a topological space
$\mathbb{Q}_{p}$\ is homeomorphic to a Cantor-like subset of the real line,
see e.g. \cite{Alberio et al}, \cite{V-V-Z}.

Any $p-$adic number $x\neq0$ has a unique expansion of the form
\[
x=p^{ord(x)}\sum_{j=0}^{\infty}x_{j}p^{j},
\]
where $x_{j}\in\{0,1,2,\dots,p-1\}$ and $x_{0}\neq0$. By using this expansion,
we define \textit{the fractional part }$\{x\}_{p}$\textit{ of }$x\in
\mathbb{Q}_{p}$ as the rational number
\[
\{x\}_{p}=%
\begin{cases}
0 & \text{if }x=0\text{ or }ord(x)\geq0\\
& \\
p^{ord(x)}\sum_{j=0}^{-ord(x)-1}x_{j}p^{j} & \text{if }ord(x)<0.
\end{cases}
\]
In addition, any $x\in\mathbb{Q}_{p}^{N}\smallsetminus\left\{  0\right\}  $
can be represented uniquely as $x=p^{ord(x)}v\left(  x\right)  $ where
$\left\Vert v\left(  x\right)  \right\Vert _{p}=1$.

\subsection{Topology of $\mathbb{Q}_{p}^{N}$}

For $r\in\mathbb{Z}$, denote by $B_{r}^{N}(a)=\{x\in\mathbb{Q}_{p}%
^{N};||x-a||_{p}\leq p^{r}\}$ \textit{the ball of radius }$p^{r}$ \textit{with
center at} $a=(a_{1},\dots,a_{N})\in\mathbb{Q}_{p}^{N}$, and take $B_{r}%
^{N}(0):=B_{r}^{N}$. Note that $B_{r}^{N}(a)=B_{r}(a_{1})\times\cdots\times
B_{r}(a_{N})$, where $B_{r}(a_{i}):=\{x\in\mathbb{Q}_{p};|x_{i}-a_{i}|_{p}\leq
p^{r}\}$ is the one-dimensional ball of radius $p^{r}$ with center at
$a_{i}\in\mathbb{Q}_{p}$. The ball $B_{0}^{N}$ equals the product of $N$
copies of $B_{0}=\mathbb{Z}_{p}$, \textit{the ring of }$p-$\textit{adic
integers}. We also denote by $S_{r}^{N}(a)=\{x\in\mathbb{Q}_{p}^{N}%
;||x-a||_{p}=p^{r}\}$ \textit{the sphere of radius }$p^{r}$ \textit{with
center at} $a=(a_{1},\dots,a_{N})\in\mathbb{Q}_{p}^{N}$, and take $S_{r}%
^{N}(0):=S_{r}^{N}$. We notice that $S_{0}^{1}=\mathbb{Z}_{p}^{\times}$ (the
group of units of $\mathbb{Z}_{p}$), but $\left(  \mathbb{Z}_{p}^{\times
}\right)  ^{N}\subsetneq S_{0}^{N}$. The balls and spheres are both open and
closed subsets in $\mathbb{Q}_{p}^{N}$. In addition, two balls in
$\mathbb{Q}_{p}^{N}$ are either disjoint or one is contained in the other.

As a topological space $\left(  \mathbb{Q}_{p}^{N},||\cdot||_{p}\right)  $ is
totally disconnected, i.e. the only connected \ subsets of $\mathbb{Q}_{p}%
^{N}$ are the empty set and the points. A subset of $\mathbb{Q}_{p}^{N}$ is
compact if and only if it is closed and bounded in $\mathbb{Q}_{p}^{N}$, see
e.g. \cite[Section 1.3]{V-V-Z}, or \cite[Section 1.8]{Alberio et al}. The
balls and spheres are compact subsets. Thus $\left(  \mathbb{Q}_{p}%
^{N},||\cdot||_{p}\right)  $ is a locally compact topological space.

Since $(\mathbb{Q}_{p}^{N},+)$ is a locally compact topological group, there
exists a Haar measure $d^{N}x$, which is invariant under translations, i.e.
$d^{N}(x+a)=d^{N}x$. If we normalize this measure by the condition
$\int_{\mathbb{Z}_{p}^{N}}dx=1$, then $d^{N}x$ is unique.

\begin{notation}
We will use $\Omega\left(  p^{-r}||x-a||_{p}\right)  $ to denote the
characteristic function of the ball $B_{r}^{N}(a)$. For more general sets, we
will use the notation $1_{A}$ for the characteristic function of a set $A$.
\end{notation}

\subsection{The Bruhat-Schwartz space}

A complex-valued function $\varphi$ defined on $\mathbb{Q}_{p}^{N}$ is
\textit{called locally constant} if for any $x\in\mathbb{Q}_{p}^{N}$ there
exist an integer $l(x)\in\mathbb{Z}$ such that%
\begin{equation}
\varphi(x+x^{\prime})=\varphi(x)\text{ for any }x^{\prime}\in B_{l(x)}^{N}.
\label{local_constancy}%
\end{equation}
A function $\varphi:\mathbb{Q}_{p}^{N}\rightarrow\mathbb{C}$ is called a
\textit{Bruhat-Schwartz function (or a test function)} if it is locally
constant with compact support. Any test function can be represented as a
linear combination, with complex coefficients, of characteristic functions of
balls. The $\mathbb{C}$-vector space of Bruhat-Schwartz functions is denoted
by $\mathcal{D}(\mathbb{Q}_{p}^{N}):=\mathcal{D}$. We denote by $\mathcal{D}%
_{\mathbb{R}}(\mathbb{Q}_{p}^{N}):=\mathcal{D}_{\mathbb{R}}$\ the $\mathbb{R}%
$-vector space of Bruhat-Schwartz functions. For $\varphi\in\mathcal{D}%
(\mathbb{Q}_{p}^{N})$, the largest number $l=l(\varphi)$ satisfying
(\ref{local_constancy}) is called \textit{the exponent of local constancy (or
the parameter of constancy) of} $\varphi$.

We denote by $\mathcal{D}_{m}^{l}(\mathbb{Q}_{p}^{N})$ the finite-dimensional
space of test functions from $\mathcal{D}(\mathbb{Q}_{p}^{N})$ having supports
in the ball $B_{m}^{N}$ and with parameters \ of constancy $\geq l$. We now
define a topology on $\mathcal{D}$ as follows. We say that a sequence
$\left\{  \varphi_{j}\right\}  _{j\in\mathbb{N}}$ of functions in
$\mathcal{D}$ converges to zero, if the two following conditions hold:

(1) there are two fixed integers $k_{0}$ and $m_{0}$ such that \ each
$\varphi_{j}\in$ $\mathcal{D}_{m_{0}}^{k_{0}}$;

(2) $\varphi_{j}\rightarrow0$ uniformly.

$\mathcal{D}$ endowed with the above topology becomes a topological vector space.

\subsection{$L^{\rho}$ spaces}

Given $\rho\in\lbrack1,\infty)$, we denote by $L^{\rho}:=L^{\rho}\left(
\mathbb{Q}
_{p}^{N}\right)  :=L^{\rho}\left(
\mathbb{Q}
_{p}^{N},d^{N}x\right)  ,$ the $\mathbb{C}-$vector space of all the complex
valued functions $g$ satisfying%
\[%
{\displaystyle\int\limits_{\mathbb{Q}_{p}^{N}}}
\left\vert g\left(  x\right)  \right\vert ^{\rho}d^{N}x<\infty.
\]
The corresponding $\mathbb{R}$-vector spaces are denoted as $L_{\mathbb{R}%
}^{\rho}\allowbreak:=L_{\mathbb{R}}^{\rho}\left(
\mathbb{Q}
_{p}^{N}\right)  =L_{\mathbb{R}}^{\rho}\left(
\mathbb{Q}
_{p}^{N},d^{N}x\right)  $, $1\leq\rho<\infty$.

If $U$ is an open subset of $\mathbb{Q}_{p}^{N}$, $\mathcal{D}(U)$ denotes the
space of test functions with supports contained in $U$, then $\mathcal{D}(U)$
is dense in
\[
L^{\rho}\left(  U\right)  =\left\{  \varphi:U\rightarrow\mathbb{C};\left\Vert
\varphi\right\Vert _{\rho}=\left\{
{\displaystyle\int\limits_{U}}
\left\vert \varphi\left(  x\right)  \right\vert ^{\rho}d^{N}x\right\}
^{\frac{1}{\rho}}<\infty\right\}  ,
\]
where $d^{N}x$ is the normalized Haar measure on $\left(  \mathbb{Q}_{p}%
^{N},+\right)  $, for $1\leq\rho<\infty$, see e.g. \cite[Section 4.3]{Alberio
et al}. We denote by $L_{\mathbb{R}}^{\rho}\left(  U\right)  $ the real
counterpart of $L^{\rho}\left(  U\right)  $.

\subsection{The Fourier transform}

Set $\chi_{p}(y)=\exp(2\pi i\{y\}_{p})$ for $y\in\mathbb{Q}_{p}$. The map
$\chi_{p}(\cdot)$ is an additive character on $\mathbb{Q}_{p}$, i.e. a
continuous map from $\left(  \mathbb{Q}_{p},+\right)  $ into $S$ (the unit
circle considered as multiplicative group) satisfying $\chi_{p}(x_{0}%
+x_{1})=\chi_{p}(x_{0})\chi_{p}(x_{1})$, $x_{0},x_{1}\in\mathbb{Q}_{p}$. \ The
additive characters of $\mathbb{Q}_{p}$ form an Abelian group which is
isomorphic to $\left(  \mathbb{Q}_{p},+\right)  $. The isomorphism is given by
$\kappa\rightarrow\chi_{p}(\kappa x)$, see e.g. \cite[Section 2.3]{Alberio et
al}.

Given $\kappa=(\kappa_{1},\dots,\kappa_{N})$ and $y=(x_{1},\dots
,x_{N})\allowbreak\in\mathbb{Q}_{p}^{N}$, we set $\kappa\cdot x:=\sum
_{j=1}^{N}\kappa_{j}x_{j}$. The Fourier transform of $\varphi\in
\mathcal{D}(\mathbb{Q}_{p}^{N})$ is defined as
\[
(\mathcal{F}\varphi)(\kappa)=%
{\displaystyle\int\limits_{\mathbb{Q}_{p}^{N}}}
\chi_{p}(\kappa\cdot x)\varphi(x)d^{N}x\quad\text{for }\kappa\in\mathbb{Q}%
_{p}^{N},
\]
where $d^{N}x$ is the normalized Haar measure on $\mathbb{Q}_{p}^{N}$. The
Fourier transform is a linear isomorphism from $\mathcal{D}(\mathbb{Q}_{p}%
^{N})$ onto itself satisfying
\begin{equation}
(\mathcal{F}(\mathcal{F}\varphi))(\kappa)=\varphi(-\kappa), \label{Eq_FFT}%
\end{equation}
see e.g. \cite[Section 4.8]{Alberio et al}. We will also use the notation
$\mathcal{F}_{x\rightarrow\kappa}\varphi$ and $\widehat{\varphi}$\ for the
Fourier transform of $\varphi$.

The Fourier transform extends to $L^{2}$. If $f\in L^{2},$ its Fourier
transform is defined as
\[
(\mathcal{F}f)(\kappa)=\lim_{k\rightarrow\infty}%
{\displaystyle\int\limits_{||x||_{p}\leq p^{k}}}
\chi_{p}(\kappa\cdot x)f(x)d^{N}x,\quad\text{for }\kappa\in%
\mathbb{Q}
_{p}^{N},
\]
where the limit is taken in $L^{2}$. We recall that the Fourier transform is
unitary on $L^{2},$ i.e. $||f||_{2}=||\mathcal{F}f||_{2}$ for $f\in L^{2}$ and
that (\ref{Eq_FFT}) is also valid in $L^{2}$, see e.g. \cite[Chapter III,
Section 2]{Taibleson}.

\subsection{Distributions}

The $\mathbb{C}$-vector space $\mathcal{D}^{\prime}\left(  \mathbb{Q}_{p}%
^{n}\right)  $ $:=\mathcal{D}^{\prime}$ of all continuous linear functionals
on $\mathcal{D}(\mathbb{Q}_{p}^{n})$ is called the \textit{Bruhat-Schwartz
space of distributions}. Every linear functional on $\mathcal{D}$ is
continuous, i.e. $\mathcal{D}^{\prime}$\ agrees with the algebraic dual of
$\mathcal{D}$, see e.g. \cite[Chapter 1, VI.3, Lemma]{V-V-Z}. We denote by
$\mathcal{D}_{\mathbb{R}}^{\prime}\left(  \mathbb{Q}_{p}^{n}\right)  $
$:=\mathcal{D}_{\mathbb{R}}^{\prime}$ the dual space of $\mathcal{D}%
_{\mathbb{R}}$.

We endow $\mathcal{D}^{\prime}$ with the weak topology, i.e. a sequence
$\left\{  T_{j}\right\}  _{j\in\mathbb{N}}$ in $\mathcal{D}^{\prime}$
converges to $T$ if $\lim_{j\rightarrow\infty}T_{j}\left(  \varphi\right)
=T\left(  \varphi\right)  $ for any $\varphi\in\mathcal{D}$. \ The map
\[%
\begin{array}
[c]{lll}%
\mathcal{D}^{\prime}\times\mathcal{D} & \rightarrow & \mathbb{C}\\
&  & \\
\left(  T,\varphi\right)  & \rightarrow & T\left(  \varphi\right)
\end{array}
\]
is a bilinear form which is continuous in $T$ and $\varphi$ separately. We
call this map the pairing between $\mathcal{D}^{\prime}$ and $\mathcal{D}$.
From now on we will use $\left(  T,\varphi\right)  $ instead of $T\left(
\varphi\right)  $.

Every $f$\ in $L_{loc}^{1}$ defines a distribution $f\in\mathcal{D}^{\prime
}\left(  \mathbb{Q}_{p}^{N}\right)  $ by the formula
\[
\left(  f,\varphi\right)  =%
{\textstyle\int\limits_{\mathbb{Q}_{p}^{N}}}
f\left(  x\right)  \varphi\left(  x\right)  d^{N}x.
\]
Such distributions are called \textit{regular distributions}. Notice that for
$f$\ $\in L_{\mathbb{R}}^{2}$, $\left(  f,\varphi\right)  =\left\langle
f,\varphi\right\rangle $, where $\left\langle \cdot,\cdot\right\rangle $
denotes the scalar product in $L_{\mathbb{R}}^{2}$.

\subsection{The Fourier transform of a distribution}

The Fourier transform $\mathcal{F}\left[  T\right]  $ of a distribution
$T\in\mathcal{D}^{\prime}\left(  \mathbb{Q}_{p}^{n}\right)  $ is defined by%
\[
\left(  \mathcal{F}\left[  T\right]  ,\varphi\right)  =\left(  T,\mathcal{F}%
\left[  \varphi\right]  \right)  \text{ for all }\varphi\in\mathcal{D}%
(\mathbb{Q}_{p}^{n})\text{.}%
\]
The Fourier transform $T\rightarrow\mathcal{F}\left[  T\right]  $ is a linear
(and continuous) isomorphism from $\mathcal{D}^{\prime}\left(  \mathbb{Q}%
_{p}^{n}\right)  $\ onto $\mathcal{D}^{\prime}\left(  \mathbb{Q}_{p}%
^{n}\right)  $. Furthermore, $T=\mathcal{F}\left[  \mathcal{F}\left[
T\right]  \left(  -\xi\right)  \right]  $.

\section{A naive Euclidean version of the $p$-adic open string amplitudes}

We set $\boldsymbol{k}:=\left(  \boldsymbol{k}_{1},\ldots,\boldsymbol{k}%
_{N}\right)  $, where $\boldsymbol{k}_{j}=\left(  k_{0,j},\ldots
,k_{D-1,j}\right)  \in\mathbb{R}^{D}$ is the momentum of a tachyon,
$j=1,\ldots,N\text{. The dimension }D\geq1\text{ is fixed along this work. }%
$We also set%
\[
\boldsymbol{\varphi}(\cdot)=\left(  \varphi_{0}(\cdot),\ldots,\varphi
_{D-1}(\cdot)\right)  \in\left(  \mathcal{D}_{\mathbb{R}}\left(
\mathbb{Q}_{p}\right)  \right)  ^{D}.
\]
For $\boldsymbol{a}=\left(  a_{0},a_{1},\ldots,a_{D-1}\right)  $,
$\boldsymbol{b}=\text{$\left(  b_{0},\ldots,b_{D-1}\right)  $$\in$}%
\mathbb{R}^{D}$, $\boldsymbol{a}\cdot\boldsymbol{b}$ denotes the standard
scalar product in $\mathbb{R}^{D}$.

The naive Euclidean version of the $p$-adic $N$-point amplitudes is given by
\begin{equation}
\mathcal{A}^{\left(  N\right)  }\left(  \boldsymbol{k}\right)  =\frac{1}%
{Z_{0}^{phys}}\int D\boldsymbol{\varphi}\text{{}}e^{-S\left(
\boldsymbol{\varphi}\right)  }%
{\displaystyle\int\limits_{\mathbb{Q}_{p}^{N}}}
d^{N}x\text{{} }e^{\sum_{j=1}^{N}\boldsymbol{k}_{j}\cdot\boldsymbol{\varphi
}\left(  x_{j}\right)  } \label{eq:ampli}%
\end{equation}
where $d^{N}x=\prod_{j=1}^{N}dx_{j}$, $S\left(  \boldsymbol{\varphi}\right)
=\frac{T_{0}}{2}\sum_{j=0}^{D-1}S_{j}\left(  \varphi_{j}\right)  $, with
\[
S_{j}\left(  \varphi_{j}\right)  =%
{\displaystyle\int\limits_{\mathbb{Q}_{p}}}
\text{ }%
{\displaystyle\int\limits_{\mathbb{Q}_{p}}}
\genfrac{\{}{\}}{}{}{\varphi_{j}\left(  x_{j}\right)  -\varphi_{j}\left(
y_{j}\right)  }{\left\vert x_{j}-y_{j}\right\vert _{p}}%
^{2}dx_{j}dy_{j}\text{,}%
\]
and
\[
Z_{0}^{phys}=\int D\boldsymbol{\varphi}\text{{}}e^{-S\left(
\boldsymbol{\varphi}\right)  }\text{.}%
\]
The amplitudes (\ref{eq:ampli}) are just expectation values \ of products of
vertex operators. These amplitudes were proposed by Spokoiny
\cite{Spokoiny:1988zk} and Zhang \cite{Zhang}, see also \cite{Parisi},
\cite{Zabrodin:1988ep}. In these articles the authors obtain the $p$-adic open
Koba-Nielsen amplitudes from amplitudes (\ref{eq:ampli}) by a formal
calculation. The central goal of this work is to provide a mathematical
framework to understand these calculations.

Since there is $l\in\mathbb{Z}$ such that $\varphi_{j}\left(  x_{j}\right)
=0$ for $\left\vert x_{j}\right\vert _{p}>p^{l}\text{, }$for some
$l\in\mathbb{Z}$,
\[%
{\displaystyle\int\limits_{\mathbb{Q}_{p}^{N}}}
d^{N}x\text{{}}e^{\sum_{j=1}^{N}\boldsymbol{k}_{j}\cdot\boldsymbol{\varphi
}\left(  x_{j}\right)  }=\infty\text{.}%
\]
To fix this problem, it is necessary to introduce a cut-off and set
\[
\mathcal{A}_{R}^{\left(  N\right)  }\left(  \boldsymbol{k}\right)  =\frac
{1}{Z_{0}^{phys}}\int D\boldsymbol{\varphi}\text{{}}e^{-S\left(
\boldsymbol{\varphi}\right)  }%
{\displaystyle\int\limits_{B_{R}^{N}}}
d^{N}x\text{{}}e^{\sum_{j=1}^{N}\boldsymbol{k}_{j}\cdot\boldsymbol{\varphi
}\left(  x_{j}\right)  }\text{,}%
\]
where $R$ is a positive integer and $B_{R}^{N}=\left\{  x\in\mathbb{Q}_{p}%
^{N};\Vert x\Vert_{p}\leq p^{R}\right\}  \text{. }$

\subsection{The action and the Vladimirov operator}

\subsubsection{The Vladimirov operator}

The Vladimirov operator $\boldsymbol{D}:\mathcal{D}\left(  \mathbb{Q}%
_{p}\right)  \rightarrow L^{2}\left(  \mathbb{Q}_{p}\right)  $ is defined as%
\begin{align*}
\boldsymbol{D}\theta\left(  x\right)   &  =\dfrac{p^{2}}{p+1}%
{\displaystyle\int\limits_{\mathbb{Q}_{p}}}
\dfrac{\theta\left(  x\right)  -\theta\left(  y\right)  }{\left\vert
x-y\right\vert _{p}^{2}}dy=\dfrac{p^{2}}{p+1}%
{\displaystyle\int\limits_{\mathbb{Q}_{p}}}
\dfrac{\theta\left(  x\right)  -\theta\left(  x-z\right)  }{\left\vert
z\right\vert _{p}^{2}}dz\\
&  =\mathcal{F}_{\xi\rightarrow x}^{-1}\left[  \left\vert \xi\right\vert
_{p}\mathcal{F}_{x\rightarrow\xi}\theta\right]  \text{.}%
\end{align*}
This operator satisfies
\[
\boldsymbol{D}\theta\left(  x\right)  =-\frac{p^{2}}{p+1}\left\vert
x\right\vert _{p}^{-2}\ast\theta\left(  x\right)  \text{, for }\theta
\in\mathcal{D}\left(  \mathbb{Q}_{p}\right)  \text{,}%
\]
see e.g. \cite[Chapter 2, Section IX.1]{V-V-Z}.

\subsubsection{The action}

We now express the action in terms of the Vladimirov operator. For
$\varphi_{j}\in\mathcal{D}\left(  \mathbb{Q}_{p}\right)  $,
\begin{gather*}
S_{j}\left(  \varphi_{j}\right)  =%
{\displaystyle\int\limits_{\mathbb{Q}_{p}}}
\text{ }%
{\displaystyle\int\limits_{\mathbb{Q}_{p}}}
\left\{  \dfrac{\varphi_{j}\left(  x_{j}\right)  -\varphi_{j}\left(
y_{j}\right)  }{\left\vert x_{j}-y_{j}\right\vert _{p}}\right\}  ^{2}%
dx_{j}dy_{j}\\
=2%
{\displaystyle\int\limits_{\mathbb{Q}_{p}}}
\text{ }%
{\displaystyle\int\limits_{\mathbb{Q}_{p}}}
\dfrac{\varphi_{j}\left(  x_{j}\right)  \left(  \varphi_{j}\left(
x_{j}\right)  -\varphi_{j}\left(  y_{j}\right)  \right)  }{\left\vert
x_{j}-y_{j}\right\vert _{p}^{2}}dy_{j}dx_{j}=2\dfrac{\left(  p+1\right)
}{p^{2}}%
{\displaystyle\int\limits_{\mathbb{Q}_{p}}}
\varphi_{j}\left(  x_{j}\right)  \boldsymbol{D}\varphi_{j}\left(
x_{j}\right)  dx_{j}\\
=2\dfrac{p+1}{p^{2}}%
{\displaystyle\int\limits_{\mathbb{Q}_{p}}}
\varphi_{j}\left(  x_{j}\right)  \mathcal{F}_{\xi_{j}\rightarrow x_{j}}%
^{-1}\left[  \left\vert \xi_{j}\right\vert _{p}\mathcal{F}_{x_{j}%
\rightarrow\xi_{j}}\varphi_{j}\right]  dx_{j}\\
=2\dfrac{p+1}{p^{2}}%
{\displaystyle\int\limits_{\mathbb{Q}_{p}}}
\overline{\widehat{\varphi_{j}}\left(  \xi_{j}\right)  }\left\vert \xi
_{j}\right\vert _{p}\widehat{\varphi_{j}}\left(  \xi_{j}\right)  d\xi
_{j}=2\dfrac{p+1}{p^{2}}%
{\displaystyle\int\limits_{\mathbb{Q}_{p}}}
\left\vert \xi_{j}\right\vert _{p}\left\vert \widehat{\varphi_{j}}\left(
\text{$\xi$}_{j}\right)  \right\vert ^{2}d\xi_{j}\text{.}%
\end{gather*}
Then
\[
S\left(  \boldsymbol{\varphi}\right)  =\frac{T_{0}\left(  p+1\right)  }{p^{2}%
}\sum_{j=0}^{D-1}%
{\displaystyle\int\limits_{\mathbb{Q}_{p}}}
\varphi_{j}\left(  x_{j}\right)  \boldsymbol{D}\varphi_{j}\left(
x_{j}\right)  dx_{j}\text{.}%
\]

\subsubsection{The inverse of the Vladimirov operator}

We set
\[
\mathcal{L}\left(  \mathbb{Q}_{p}\right)  =\left\{  \theta\in\mathcal{D}%
\left(  \mathbb{Q}_{p}\right)  ;\widehat{\theta}\left(  0\right)  =0\right\}
\text{.}%
\]
The complex vector space $\mathcal{L}\left(  \mathbb{Q}_{p}\right)  $ endowed
with the topology inherited from $\mathcal{D}\left(  \mathbb{Q}_{p}\right)  $
is called the $p$\textit{-adic Lizorkin space of test functions of the second
kind}, see \cite[Chapter 7]{Alberio et al}. We set $\mathcal{L}_{\mathbb{R}%
}\left(  \mathbb{Q}_{p}\right)  :=\mathcal{D}_{\mathbb{R}}\left(
\mathbb{Q}_{p}\right)  \cap\mathcal{L}\left(  \mathbb{Q}_{p}\right)  $.

We define the inverse of $\mathbf{D}$ as
\[%
\begin{array}
[c]{cccc}%
\boldsymbol{D}^{-1}: & \mathcal{L}\left(  \mathbb{Q}_{p}\right)  & \rightarrow
& \mathcal{L}\left(  \mathbb{Q}_{p}\right) \\
&  &  & \\
& \theta & \rightarrow & \boldsymbol{D}^{-1}\theta\text{,}%
\end{array}
\]
where $\boldsymbol{D}^{-1}\theta\left(  x\right)  =\mathcal{F}_{\xi\rightarrow
x}^{-1}\left[  \left\vert \xi\right\vert _{p}^{-1}\mathcal{F}_{x\rightarrow
\xi}\theta\right]  $. Since $\left(  \mathcal{F}_{x\rightarrow\xi}%
\theta\right)  \left(  0\right)  =0$, we have $\boldsymbol{D}^{-1}%
\theta\left(  x\right)  \in\mathcal{L}\left(  \mathbb{Q}_{p}\right)  $.

Consider the equation
\[
\boldsymbol{D}\psi\left(  x\right)  =\theta\left(  x\right)  \text{ for
}\theta\in\mathcal{L}\left(  \mathbb{Q}_{p}\right)  \text{.}%
\]
This equation has a unique solution $\psi\in\mathcal{L}\left(  \mathbb{Q}%
_{p}\right)  $. Set
\begin{equation}
\left(  f_{1},\theta\right)  =\frac{-\left(  p-1\right)  }{p\ln p}%
{\displaystyle\int\limits_{\mathbb{Q}_{p}}}
\theta\left(  x\right)  \text{{}}\ln\left\vert x\right\vert _{p}\text{{}%
}dx\text{, \ for \ }\theta\in\mathcal{L}\left(  \mathbb{Q}_{p}\right)  .
\label{Eq_f1}%
\end{equation}
Then
\[
\widehat{f_{1}}\left(  \xi\right)  =\frac{1}{\left\vert \xi\right\vert _{p}%
}\text{\ in }\mathcal{L}^{\prime}\left(  \mathbb{Q}_{p}\right)  \text{,}%
\]
and
\[
\psi\left(  x\right)  =\boldsymbol{D}^{-1}\theta\left(  x\right)
=f_{1}\left(  x\right)  \ast\theta\left(  x\right)  \text{,}%
\]
see e.g. \cite[Chapter 2, Section IX.2]{V-V-Z}.

\section{Gaussian processes and free quantum fields}

We define the bilinear form $\mathbb{B}$
\[%
\begin{array}
[c]{cccc}%
\mathbb{B}: & \mathcal{L}_{\mathbb{R}}\left(  \mathbb{Q}_{p}\right)
\times\mathcal{L}_{\mathbb{R}}\left(  \mathbb{Q}_{p}\right)  & \rightarrow &
\mathbb{R}\\
&  &  & \\
& \left(  \varphi,\theta\right)  & \rightarrow & \langle\varphi,\boldsymbol{D}%
^{-1}\theta\rangle
\end{array}
\]
where $\langle\cdot,\cdot\rangle$ denotes the scalar product in $L^{2}\left(
\mathbb{Q}_{p}\right)  \text{.}$

\begin{lemma}
\label{Lemma1}$\mathbb{B}$ is a positive, continuous bilinear form from
$\mathcal{L}_{\mathbb{R}} \left(  \mathbb{Q}_{p}\right)  \times\mathcal{L}%
_{\mathbb{R}} \left(  \mathbb{Q}_{p}\right)  $ into $\mathbb{R}$.
\end{lemma}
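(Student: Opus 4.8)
The plan is to verify the three properties—bilinearity, positivity, continuity—directly from the definition $\mathbb{B}(\varphi,\theta)=\langle\varphi,\boldsymbol{D}^{-1}\theta\rangle$, working on the Fourier side where $\boldsymbol{D}^{-1}$ acts by multiplication by $|\xi|_p^{-1}$. Bilinearity is immediate since both the $L^2$ scalar product and $\boldsymbol{D}^{-1}$ are linear. For the remaining two properties the key move is to rewrite $\mathbb{B}$ in terms of Fourier transforms. By the unitarity of the Fourier transform on $L^2(\mathbb{Q}_p)$ (Plancherel) together with the definition $\boldsymbol{D}^{-1}\theta=\mathcal{F}^{-1}_{\xi\to x}[\,|\xi|_p^{-1}\mathcal{F}_{x\to\xi}\theta\,]$, I would establish the formula
\begin{equation*}
\mathbb{B}(\varphi,\theta)=\int_{\mathbb{Q}_p}\overline{\widehat{\varphi}(\xi)}\,|\xi|_p^{-1}\,\widehat{\theta}(\xi)\,d\xi,
\end{equation*}
valid for $\varphi,\theta\in\mathcal{L}_{\mathbb{R}}(\mathbb{Q}_p)$. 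The passage is justified because for $\theta\in\mathcal{L}(\mathbb{Q}_p)$ one has $\widehat{\theta}(0)=0$ and $\widehat{\theta}$ is a test function vanishing at the origin, so $|\xi|_p^{-1}\widehat{\theta}(\xi)$ is locally constant with compact support and the integral is over a set bounded away from $\xi=0$; thus no integrability issue arises.

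**Positivity** then follows by taking $\varphi=\theta$: the formula gives $\mathbb{B}(\theta,\theta)=\int_{\mathbb{Q}_p}|\xi|_p^{-1}\,|\widehat{\theta}(\xi)|^2\,d\xi\geq 0$, since the integrand is nonnegative and $|\xi|_p^{-1}>0$ for $\xi\neq 0$. (For real-valued $\theta\in\mathcal{L}_{\mathbb{R}}(\mathbb{Q}_p)$ the conjugate is harmless and the expression is manifestly real.)

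For **continuity**, I would exploit the structure of the topology on $\mathcal{L}_{\mathbb{R}}(\mathbb{Q}_p)$ inherited from $\mathcal{D}_{\mathbb{R}}(\mathbb{Q}_p)$: convergence of a sequence to zero means the functions eventually share a fixed ball $B^{N}_{m_0}$ as support with a common parameter of constancy $k_0$, and converge uniformly. Under the Fourier transform, support in $B_{m_0}$ corresponds to a fixed parameter of local constancy for $\widehat{\theta}$, while a fixed constancy parameter corresponds to a fixed compact support in $\xi$; hence the family of Fourier transforms is supported in a fixed compact set $\{\,p^{a}\le|\xi|_p\le p^{b}\,\}$ bounded away from the origin, on which $|\xi|_p^{-1}$ is bounded. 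Since the Fourier transform is continuous on $\mathcal{D}_{\mathbb{R}}$ and restricts continuously to $\mathcal{L}_{\mathbb{R}}$, uniform smallness of $\varphi,\theta$ transfers to uniform smallness of $\widehat{\varphi},\widehat{\theta}$, and the displayed integral—an integral of a product of uniformly small factors against the bounded weight $|\xi|_p^{-1}$ over a fixed compact set—is controlled by $\sup|\widehat{\varphi}|\cdot\sup|\widehat{\theta}|$ times the finite measure of that set. This yields joint continuity of the bilinear form.

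**The main obstacle** I anticipate is the bookkeeping in the continuity argument: I must be careful that the support and constancy parameters of $\widehat{\theta}$ are governed \emph{uniformly} in terms of those of $\theta$, so that a single compact annulus in $\xi$-space (avoiding $\xi=0$) serves for the entire convergent sequence, and that on this annulus the unbounded weight $|\xi|_p^{-1}$ is genuinely bounded. Once the Fourier-side formula for $\mathbb{B}$ is in place and the duality between support and constancy under $\mathcal{F}$ is invoked, positivity is essentially free and continuity reduces to the elementary estimate above.
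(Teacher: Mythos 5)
Your proposal is correct and follows essentially the same route as the paper: positivity via the Plancherel identity $\mathbb{B}(\theta,\theta)=\int_{\mathbb{Q}_{p}}|\widehat{\theta}(\xi)|^{2}\,|\xi|_{p}^{-1}\,d\xi\geq 0$, and continuity by observing that for a null sequence in $\mathcal{L}_{\mathbb{R}}\left(\mathbb{Q}_{p}\right)$ the Fourier transforms all live in a fixed compact set bounded away from the origin, on which $|\xi|_{p}^{-1}$ is bounded (the paper phrases this by splitting the integral at $|\xi|_{p}=1$ and using that $\widehat{\theta}_{n}$ vanishes on a fixed ball $p^{m_{0}}\mathbb{Z}_{p}$, but the mechanism is identical to your annulus argument). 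The only point you leave implicit is definiteness, namely that $\mathbb{B}(\varphi,\varphi)=0$ forces $\varphi=0$, which the paper records by noting that $\widehat{\varphi}=0$ almost everywhere implies $\varphi=0$ since $\varphi$ is continuous.
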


\begin{proof}
We first notice that for $\varphi\in\mathcal{L}_{\mathbb{R}}\left(
\mathbb{Q}_{p}\right)  $, we have
\[
\mathbb{B}\left(  \varphi,\varphi\right)  =\langle\varphi,\boldsymbol{D}%
^{-1}\varphi\rangle=\langle\mathcal{F}^{-1}\varphi,\frac{\mathcal{F}\varphi
}{\left\vert \xi\right\vert _{p}}\rangle=%
{\displaystyle\int\limits_{\mathbb{Q}_{p}}}
\frac{\left\vert \widehat{\varphi}(\xi)\right\vert ^{2}d\xi}{\left\vert
\xi\right\vert _{p}}\geq0.
\]
Then $\mathbb{B}\left(  \varphi,\varphi\right)  =0$ implies that $\varphi$ is
zero almost everywhere and since $\varphi$ is continuous $\varphi=0$. Let
$\left(  \varphi_{n},\theta_{n}\right)  \in\mathcal{L}_{\mathbb{R}}\left(
\mathbb{Q}_{p}\right)  \times\mathcal{L}_{\mathbb{R}}\left(  \mathbb{Q}%
_{p}\right)  $ be two sequences such that $\varphi_{n}\rightarrow0$ and
$\theta_{n}\rightarrow0$ in $\mathcal{L}_{\mathbb{R}}\left(  \mathbb{Q}%
_{p}\right)  $. We recall that the topology of $\mathcal{L}_{\mathbb{R}%
}\left(  \mathbb{Q}_{p}\right)  $ agrees with the topology of $\mathcal{D}%
_{\mathbb{R}}\left(  \mathbb{Q}_{p}\right)  $. Now,%

\begin{align*}
\mathbb{B}\left(  \theta_{n},\varphi_{n}\right)   &  =%
{\displaystyle\int\limits_{\mathbb{Q}_{p}}}
\dfrac{\widehat{\theta}_{n}\left(  \xi\right)  \overline{\widehat{\varphi}%
}_{n}\left(  \xi\right)  }{\left\vert \xi\right\vert _{p}}d\xi=%
{\displaystyle\int\limits_{\mathbb{Z}_{p}}}
\dfrac{\widehat{\theta}_{n}\left(  \xi\right)  \overline{\widehat{\varphi}%
}_{n}\left(  \xi\right)  }{\left\vert \xi\right\vert _{p}}d\xi+%
{\displaystyle\int\limits_{\mathbb{Q}_{p}\smallsetminus\mathbb{Z}_{p}}}
\dfrac{\widehat{\theta}_{n}\left(  \xi\right)  \overline{\widehat{\varphi}%
}_{n}\left(  \xi\right)  }{\left\vert \xi\right\vert _{p}}d\xi\\
&  =:I_{1}\left(  \theta_{n},\varphi_{n}\right)  +I_{2}\left(  \theta
_{n},\varphi_{n}\right)  \text{.}%
\end{align*}
Since $\theta_{n}\in\mathcal{D}_{\mathbb{R}}\left(  \mathbb{Q}_{p}\right)  $
there exist two integers $m_{0},\,l_{0}$, independent of $n$, such that
\[
\text{supp }\widehat{\theta}_{n}\subset p^{l_{0}}\mathbb{Z}_{p}\text{ and
}\widehat{\theta}_{n}\left(  \xi\right)  \mid_{\xi_{0}+p^{m_{0}}\mathbb{Z}%
_{p}}=\widehat{\theta}_{n}\left(  \xi_{0}\right)
\]
for each $n\in\mathbb{N}$. Without loss of generality, we may assume that
$m_{0}$ is a positive integer. Then $\widehat{\theta}_{n}\left(  \xi\right)
\mid_{p^{m_{0}}\mathbb{Z}_{p}}=\widehat{\theta}_{n}\left(  0\right)  =0$ for
each $n\in\mathbb{N}$, and
\begin{align*}
\left\vert I_{1}\left(  \varphi_{n},\theta_{n}\right)  \right\vert  &
\leq\Vert\widehat{\varphi}_{n}\Vert_{\infty}%
{\displaystyle\int\limits_{p^{-m_{0}}<\left\vert \xi\right\vert _{p}\leq1}}
\dfrac{\left\vert \widehat{\theta}_{n}\left(  \xi\right)  \right\vert
}{\left\vert \xi\right\vert _{p}}d\xi\leq\Vert\varphi_{n}\Vert_{1}%
\Vert\widehat{\theta}_{n}\Vert_{\infty}%
{\displaystyle\int\limits_{p^{-m_{0}}<\left\vert \xi\right\vert _{p}\leq1}}
\dfrac{1}{\left\vert \xi\right\vert _{p}}d\xi\\
&  \leq C_{1}\Vert\varphi_{n}\Vert_{1}\Vert\theta_{n}\Vert_{1}\text{.}%
\end{align*}
For the second integral,
\begin{align*}
|I_{2}(\varphi_{n},\theta_{n})|  &  \leq\Vert\widehat{\varphi}_{n}%
\Vert_{\infty}%
{\displaystyle\int\limits_{\left\vert \xi\right\vert _{p}>1}}
\dfrac{\left\vert \widehat{\theta}_{n}\left(  \xi\right)  \right\vert
}{\left\vert \xi\right\vert _{p}}d\xi\leq\Vert\widehat{\varphi}_{n}%
\Vert_{\infty}%
{\displaystyle\int\limits_{\left\vert \xi\right\vert _{p}>1}}
\left\vert \widehat{\theta}_{n}\left(  \xi\right)  \right\vert d\xi\\
&  \leq\Vert\varphi_{n}\Vert_{1}\Vert\widehat{\theta}_{n}\Vert_{1}\text{.}%
\end{align*}
Therefore,
\[
\mathbb{B}\left(  \varphi_{n},\theta_{n}\right)  \leq C_{1}\Vert\varphi
_{n}\Vert_{1}\Vert\theta_{n}\Vert_{1}+\Vert\varphi_{n}\Vert_{1}\Vert
\widehat{\theta}_{n}\Vert_{1}\text{.}%
\]

Now, the continuity of $\mathbb{B}$ follows from the fact that $\varphi_{n}$
$\underrightarrow{\text{uniform.}}$ $0$ and $\theta_{n}$ $\underrightarrow
{\text{uniform.}}$ $0$ imply that $\Vert\varphi_{n}\Vert_{1}\rightarrow0$,
$\Vert\theta_{n}\Vert_{1}\rightarrow0$, and $\Vert\widehat{\theta}_{n}%
\Vert_{1}\text{{}}\rightarrow0$ as $n$ tends to infinity. The convergence of
the last sequence follows from
\[
\left\Vert \widehat{\theta_{n}}\right\Vert _{1}=%
{\displaystyle\int\limits_{p^{l_{0}}\mathbb{Z}_{p}}}
\left\vert \widehat{\theta_{n}}\left(  \xi\right)  \right\vert d\xi\leq
p^{-l_{0}}\left\Vert \widehat{\theta_{n}}\right\Vert _{\infty}\leq p^{-l_{0}%
}\left\Vert \theta_{n}\right\Vert _{1}.
\]

\end{proof}

We recall that $\mathcal{D}\left(  \mathbb{Q}_{p}\right)  $ is a nuclear space
cf. \cite[Section 4]{Bruhat}, and since any subspace of a nuclear space is
also nuclear, $\mathcal{L}_{\mathbb{R}}\left(  \mathbb{Q}_{p}\right)  $ is a
nuclear space that is dense and continuously embedded in $L_{\mathbb{R}}%
^{2}\left(  \mathbb{Q}_{p}\right)  $, cf. \cite[theorem 7.4.4]{Alberio et al}.
Then we have the following Gel'fand triple:
\[
\mathcal{L}_{\mathbb{R}}\left(  \mathbb{Q}_{p}\right)  \hookrightarrow
L_{\mathbb{R}}^{2}\left(  \mathbb{Q}_{p}\right)  \hookrightarrow
\mathcal{L}_{\mathbb{R}}^{\prime}\left(  \mathbb{Q}_{p}\right)  \text{.}%
\]
We denote by $\mathcal{B}:=\mathcal{B}\left(  \mathcal{L}_{\mathbb{R}}%
^{\prime}\left(  \mathbb{Q}_{p}\right)  \right)  $ the $\sigma-$algebra
generated by the cylinder subsets of $\mathcal{L}_{\mathbb{R}}^{\prime}\left(
\mathbb{Q}_{p}\right)  \text{.}$

Consider the mapping
\[%
\begin{array}
[c]{cccc}%
\mathcal{C}: & \mathcal{L}_{\mathbb{R}}\left(  \mathbb{Q}_{p}\right)   &
\rightarrow & \mathbb{C}\\
&  &  & \\
& f & \rightarrow & e^{\frac{-1}{2}\mathbb{B}\left(  f,f\right)  }\text{.}%
\end{array}
\]
This functional is a continuous, positive definite mapping, cf. Lemma
$\text{\ref{Lemma1},}$ and $\mathcal{C}\left(  0\right)  =1$. Then
$\mathcal{C}$ defines a characteristic functional in $\mathcal{L}_{\mathbb{R}%
}\left(  \mathbb{Q}_{p}\right)  \text{.}$ By Bochner-Minlos theorem, there
exists a unique probability measure $\mathbb{P}$ called the canonical Gaussian
measure on $\left(  \mathcal{L}_{\mathbb{R}}^{\prime}\left(  \mathbb{Q}%
_{p}\right)  ,\mathcal{B}\right)  $ given by its characteristic functional as
\begin{equation}%
{\displaystyle\int\limits_{\mathcal{L}_{\mathbb{R}}^{\prime}\left(
\mathbb{Q}_{p}\right)  }}
e^{\sqrt{-1}\left(  W,f\right)  }d\mathbb{P}\left(  W\right)  =e^{-\frac{1}%
{2}\mathbb{B}\left(  f,f\right)  },\text{\ \ }f\in\mathcal{L}_{\mathbb{R}%
}\left(  \mathbb{Q}_{p}\right)  ,\label{probmeasure}%
\end{equation}
where $\left(  \cdot,\cdot\right)  $ is the pairing between $\mathcal{L}%
_{\mathbb{R}}^{\prime}\left(  \mathbb{Q}_{p}\right)  $ and $\mathcal{L}%
_{\mathbb{R}}\left(  \mathbb{Q}_{p}\right)  $. The measure $\mathbb{P}$
corresponds to a free quantum field on $\mathcal{L}_{\mathbb{R}}^{\prime
}\left(  \mathbb{Q}_{p}\right)  $. This identification is well-known in the
Archimedean and non-Archimedean settings, see e.g. \cite[Section
6.2]{Jaffe-Glimm}, \cite[Section 5.5]{Arroyo-Zuniga}.

\section{N-point amplitudes}

\subsection{A rigorous definition of the $N$-point amplitudes}

We denote by
\[%
{\displaystyle\bigotimes\limits_{j=0}^{D-1}}
\mathbb{P}\left(  \varphi_{j}\right)  =\mathbb{P}_{D}\left(
\boldsymbol{\varphi}\right)  \text{,}%
\]
the product probability measure on the product $\sigma-$algebra $\mathcal{B}%
^{D}$. We set
\[
\mathcal{L}_{\mathbb{R}}^{D}\left(  \mathbb{Q}_{p}\right)  =\mathcal{L}%
_{\mathbb{R}}\left(  \mathbb{Q}_{p}\right)  \times\cdots\times\mathcal{L}%
_{\mathbb{R}}\left(  \mathbb{Q}_{p}\right)  \text{, }D\text{-times.}%
\]
The probability measure
\begin{equation}
\frac{1_{\mathcal{L}_{\mathbb{R}}^{D}\left(  \mathbb{Q}_{p}\right)  }\left(
\boldsymbol{\varphi}\right)  \text{{}}d\mathbb{P}_{D}\left(
\boldsymbol{\varphi}\right)  }{Z_{0}},\label{Eq_Measure}%
\end{equation}
where $Z_{0}=\int\nolimits_{\mathcal{L}_{\mathbb{R}}^{D}\left(  \mathbb{Q}%
_{p}\right)  }d\mathbb{P}_{D}\left(  \boldsymbol{\varphi}\right)  $ represents
a free quantum field in $\mathcal{L}_{\mathbb{R}}^{D}\left(  \mathbb{Q}%
_{p}\right)  $.

Intuitively, the $N$-point amplitudes are the expectation values of the
products of the vertex operators with respect to the measure (\ref{Eq_Measure}%
):%
\begin{equation}
\left\langle
{\displaystyle\prod\limits_{j=1}^{N}}
\text{ \ }%
{\displaystyle\int\limits_{\mathbb{Q}_{p}}}
dx_{j}\text{{}}e^{\boldsymbol{k}_{j}\cdot\boldsymbol{\varphi}\left(
x_{j}\right)  }\right\rangle _{\mathbb{P}_{D}}=\frac{1}{Z_{0}}%
{\displaystyle\int\limits_{\mathcal{L}_{\mathbb{R}}^{D}\left(  \mathbb{Q}%
_{p}\right)  }}
\text{ \ }%
{\displaystyle\int\limits_{\mathbb{Q}_{p}^{N}}}
d^{N}x\text{{}}e^{\sum_{j=1}^{N}\boldsymbol{k}_{j}\cdot\boldsymbol{\varphi
}\left(  x_{j}\right)  }d\mathbb{P}_{D}\left(  \boldsymbol{\varphi}\right)  .
\label{Eq_N_point_Am}%
\end{equation}
It is important to mention that $e^{\sum_{j=1}^{N}\boldsymbol{k}_{j}%
\cdot\boldsymbol{\varphi}\left(  x_{j}\right)  }$ requires that each entry of
$\boldsymbol{\varphi}\left(  x_{j}\right)  $ be a function, for this reason,
the factor $1_{\mathcal{L}_{\mathbb{R}}^{D}\left(  \mathbb{Q}_{p}\right)  }$
is completely necessary in (\ref{Eq_Measure}).

Due to the divergence of the second integral in the right-hand side of
(\ref{Eq_N_point_Am}), we define the $N$-point amplitudes as follows.

\begin{definition}
For a positive integer $R$, we define the $p$-adic $N$-point amplitudes as
$\mathcal{A}^{\left(  N\right)  }\left(  \boldsymbol{k}\right)  =\lim
_{R\rightarrow\infty}\mathcal{A}_{R}^{\left(  N\right)  }\left(
\boldsymbol{k}\right)  $, where
\[
\mathcal{A}_{R}^{\left(  N\right)  }\left(  \boldsymbol{k}\right)  :=\frac
{1}{Z_{0}}%
{\displaystyle\int\limits_{B_{R}^{N}}}
\left\{  \text{ }%
{\displaystyle\int\limits_{\mathcal{L}_{\mathbb{R}}^{D}\left(  \mathbb{Q}%
_{p}\right)  }}
e^{\sum_{j=1}^{N}\boldsymbol{k}_{j}\cdot\boldsymbol{\varphi}\left(
x_{j}\right)  }d\mathbb{P}_{D}\left(  \boldsymbol{\varphi}\right)  \right\}
\prod_{j=1}^{N}dx_{j}\text{.}%
\]

\end{definition}

Our central goal is to show (in a rigorous mathematical way) that the ansatz
proposed in the above definition allow us to obtain the $p$-adic open
Koba-Nielsen amplitudes as the constant term of a series expansion of
$\lim_{R\rightarrow\infty}\mathcal{A}_{R}^{\left(  N\right)  }\left(
\boldsymbol{k}\right)  $ in functions depending on $\boldsymbol{k}$. The
precise statements of our main results are given in Theorems \ref{Prop2},
\ref{Prop3}.

By using that
\[
\sum_{j=1}^{N}\boldsymbol{k}_{j}\cdot\boldsymbol{\varphi}\left(  x_{j}\right)
=\sum_{j=1}^{N}\sum_{l=0}^{D-1}k_{l,j}\varphi_{l}\left(  x_{j}\right)
\]
we have
\begin{align}
\mathcal{A}_{R}^{\left(  N\right)  }\left(  \boldsymbol{k}\right)   &
=\dfrac{1}{Z_{0}}%
{\displaystyle\int\limits_{B_{R}^{N}}}
\left\{  \text{ }%
{\displaystyle\int\limits_{\mathcal{L}_{\mathbb{R}}^{D}\left(  \mathbb{Q}%
_{p}\right)  }}
e^{\sum_{j=1}^{N}\sum_{l=0}^{D-1}k_{l,j}\varphi_{l}\left(  x_{j}\right)
}\prod_{l=0}^{D-1}d\mathbb{P}\left(  \varphi_{l}\right)  \right\}  \prod
_{j=1}^{N}dx_{j}\nonumber\\
&  =\dfrac{1}{Z_{0}}%
{\displaystyle\int\limits_{B_{R}^{N}}}
\left\{  \prod_{l=0}^{D-1}\text{ }%
{\displaystyle\int\limits_{\mathcal{L}_{\mathbb{R}}\left(  \mathbb{Q}%
_{p}\right)  }}
e^{\sum_{j=1}^{N}k_{l,j}\varphi_{l}\left(  x_{j}\right)  }d\mathbb{P}\left(
\varphi_{l}\right)  \right\}  \prod_{j=1}^{N}dx_{j}. \label{eq:amplfinal}%
\end{align}
We now introduce the notation
\begin{equation}
\sum_{j=1}^{N}k_{l,j}\varphi_{l}\left(  x_{j}\right)  :=\sum_{j=1}^{N}%
v_{j}\varphi\left(  x_{j}\right)  \label{Notation}%
\end{equation}
taking advantage that $l$ is fixed. Here $v_{j}\in\mathbb{R}$ and $\varphi
\in\mathcal{L}_{\mathbb{R}}\left(  \mathbb{Q}_{p}\right)  $.

We set
\begin{equation}
\widetilde{\mathcal{A}}_{R}^{\left(  N\right)  }\left(  \boldsymbol{x}%
,\boldsymbol{v}\right)  :=\frac{1}{Z_{0}^{1/D}}%
{\displaystyle\int\limits_{\mathcal{L}_{\mathbb{R}}\left(  \mathbb{Q}%
_{p}\right)  }}
e^{\sum_{j=1}^{N}v_{j}\varphi\left(  x_{j}\right)  }d\mathbb{P}\left(
\varphi\right)  , \label{eq:aprox1}%
\end{equation}
where $\boldsymbol{x}=\left(  x_{1},\ldots,x_{N}\right)  \in\mathbb{Q}_{p}%
^{N}$, $\boldsymbol{v}=\left(  v_{1},\ldots,v_{N}\right)  \in\mathbb{R}^{N}$.
Notice that
\[
\sum_{j=1}^{N}v_{j}\varphi\left(  x_{j}\right)  =\sum_{j=1}^{N}v_{j}\left(
\delta\left(  x-x_{j}\right)  ,\varphi\left(  x\right)  \right)  \text{,}%
\]
where $\delta\left(  \cdot-x_{j}\right)  $ denotes the Dirac distribution
centered at $x_{j}$.

\begin{lemma}
\label{Lemma2}$\widetilde{\mathcal{A}}_{R}^{\left(  N\right)  }\left(
\boldsymbol{x},\boldsymbol{v}\right)  <\infty$ for any $R,N,\boldsymbol{x}%
,\boldsymbol{v}$. Furthermore, for $R,N,\boldsymbol{v}$ fixed, $\widetilde
{\mathcal{A}}_{R}^{\left(  N\right)  }\left(  \boldsymbol{x},\boldsymbol{v}%
\right)  $ is a continuous function in $\boldsymbol{x}$.
\end{lemma}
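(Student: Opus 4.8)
The plan is to read the inner integral in \eqref{eq:aprox1} as the Laplace transform of the centered Gaussian measure $\mathbb{P}$ evaluated at the linear functional $T_{\boldsymbol{x},\boldsymbol{v}}:=\sum_{j=1}^{N}v_{j}\,\delta\!\left(\cdot-x_{j}\right)$, using the identity $\sum_{j}v_{j}\varphi\left(x_{j}\right)=\left(\varphi,T_{\boldsymbol{x},\boldsymbol{v}}\right)$ recorded just before the statement. Under $\mathbb{P}$ the exponent $X:=\sum_{j}v_{j}\varphi\left(x_{j}\right)$ is a real centered Gaussian variable, so the whole finiteness assertion reduces to the single inequality $\mathrm{Var}\left(X\right)<\infty$, after which $\int e^{X}\,d\mathbb{P}=e^{\frac{1}{2}\mathrm{Var}\left(X\right)}$ by the elementary Gaussian moment formula. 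To justify this within the space of genuine functions, I would first rewrite each point evaluation as a pairing with a test function: by local constancy $\varphi\left(x_{j}\right)=\bigl(\varphi,\eta_{x_{j}}^{\,l}\bigr)$ with $\eta_{x_{j}}^{\,l}=p^{-l}\,\Omega\!\left(p^{l}\left\vert\cdot-x_{j}\right\vert_{p}\right)$ for $l$ below the constancy exponent of $\varphi$, and then obtain $\int e^{\left(\varphi,\eta\right)}\,d\mathbb{P}=e^{\frac{1}{2}\mathbb{B}\left(\eta,\eta\right)}$ from the characteristic functional \eqref{probmeasure} by the analytic continuation $f\mapsto-\sqrt{-1}f$, with $\mathbb{B}$ under control by Lemma \ref{Lemma1}.

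The computation of the variance is where I expect the real difficulty. By the definition of $\mathbb{B}$ and the identity $\widehat{f_{1}}=\left\vert\xi\right\vert_{p}^{-1}$ from \eqref{Eq_f1}, the covariance is governed by the kernel $f_{1}\left(x-y\right)=-\tfrac{p-1}{p\ln p}\ln\left\vert x-y\right\vert_{p}$, so that $\mathrm{Var}\left(X\right)=\sum_{i,j}v_{i}v_{j}\,\mathbb{B}\!\left(\delta_{x_{i}},\delta_{x_{j}}\right)$ splits into finite off-diagonal terms $f_{1}\left(x_{i}-x_{j}\right)$, $i\neq j$, and diagonal self-energy terms $\mathbb{B}\!\left(\delta_{x_{i}},\delta_{x_{i}}\right)$. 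The latter are the main obstacle: inserting literal Dirac masses yields $\int\left\vert\xi\right\vert_{p}^{-1}\,d\xi$, which diverges both at the origin and at infinity. Here the two structural features of the construction must be exploited, namely that the Lizorkin condition $\widehat{\theta}\left(0\right)=0$ built into $\mathcal{L}_{\mathbb{R}}\left(\mathbb{Q}_{p}\right)$ removes the infrared divergence, and that the integration runs over honest functions, for which a finite constancy scale $l$ keeps $\mathbb{B}\bigl(\eta_{x_{i}}^{\,l},\eta_{x_{i}}^{\,l}\bigr)$ finite and controls the ultraviolet behaviour; this is precisely the place where the insistence that the fields be functions, not distributions, is indispensable. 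Carrying the estimate out on the finite-dimensional spaces $\mathcal{D}_{m}^{l}\cap\mathcal{L}_{\mathbb{R}}$ and bounding $\mathrm{Var}\left(X\right)$ uniformly gives $\widetilde{\mathcal{A}}_{R}^{\left(N\right)}\left(\boldsymbol{x},\boldsymbol{v}\right)<\infty$.

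For the continuity in $\boldsymbol{x}$ with $R,N,\boldsymbol{v}$ fixed, I would use dominated convergence together with $p$-adic local constancy. Given $\boldsymbol{x}^{\left(n\right)}\to\boldsymbol{x}$ in $\mathbb{Q}_{p}^{N}$, for each fixed locally constant $\varphi$ and all $n$ large the point $x_{j}^{\left(n\right)}$ lies in the same ball of constancy as $x_{j}$, hence $\varphi\bigl(x_{j}^{\left(n\right)}\bigr)=\varphi\left(x_{j}\right)$ exactly; thus the integrand converges to its limit pointwise (indeed it is eventually constant) $\mathbb{P}$-almost everywhere, and no separate argument is needed at coincidences $x_{i}=x_{j}$. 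Keeping all $x_{j}^{\left(n\right)}$ inside a fixed compact ball $K$, the integrand is dominated uniformly in $n$ by $\exp\!\bigl(\left(\sum_{j}\left\vert v_{j}\right\vert\right)\sup_{x\in K}\left\vert\varphi\left(x\right)\right\vert\bigr)$, whose $\mathbb{P}$-integrability is exactly the exponential-moment bound furnished by the variance estimate of the previous paragraph. The dominated convergence theorem then yields the continuity of $\widetilde{\mathcal{A}}_{R}^{\left(N\right)}\left(\boldsymbol{x},\boldsymbol{v}\right)$ in $\boldsymbol{x}$.
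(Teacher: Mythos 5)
Your strategy diverges from the paper's, and it contains a gap that you come close to identifying but do not close. You propose to evaluate the inner integral by the Gaussian moment formula $\int e^{X}\,d\mathbb{P}=e^{\frac{1}{2}\mathrm{Var}(X)}$ with $X(\varphi)=\sum_{j}v_{j}\varphi(x_{j})$. That formula is not available here, for two reasons. First, the integral in (\ref{eq:aprox1}) is taken over the subset $\mathcal{L}_{\mathbb{R}}\left(\mathbb{Q}_{p}\right)$ of $\mathcal{L}_{\mathbb{R}}^{\prime}\left(\mathbb{Q}_{p}\right)$ (this is the role of the factor $1_{\mathcal{L}_{\mathbb{R}}\left(\mathbb{Q}_{p}\right)}$ in (\ref{Eq_Measure})); the identity $\int e^{\left(W,f\right)}d\mathbb{P}\left(W\right)=e^{\frac{1}{2}\mathbb{B}\left(f,f\right)}$ obtained from (\ref{probmeasure}) by analytic continuation is an identity for the integral over the \emph{whole} space $\mathcal{L}_{\mathbb{R}}^{\prime}\left(\mathbb{Q}_{p}\right)$, not over a measurable subset, so even for a legitimate test function $f$ it does not compute the quantity in question. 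Second, and more seriously, the evaluation functional has infinite variance, and your proposed regularization does not cure this: with $\eta_{x_{i}}^{\,l}$ the normalized indicator of the ball of radius $p^{-l}$ around $x_{i}$ one has $\widehat{\eta_{x_{i}}^{\,l}}\left(\xi\right)=\chi_{p}\left(\xi x_{i}\right)\Omega\left(p^{-l}\left\vert\xi\right\vert_{p}\right)$, hence $\mathbb{B}\bigl(\eta_{x_{i}}^{\,l},\eta_{x_{i}}^{\,l}\bigr)=\int_{\left\vert\xi\right\vert_{p}\leq p^{l}}\left\vert\xi\right\vert_{p}^{-1}d\xi=\infty$ for \emph{every} $l$, because the infrared divergence persists: the Lizorkin condition is a condition on $\varphi$, not on the representing function $\eta_{x_{i}}^{\,l}$, which satisfies $\widehat{\eta_{x_{i}}^{\,l}}\left(0\right)=1\neq0$. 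Moreover the constancy exponent $l=l\left(\varphi\right)$ varies with $\varphi$, so $\varphi\mapsto\bigl(\varphi,\eta^{\,l\left(\varphi\right)}\bigr)$ is not a fixed linear functional and is not a Gaussian variable; restricting to the slices $\mathcal{D}_{m}^{l}\cap\mathcal{L}_{\mathbb{R}}$ does not help, since the Gaussian formula does not hold on a slice and no uniform variance bound exists. Your continuity argument inherits the same defect in its domination step: the integrability of $\exp\bigl(\bigl(\sum_{j}\left\vert v_{j}\right\vert\bigr)\sup_{x\in K}\left\vert\varphi\left(x\right)\right\vert\bigr)$ concerns a nonlinear functional of the field and is not furnished by any variance estimate for the linear combination $X$.

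The paper avoids all of this by never computing the integral: it only dominates the integrand. One fixes a single $\theta\in\mathcal{L}_{\mathbb{R}}\left(\mathbb{Q}_{p}\right)$ with $\theta\left(x_{j}\right)>1$, bounds $\sum_{j}v_{j}\varphi\left(x_{j}\right)\leq\sum_{j}\left\vert v_{j}\right\vert\left\vert\varphi\left(x_{j}\right)\right\vert\theta\left(x_{j}\right)=\bigl(\sum_{j}\left\vert v_{j}\right\vert\left\vert\varphi\left(x_{j}\right)\right\vert\delta\left(\cdot-x_{j}\right),\theta\bigr)$, and invokes the exponential integrability $\int_{\mathcal{L}_{\mathbb{R}}^{\prime}\left(\mathbb{Q}_{p}\right)}e^{\left(W,\theta\right)}d\mathbb{P}\left(W\right)<\infty$ from white noise analysis, cf. \cite[Theorem 1.7]{hidawhitenoise}; the same dominating function then feeds the dominated convergence theorem for the continuity in $\boldsymbol{x}$. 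To repair your argument you would have to replace the exact Gaussian computation by a domination of this kind.
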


\begin{proof}
We first recall that
\begin{equation}
{\int\limits_{\mathcal{L}_{\mathbb{R}}^{\prime}\left(  \mathbb{Q}_{p}\right)
}}e^{\left(  W,\theta\right)  }d\mathbb{P}\left(  W\right)  <\infty
\label{Eq_White_Noise}%
\end{equation}
for any $\theta\in\mathcal{L}_{\mathbb{R}}\left(  \mathbb{Q}_{p}\right)  $,
here $\left(  W,\theta\right)  $ denotes the pairing between the space of
distributions $\mathcal{L}_{\mathbb{R}}^{\prime}\left(  \mathbb{Q}_{p}\right)
$ and the space of Lizorkin test functions $\mathcal{L}_{\mathbb{R}}\left(
\mathbb{Q}_{p}\right)  $, cf. \cite[Theorem 1.7]{hidawhitenoise}.

By using that $\sum_{j=1}^{N}$ $\left\vert v_{j}\right\vert \left\vert
\varphi\left(  x_{j}\right)  \right\vert \delta\left(  x-x_{j}\right)
\in\mathcal{L}_{\mathbb{R}}^{\prime}\left(  \mathbb{Q}_{p}\right)  $, for any
$\varphi\in\mathcal{L}_{\mathbb{R}}\left(  \mathbb{Q}_{p}\right)  $, and by
fixing $\theta\in\mathcal{L}_{\mathbb{R}}\left(  \mathbb{Q}_{p}\right)  $ such
that $\theta\left(  x_{j}\right)  >1$ for $j=1,\ldots,N$, we have
\begin{gather*}
\sum_{j=1}^{N}v_{j}\varphi\left(  x_{j}\right)  \leq\sum_{j=1}^{N}\left\vert
v_{j}\right\vert \left\vert \varphi\left(  x_{j}\right)  \right\vert \leq
\sum_{j=1}^{N}\left\vert v_{j}\right\vert \left\vert \varphi\left(
x_{j}\right)  \right\vert \theta\left(  x_{j}\right) \\
=\left(  \sum_{j=1}^{N}\left\vert v_{j}\right\vert \left\vert \varphi\left(
x_{j}\right)  \right\vert \delta\left(  x-x_{j}\right)  ,\theta\left(
x\right)  \right)  ,
\end{gather*}
and thus
\begin{multline*}
{\int\limits_{\mathcal{L}_{\mathbb{R}}\left(  \mathbb{Q}_{p}\right)  }}%
e^{\sum_{j=1}^{N}v_{j}\varphi\left(  x_{j}\right)  }d\mathbb{P}\left(
\varphi\right)  \leq{\int\limits_{\mathcal{L}_{\mathbb{R}}\left(
\mathbb{Q}_{p}\right)  }}e^{\sum_{j=1}^{N}\left\vert v_{j}\right\vert
\left\vert \varphi\left(  x_{j}\right)  \right\vert }d\mathbb{P}\left(
\varphi\right) \\
\leq{\int\limits_{\mathcal{L}_{\mathbb{R}}\left(  \mathbb{Q}_{p}\right)  }%
e}^{\left(  \sum_{j=1}^{N}\left\vert v_{j}\right\vert \left\vert
\varphi\left(  x_{j}\right)  \right\vert \delta\left(  x-x_{j}\right)
,\theta\left(  x\right)  \right)  }d\mathbb{P}\left(  \varphi\right)
\leq{\int\limits_{\mathcal{L}_{\mathbb{R}}^{\prime}\left(  \mathbb{Q}%
_{p}\right)  }}e^{\left(  W,\theta\right)  }d\mathbb{P}\left(  W\right)
<\infty
\end{multline*}

Finally, the continuity in $\boldsymbol{x}$ follows from the dominated
convergence theorem by using that%
\[%
{\displaystyle\int\limits_{\mathcal{L}_{\mathbb{R}}^{\prime}\left(
\mathbb{Q}_{p}\right)  }}
1_{\mathcal{L}_{\mathbb{R}}\left(  \mathbb{Q}_{p}\right)  }\left(
\varphi\right)  e^{\sum_{j=1}^{N}v_{j}\varphi\left(  x_{j}\right)
}d\mathbb{P}\left(  \varphi\right)  \leq{\int\limits_{\mathcal{L}_{\mathbb{R}%
}^{\prime}\left(  \mathbb{Q}_{p}\right)  }}e^{\left(  W,\theta\right)
}d\mathbb{P}\left(  W\right)  .
\]

\end{proof}

\begin{corollary}
\label{Cor1}For $R$ fixed, $\mathcal{A}_{R}^{\left(  N\right)  }\left(
\boldsymbol{k}\right)  <\infty$ for any $\boldsymbol{k}$. Furthermore,
\begin{align*}
\mathcal{A}_{R}^{\left(  N\right)  }\left(  \boldsymbol{k}\right)   &
=\dfrac{1}{Z_{0}}%
{\displaystyle\int\limits_{B_{R}^{N}}}
\left\{  \text{ }%
{\displaystyle\int\limits_{\mathcal{L}_{\mathbb{R}}^{D}\left(  \mathbb{Q}%
_{p}\right)  }}
e^{\sum_{j=1}^{N}\boldsymbol{k}_{j}\cdot\boldsymbol{\varphi}\left(
x_{j}\right)  }d\mathbb{P}_{D}\left(  \boldsymbol{\varphi}\right)  \right\}
\prod_{j=1}^{N}dx_{j}\\
&  =\dfrac{1}{Z_{0}}%
{\displaystyle\int\limits_{\mathcal{L}_{\mathbb{R}}^{D}\left(  \mathbb{Q}%
_{p}\right)  }}
\left\{  \text{ }%
{\displaystyle\int\limits_{B_{R}^{N}}}
e^{\sum_{j=1}^{N}\boldsymbol{k}_{j}\cdot\boldsymbol{\varphi}\left(
x_{j}\right)  }\prod_{j=1}^{N}dx_{j}\right\}  d\mathbb{P}_{D}\left(
\boldsymbol{\varphi}\right)  \text{.}%
\end{align*}

\end{corollary}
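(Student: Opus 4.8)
The statement combines a finiteness claim with a Fubini-type interchange, so the plan is to first control the inner functional integral as a function of $\boldsymbol{x}$ and then justify swapping the $\boldsymbol{x}$-integration with the integration over $\mathcal{L}_{\mathbb{R}}^{D}\left(\mathbb{Q}_{p}\right)$ by a Tonelli argument. The nonnegativity of the integrand $e^{\sum_{j}\boldsymbol{k}_{j}\cdot\boldsymbol{\varphi}\left(x_{j}\right)}$ is what makes Tonelli rather than the full Fubini theorem the natural tool.

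For the finiteness I would use the factorization \eqref{eq:amplfinal}: the inner integral over $\mathcal{L}_{\mathbb{R}}^{D}\left(\mathbb{Q}_{p}\right)$ splits as a product over $l=0,\ldots,D-1$ of one-dimensional integrals $\int_{\mathcal{L}_{\mathbb{R}}\left(\mathbb{Q}_{p}\right)}e^{\sum_{j}k_{l,j}\varphi\left(x_{j}\right)}d\mathbb{P}\left(\varphi\right)$, each of which equals $Z_{0}^{1/D}\,\widetilde{\mathcal{A}}_{R}^{\left(N\right)}\left(\boldsymbol{x},\boldsymbol{v}\right)$ with $\boldsymbol{v}=\left(k_{l,1},\ldots,k_{l,N}\right)$ in the notation of \eqref{eq:aprox1}. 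By Lemma \ref{Lemma2} every factor is finite and continuous in $\boldsymbol{x}$, so their product $g\left(\boldsymbol{x}\right)$ is a continuous function on $B_{R}^{N}$. Since $B_{R}^{N}$ is compact, $g$ is bounded there and hence $\int_{B_{R}^{N}}g\left(\boldsymbol{x}\right)\prod_{j}dx_{j}<\infty$, which is exactly $\mathcal{A}_{R}^{\left(N\right)}\left(\boldsymbol{k}\right)<\infty$.

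For the interchange I would apply Tonelli's theorem on the product space $\left(B_{R}^{N}\times\mathcal{L}_{\mathbb{R}}^{D}\left(\mathbb{Q}_{p}\right),\,\prod_{j}dx_{j}\otimes d\mathbb{P}_{D}\right)$. Both factors are finite measure spaces, since the Haar measure of the compact ball $B_{R}^{N}$ is finite and $\mathbb{P}_{D}$ is a probability measure, hence the product is $\sigma$-finite; the integrand is everywhere nonnegative. Therefore, once joint measurability is established, Tonelli guarantees that the two iterated integrals coincide, both equalling the double integral in $\left[0,+\infty\right]$, and by the previous paragraph this common value is finite. This yields precisely the asserted identity.

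The main obstacle is the joint measurability of $\left(\boldsymbol{x},\boldsymbol{\varphi}\right)\mapsto 1_{\mathcal{L}_{\mathbb{R}}^{D}\left(\mathbb{Q}_{p}\right)}\left(\boldsymbol{\varphi}\right)e^{\sum_{j}\boldsymbol{k}_{j}\cdot\boldsymbol{\varphi}\left(x_{j}\right)}$, which is delicate because $\mathbb{P}_{D}$ is supported on the distribution space $\left(\mathcal{L}_{\mathbb{R}}^{\prime}\left(\mathbb{Q}_{p}\right)\right)^{D}$ while the pointwise evaluation $\boldsymbol{\varphi}\left(x_{j}\right)$ only makes sense on the function-valued realizations singled out by the indicator. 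I would dispose of this by a Carath\'{e}odory-type argument: for fixed $\boldsymbol{\varphi}$ the map $\boldsymbol{x}\mapsto\boldsymbol{\varphi}\left(x_{j}\right)$ is continuous (indeed locally constant, the components being test functions), and for fixed $\boldsymbol{x}$ the map $\boldsymbol{\varphi}\mapsto\boldsymbol{\varphi}\left(x_{j}\right)$ is $\mathcal{B}^{D}$-measurable; the latter is the genuinely delicate point and I would obtain it by writing the evaluation as a limit of pairings $\left(\boldsymbol{\varphi},\theta_{m}\right)$ against mean-zero test functions $\theta_{m}\in\mathcal{L}_{\mathbb{R}}\left(\mathbb{Q}_{p}\right)$ approximating $\delta_{x_{j}}$ (relative to a fixed reference point, so as to respect the condition $\widehat{\theta}\left(0\right)=0$), each of which is measurable by the very construction of $\mathbb{P}$. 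Separate continuity in $\boldsymbol{x}$ on the separable metric space $\mathbb{Q}_{p}^{N}$ together with measurability in $\boldsymbol{\varphi}$ then yields joint measurability, with the measurability of the restricting set $\mathcal{L}_{\mathbb{R}}^{D}\left(\mathbb{Q}_{p}\right)$ supplied by the white-noise framework of \cite{hidawhitenoise,Arroyo-Zuniga}; once this is in hand, Tonelli closes the argument.
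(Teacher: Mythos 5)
Your proof is correct and follows essentially the same route as the paper: the factorization (\ref{eq:amplfinal}) together with Lemma \ref{Lemma2} gives finiteness and continuity of the inner integral on the compact ball $B_{R}^{N}$, and the interchange is then an application of Fubini/Tonelli. You go further than the paper in spelling out the joint measurability of $\left(\boldsymbol{x},\boldsymbol{\varphi}\right)\mapsto e^{\sum_{j}\boldsymbol{k}_{j}\cdot\boldsymbol{\varphi}\left(x_{j}\right)}$ (via a Carath\'eodory-type argument and approximation of the evaluation map by pairings), a point the paper's one-line appeal to Fubini leaves implicit.
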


\begin{proof}
By Lemma \ref{Lemma2}, for $R,N,\boldsymbol{k}$ given,
\[
\boldsymbol{x\rightarrow}\prod_{l=0}^{D-1}%
{\displaystyle\int\limits_{\mathcal{L}_{\mathbb{R}}\left(  \mathbb{Q}%
_{p}\right)  }}
e^{\sum_{j=1}^{N}k_{l,j}\varphi_{l}\left(  x_{j}\right)  }d\mathbb{P}\left(
\varphi_{l}\right)  =%
{\displaystyle\int\limits_{\mathcal{L}_{\mathbb{R}}^{D}\left(  \mathbb{Q}%
_{p}\right)  }}
e^{\sum_{j=1}^{N}\sum_{l=0}^{D-1}k_{l,j}\varphi_{l}\left(  x_{j}\right)
}\prod_{l=0}^{D-1}d\mathbb{P}\left(  \varphi_{l}\right)  <\infty
\]
is a well-defined and continuous function. Now, the announced formula is a
consequence of Fubini's theorem.
\end{proof}

\subsection{Some technical results}

We set
\[
\delta_{n}\left(  x\right)  =%
\begin{cases}
p^{n} & \left\vert x\right\vert _{p}\leq p^{-n}\\
0 & \left\vert x\right\vert _{p}>p^{-n}\text{,}%
\end{cases}
\]
for a positive integer $n$, and recall that $\delta_{n}\left(  x\right)  $
$\underrightarrow{\mathcal{D}^{\prime}\left(  \mathbb{Q}_{p}\right)  }$
$\delta\left(  x\right)  $, the Dirac distribution, as $n\rightarrow\infty$.
We now introduce an approximation for $\widetilde{\mathcal{A}}_{R}^{\left(
N\right)  }\left(  \boldsymbol{x},\boldsymbol{v}\right)  $ given by
\begin{equation}
\widetilde{\mathcal{A}}_{R}^{\left(  N\right)  }\left(  \boldsymbol{x}%
,\boldsymbol{v};I\right)  :=\frac{1}{Z_{0}^{1/D}}%
{\displaystyle\int\limits_{\mathcal{L}_{\mathbb{R}}\left(  \mathbb{Q}%
_{p}\right)  }}
e^{\sum_{j=1}^{N}v_{j}\left(  \delta_{I}\left(  x-x_{j}\right)  ,\varphi
\left(  x\right)  \right)  }d\mathbb{P}\left(  \varphi\right)  ,
\label{eq:aprox2}%
\end{equation}
where $I$ is a positive integer.

\begin{lemma}
\label{Lemma3}%
\[
\lim_{I\rightarrow\infty}\text{{}}\widetilde{\mathcal{A}}_{R}^{\left(
N\right)  }\left(  \boldsymbol{x},\boldsymbol{v};I\right)  =\widetilde
{\mathcal{A}}_{R}^{\left(  N\right)  }\left(  \boldsymbol{x},\boldsymbol{v}%
\right)  \text{.}%
\]

\end{lemma}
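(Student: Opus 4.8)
The plan is to prove the identity by the dominated convergence theorem, interchanging $\lim_{I\rightarrow\infty}$ with the integral over $\mathcal{L}_{\mathbb{R}}\left(\mathbb{Q}_{p}\right)$ against $d\mathbb{P}$ (the common positive factor $Z_{0}^{-1/D}$ plays no role). First I would settle the pointwise convergence of the integrands. Fix a function $\varphi\in\mathcal{L}_{\mathbb{R}}\left(\mathbb{Q}_{p}\right)$; as a Bruhat--Schwartz function it is locally constant with some exponent of local constancy $l=l\left(\varphi\right)$. Since $\delta_{I}\left(\cdot-x_{j}\right)=p^{I}1_{B_{-I}\left(x_{j}\right)}$ is the normalized average over the ball $B_{-I}\left(x_{j}\right)$ of radius $p^{-I}$, for every $I\geq-l$ this ball is contained in a ball on which $\varphi$ is constant, whence
\[
\left(\delta_{I}\left(\cdot-x_{j}\right),\varphi\right)=p^{I}\int_{B_{-I}\left(x_{j}\right)}\varphi\left(x\right)dx=\varphi\left(x_{j}\right).
\]
Thus for each fixed $\varphi$ the integrand $e^{\sum_{j=1}^{N}v_{j}\left(\delta_{I}\left(\cdot-x_{j}\right),\varphi\right)}$ is \emph{eventually equal} to $e^{\sum_{j=1}^{N}v_{j}\varphi\left(x_{j}\right)}$, and in particular it converges pointwise on $\mathcal{L}_{\mathbb{R}}\left(\mathbb{Q}_{p}\right)$ to the integrand defining $\widetilde{\mathcal{A}}_{R}^{\left(N\right)}\left(\boldsymbol{x},\boldsymbol{v}\right)$.

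Next I would construct an $I$-independent, $\mathbb{P}$-integrable majorant, reusing the device from the proof of Lemma \ref{Lemma2}. Fix a scale $I_{0}$ and a function $\theta\in\mathcal{L}_{\mathbb{R}}\left(\mathbb{Q}_{p}\right)$ with $\theta\geq1$ on $\bigcup_{j=1}^{N}B_{-I_{0}}\left(x_{j}\right)$, and restrict to $I\geq I_{0}$. Because $\delta_{I}\left(\cdot-x_{j}\right)$ is a probability density supported in $B_{-I}\left(x_{j}\right)\subseteq B_{-I_{0}}\left(x_{j}\right)$, the averages are dominated by the supremum of $\left\vert\varphi\right\vert$ on the fixed balls $B_{-I_{0}}\left(x_{j}\right)$:
\[
\sum_{j=1}^{N}v_{j}\left(\delta_{I}\left(\cdot-x_{j}\right),\varphi\right)\leq\sum_{j=1}^{N}\left\vert v_{j}\right\vert p^{I}\int_{B_{-I}\left(x_{j}\right)}\left\vert \varphi\left(x\right)\right\vert dx\leq\sum_{j=1}^{N}\left\vert v_{j}\right\vert \sup_{x\in B_{-I_{0}}\left(x_{j}\right)}\left\vert \varphi\left(x\right)\right\vert ,
\]
a bound that no longer depends on $I$. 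It then remains to verify that the exponential of the right-hand side is $\mathbb{P}$-integrable over $\mathcal{L}_{\mathbb{R}}\left(\mathbb{Q}_{p}\right)$; following Lemma \ref{Lemma2}, I would encode these local suprema as a pairing against a fixed element constructed from $\theta$ and dominate by the finite Gaussian integral $\int_{\mathcal{L}_{\mathbb{R}}^{\prime}\left(\mathbb{Q}_{p}\right)}e^{\left(W,\theta\right)}d\mathbb{P}\left(W\right)<\infty$ supplied by (\ref{Eq_White_Noise}). Dominated convergence then yields the stated limit.

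The step I expect to be the main obstacle is exactly this uniform domination, and it is here that the factor $1_{\mathcal{L}_{\mathbb{R}}\left(\mathbb{Q}_{p}\right)}$ becomes indispensable. Indeed $\delta_{I}\left(\cdot-x_{j}\right)\notin\mathcal{L}_{\mathbb{R}}\left(\mathbb{Q}_{p}\right)$, since $\widehat{\delta_{I}}=1_{B_{I}}$ does not vanish at the origin; consequently the linear functional $W\mapsto\left(W,\delta_{I}\left(\cdot-x_{j}\right)\right)$ is not defined on all of $\mathcal{L}_{\mathbb{R}}^{\prime}\left(\mathbb{Q}_{p}\right)$, and the naive evaluation of the Gaussian integral through the characteristic functional (\ref{probmeasure}) is unavailable. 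One is thus forced to work on genuine functions, where the pairing is the honest integral $\int_{\mathbb{Q}_{p}}\varphi\left(x\right)\delta_{I}\left(x-x_{j}\right)dx$, and to control the transient averages at the scales $I_{0}\leq I<-l\left(\varphi\right)$ by the fixed-scale suprema above rather than by point values as in Lemma \ref{Lemma2}. Establishing the exponential $\mathbb{P}$-integrability of $\sum_{j=1}^{N}\left\vert v_{j}\right\vert \sup_{x\in B_{-I_{0}}\left(x_{j}\right)}\left\vert \varphi\left(x\right)\right\vert $ over the function space, uniformly in $I$, is therefore the crux; once this majorant is secured, the pointwise convergence established above and the concluding application of dominated convergence are routine.
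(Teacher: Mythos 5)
Your proposal is correct and follows essentially the same route as the paper: dominated convergence, pointwise convergence of the integrands coming from the local constancy of $\varphi$, and an $I$-independent majorant ultimately controlled by the Gaussian integral (\ref{Eq_White_Noise}). The one step you flag as the crux and leave open --- converting the fixed-scale local suprema into a pairing $\left(W,\theta\right)$ --- is precisely what the paper carries out by decomposing $x_{j}+p^{I}\mathbb{Z}_{p}$ into cosets of $p^{I_{\varphi}}\mathbb{Z}_{p}$ with $I_{\varphi}=\max\left\{ I,l_{\varphi}\right\}$, so that the average is bounded by the finite sum $\sum_{\widetilde{x}\in G_{j}}\left\vert \varphi\left(\widetilde{x}\right)\right\vert \theta\left(\widetilde{x}\right)$ for a fixed $\theta\geq1$ at the representatives.
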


\begin{proof}
The proof is similar to the one given for Lemma \ref{Lemma2}. The result
follows from%
\[
1_{\mathcal{L}_{\mathbb{R}}\left(  \mathbb{Q}_{p}\right)  }\left(
\varphi\right)  e^{\sum_{j=1}^{N}v_{j}\left(  \delta_{I}\left(  z-x_{j}%
\right)  ,\varphi\left(  x\right)  \right)  }\leq1_{\mathcal{L}_{\mathbb{R}%
}\left(  \mathbb{Q}_{p}\right)  }\left(  \varphi\right)  e^{\left(
W,\theta\right)  },
\]
where $W\in\mathcal{L}_{\mathbb{R}}^{\prime}\left(  \mathbb{Q}_{p}\right)  $
is distribution depending on $x_{j}$, $v_{j}$, for $\ j=1,\ldots,N$, but \ not
on $I$, and where $\theta\in\mathcal{L}_{\mathbb{R}}\left(  \mathbb{Q}%
_{p}\right)  $ is a fixed \ positive function. Now, by using
(\ref{Eq_White_Noise}) and the dominated convergence theorem, we have%

\[
\left\vert \sum_{j=1}^{N}v_{j}\left(  \delta_{I}\left(  z-x_{j}\right)
,\varphi\left(  z\right)  \right)  \right\vert =\left\vert p^{I}\sum_{j=1}%
^{N}v_{j}%
{\displaystyle\int\limits_{x_{j}+p^{I}\mathbb{Z}_{p}}}
\varphi\left(  y\right)  dy\right\vert \leq p^{I}\sum_{j=1}^{N}\left\vert
v_{j}\right\vert
{\displaystyle\int\limits_{x_{j}+p^{I}\mathbb{Z}_{p}}}
\left\vert \varphi\left(  y\right)  \right\vert dy.
\]
We denote by $l_{\varphi}$ index of local constancy of $\varphi$. We pick
$I_{\varphi}=\max\left\{  I,l_{\varphi}\right\}  $, then $p^{I_{\varphi}%
}\mathbb{Z}_{p}$ is a subgroup of $p^{I}\mathbb{Z}_{p}$ and
\[
G_{j}:=\left(  x_{j}+p^{I}\mathbb{Z}_{p}\right)  /p^{I_{\varphi}}%
\mathbb{Z}_{p}%
\]
is a finite set such that $x_{j}+p^{I}\mathbb{Z}_{p}=%
{\textstyle\bigsqcup\nolimits_{\widetilde{x}\in G_{j}}}
\left(  \widetilde{x}+p^{I_{\varphi}}\mathbb{Z}_{p}\right)  $ (disjoint
union). We pick a function $\theta\in\mathcal{L}_{\mathbb{R}}\left(
\mathbb{Q}_{p}\right)  $ satisfying $\theta\left(  \widetilde{x}\right)
\geq1$ for $\widetilde{x}\in G_{j}$, now
\begin{gather*}
p^{I}\sum_{j=1}^{N}\left\vert v_{j}\right\vert
{\displaystyle\int\limits_{x_{j}+p^{I}\mathbb{Z}_{p}}}
\left\vert \varphi\left(  y\right)  \right\vert dy=p^{I}\sum_{j=1}^{N}%
{\displaystyle\sum\limits_{\widetilde{x}\in G_{j}}}
\left\vert v_{j}\right\vert
{\displaystyle\int\limits_{\widetilde{x}+p^{I_{\varphi}}\mathbb{Z}_{p}}}
\left\vert \varphi\left(  y\right)  \right\vert dy\\
=p^{I-I_{\varphi}}\sum_{j=1}^{N}%
{\displaystyle\sum\limits_{\widetilde{x}\in G_{j}}}
\left\vert v_{j}\right\vert \left\vert \varphi\left(  \widetilde{x}\right)
\right\vert \leq\sum_{j=1}^{N}%
{\displaystyle\sum\limits_{\widetilde{x}\in G_{j}}}
\left\vert v_{j}\right\vert \left\vert \varphi\left(  \widetilde{x}\right)
\right\vert \\
\leq\sum_{j=1}^{N}%
{\displaystyle\sum\limits_{\widetilde{x}\in G_{j}}}
\left\vert v_{j}\right\vert \left\vert \varphi\left(  \widetilde{x}\right)
\right\vert \theta\left(  \widetilde{x}\right)  =\sum_{j=1}^{N}%
{\displaystyle\sum\limits_{\widetilde{x}\in G_{j}}}
\left\vert v_{j}\right\vert \left(  \left\vert \varphi\left(  \widetilde
{x}\right)  \right\vert \delta\left(  z-\widetilde{x}\right)  ,\theta\left(
z\right)  \right)  .
\end{gather*}

\end{proof}

\subsubsection{A change of variables}

Let $\varphi_{L,m}$, $\widetilde{\varphi}$ be functions in $\mathcal{L}%
_{\mathbb{R}}\left(  \mathbb{Q}_{p}\right)  $, for $L\geq1$ and $m\in
\mathbb{Q}_{p}^{\times}$. We now use the measurable mapping
\[%
\begin{array}
[c]{ccc}%
\mathcal{L}_{\mathbb{R}}\left(  \mathbb{Q}_{p}\right)  & \rightarrow &
\mathcal{L}_{\mathbb{R}}\left(  \mathbb{Q}_{p}\right) \\
&  & \\
\widetilde{\varphi}-\varphi_{L,m} & \rightarrow & \varphi\text{,}%
\end{array}
\]
as a change of variables in (\ref{eq:aprox2}). There exist a measure
$\widetilde{\mathbb{P}}_{L,m}$ such that
\begin{align}
\widetilde{\mathcal{A}}_{R}^{\left(  N\right)  }\left(  \boldsymbol{x}%
,\boldsymbol{v};I\right)   &  =\dfrac{1}{Z_{0}^{1/D}}%
{\displaystyle\int\limits_{\mathcal{L}_{\mathbb{R}}\left(  \mathbb{Q}%
_{p}\right)  }}
e^{\sum_{j=1}^{N}v_{j}\left(  \delta_{I}\left(  x-x_{j}\right)  ,\widetilde
{\varphi}-\varphi_{L,m}\right)  }d\widetilde{\mathbb{P}}_{L,m}\left(
\widetilde{\varphi}\right) \nonumber\\
&  =\dfrac{1}{Z_{0}^{1/D}}\text{ }e^{\sum_{j=1}^{N}v_{j}\left(  \delta
_{I}\left(  x-x_{j}\right)  ,-\varphi_{L,m}\right)  }%
{\displaystyle\int\limits_{\mathcal{L}_{\mathbb{R}}\left(  \mathbb{Q}%
_{p}\right)  }}
e^{\sum_{j=1}^{N}v_{j}\left(  \delta_{I}\left(  x-x_{j}\right)  ,\widetilde
{\varphi}\right)  }d\widetilde{\mathbb{P}}_{L,m}\left(  \widetilde{\varphi
}\right)  . \label{eq:change of variab}%
\end{align}

Our next goal is to compute the limits $\left\vert m\right\vert _{p}%
\rightarrow\infty$, $L\rightarrow\infty$, $I\rightarrow\infty$ in
(\ref{eq:change of variab}) to obtain a formula for $\mathcal{A}_{R}^{\left(
N\right)  }\left(  \boldsymbol{k}\right)  $. This calculation is carried out
in two steps.

\subsubsection{Calculation of the first limit}

Define for $L\geq1$ and $m\in\mathbb{Q}_{p}^{\times}$,
\[
J_{L,m}\left(  x\right)  =\sum_{j=1}^{N}v_{j}\delta_{L}\left(  x-x_{j}\right)
-\sum_{j=1}^{N}v_{j}\left\vert m\right\vert _{p}^{-1}\Omega\left(
\frac{\left\vert x\right\vert _{p}}{\left\vert m\right\vert _{p}}\right)
\ast\delta_{L}\left(  x-x_{j}\right)  \text{.}%
\]

\begin{lemma}
\label{lemmaJ_l,m} With the above notation, the following holds true:

\noindent(i) $J_{L ,m} \left(  x\right)  \in\mathcal{L}_{\mathbb{R}} \left(
\mathbb{Q}_{p}\right)  $ for any $L \geq1\text{,}$ $m \in\mathbb{Q}_{p}^{
\times}$;

\noindent(ii) $J_{L,m}\left(  x\right)  \rightarrow J_{L}\left(  x\right)
:=\sum_{j=1}^{N}v_{j}\delta_{L}\left(  x-x_{j}\right)  $ in $L^{\rho}\left(
\mathbb{Q}_{p}\right)  $,$\,1<\rho<\infty$ as $\left\vert m\right\vert
_{p}\rightarrow\infty$;

\noindent(iii) $J_{L,m}\left(  x\right)  \rightarrow J_{L}\left(  x\right)  $
in $\mathcal{L}_{\mathbb{R}}^{\prime}\left(  \mathbb{Q}_{p}\right)  $ as
$\left\vert m\right\vert _{p}\rightarrow\infty$;

\noindent(iv) The equation $\boldsymbol{D} \varphi_{L ,m} =J_{L ,m}$ has a
unique solution $\varphi_{L ,m} \in\mathcal{L}_{\mathbb{R}} \left(
\mathbb{Q}_{p}\right)  $ given by $\varphi_{L ,m} =f_{1} \ast J_{L ,m}$, where
$f_{1}$ is defined in (\ref{Eq_f1});

\noindent(v) $\varphi_{L,m}\rightarrow f_{1}\ast J_{L}$ in $\mathcal{L}%
_{\mathbb{R}}^{\prime}\left(  \mathbb{Q}_{p}\right)  $ as $\left\vert
m\right\vert _{p}\rightarrow\infty$;

\noindent(vi) $f_{1}\ast J_{L}=\frac{1-p}{p\ln p}\sum_{j=1}^{N}v_{j}%
\ln\left\vert x-x_{j}\right\vert _{p}$, if $\left\vert x-x_{j}\right\vert
_{p}>p^{-L}$ for $j=1,\ldots,N$.
\end{lemma}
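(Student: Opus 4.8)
The plan is to establish the six assertions in the stated order, since each is used in the next. For (i) I would first note that both $\delta_L(\cdot-x_j)$ and $|m|_p^{-1}\Omega(|\cdot|_p/|m|_p)$ are normalized characteristic functions of balls (the second supported on the ball of radius $|m|_p$, of volume $|m|_p$), hence Bruhat--Schwartz functions; since a convolution of two test functions is again a test function, $J_{L,m}$ is a finite real linear combination of test functions, so $J_{L,m}\in\mathcal{D}_{\mathbb{R}}(\mathbb{Q}_p)$. To reach the Lizorkin space it then remains to verify $\widehat{J}_{L,m}(0)=0$, i.e. $\int_{\mathbb{Q}_p}J_{L,m}\,dx=0$; this holds because $\int\delta_L(\cdot-x_j)\,dx=1$, the integral of a convolution is the product of the integrals, so each of the two sums equals $\sum_j v_j$ and they cancel. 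For (ii), write $g_m:=|m|_p^{-1}\Omega(|\cdot|_p/|m|_p)$; a direct volume computation gives $\|g_m\|_\rho=|m|_p^{1/\rho-1}$, and Young's inequality bounds $J_{L,m}-J_L=-\sum_j v_j\,g_m*\delta_L(\cdot-x_j)$ via $\|g_m*\delta_L(\cdot-x_j)\|_\rho\le\|g_m\|_\rho\|\delta_L\|_1=\|g_m\|_\rho$. The hypothesis $\rho>1$ is precisely what makes the exponent $1/\rho-1$ negative, so $\|g_m\|_\rho\to0$ as $|m|_p\to\infty$; note the case $\rho=1$ genuinely fails, since $\|g_m\|_1=1$. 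Assertion (iii) is then immediate: for $\theta\in\mathcal{L}_{\mathbb{R}}(\mathbb{Q}_p)$, Hölder's inequality with the conjugate exponent $\rho'$ gives $|(J_{L,m}-J_L,\theta)|\le\|J_{L,m}-J_L\|_\rho\|\theta\|_{\rho'}\to0$, because a test function lies in every $L^{\rho'}$.

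Part (iv) follows directly from the theory of $\boldsymbol{D}^{-1}$ recalled above: by (i), $J_{L,m}\in\mathcal{L}_{\mathbb{R}}(\mathbb{Q}_p)$, so $\boldsymbol{D}\varphi=J_{L,m}$ has the unique Lizorkin solution $\varphi_{L,m}=\boldsymbol{D}^{-1}J_{L,m}=f_1*J_{L,m}$, which is real since $J_{L,m}$ and the kernel $\ln|\cdot|_p$ are real. The one genuinely delicate point is (v), and this is where I expect the main obstacle: the limit $J_L$ no longer lies in $\mathcal{L}_{\mathbb{R}}(\mathbb{Q}_p)$ (its integral $\sum_j v_j$ is in general nonzero), $f_1$ is the singular log-distribution, and the convergence of $J_{L,m}$ is only weak, so one cannot pass to the limit naively inside the convolution. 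I would circumvent this by transferring $f_1$ onto the test function: using that $\boldsymbol{D}^{-1}$ is self-adjoint on $L^2_{\mathbb{R}}(\mathbb{Q}_p)$ (on the Fourier side it is multiplication by the real symbol $|\xi|_p^{-1}$), one obtains for every $\theta\in\mathcal{L}_{\mathbb{R}}(\mathbb{Q}_p)$ that $(\varphi_{L,m},\theta)=(f_1*J_{L,m},\theta)=(J_{L,m},\boldsymbol{D}^{-1}\theta)$. Since $\boldsymbol{D}^{-1}\theta\in\mathcal{L}_{\mathbb{R}}(\mathbb{Q}_p)$ is a legitimate Lizorkin test function, (iii) applies and yields $(J_{L,m},\boldsymbol{D}^{-1}\theta)\to(J_L,\boldsymbol{D}^{-1}\theta)=(f_1*J_L,\theta)$, which is exactly the claimed convergence in $\mathcal{L}_{\mathbb{R}}^{\prime}(\mathbb{Q}_p)$.

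Finally, for (vi) I would compute $f_1*J_L$ pointwise on the prescribed region, reducing to the evaluation of $(f_1*\delta_L)(u)$ for $|u|_p>p^{-L}$ and then summing over $j$ with $u=x-x_j$. Writing $f_1(y)=\tfrac{1-p}{p\ln p}\ln|y|_p$ and $\delta_L=p^L\,1_{\{|\cdot|_p\le p^{-L}\}}$, the key observation is the ultrametric inequality: for $|u|_p>p^{-L}\ge|z|_p$ one has $|u-z|_p=|u|_p$, so $\ln|u-z|_p$ is constant over the ball of integration. Pulling it out and using $\int_{|z|_p\le p^{-L}}dz=p^{-L}$ cancels the factor $p^L$, leaving $(f_1*\delta_L)(u)=\tfrac{1-p}{p\ln p}\ln|u|_p$ and hence the stated formula after summation. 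I would also emphasize that this explicit value holds only on the region where all $|x-x_j|_p>p^{-L}$; off it the convolution produces a smeared value, consistent with the fact that the underlying solution $f_1*J_L$ is a genuine distribution (a sum of logarithms), not a test function.
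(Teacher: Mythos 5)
Your proof is correct and follows essentially the same route as the paper's: Fourier/space-side verification of the Lizorkin conditions for (i), an $L^\rho$-norm estimate of the convolution term for (ii), duality plus (ii) for (iii), the standard theory of $\boldsymbol{D}^{-1}$ for (iv)--(v), and the ultrametric constancy of $\ln|u-z|_p$ on small balls for (vi). The only differences are that where the paper cites external results (Lemma 7.4.2 of Albeverio et al.\ for (ii), ``continuity of the convolution'' for (v)) you supply explicit self-contained arguments --- the computation $\Vert g_m\Vert_\rho=|m|_p^{1/\rho-1}$ with Young's inequality, and the transfer $(f_1\ast J_{L,m},\theta)=(J_{L,m},\boldsymbol{D}^{-1}\theta)$ --- both of which are valid and, if anything, sharper than the paper's justifications.
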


\begin{proof}
(i) Denote by $\Delta_{m}\left(  \xi\right)  =\Omega\left(  \left\vert
m\xi\right\vert _{p}\right)  $, $m\in\mathbb{Q}_{p}^{\times}$, the
characteristic function of the ball $B_{\log_{p}\left\vert m\right\vert
_{p}^{-1}}$. Then
\[
\widehat{J}_{L,m}\left(  \xi\right)  =\sum_{j=1}^{N}v_{j}\chi_{p}\left(
\xi\cdot x_{j}\right)  \Delta_{L}\left(  \xi\right)  \left(  1-\Delta
_{m}\left(  \xi\right)  \right)  \text{,}%
\]
where $\Delta_{L}\left(  \xi\right)  =\Omega\left(  p^{-L}\left\vert
\xi\right\vert _{p}\right)  $, which implies that $\widehat{J}_{L,m}$ is a
test function satisfying $\widehat{J}_{L,m}\left(  0\right)  =0$ for
$\left\vert m\right\vert _{p}>1$.

(ii) Notice that
\[
J_{L,m}\left(  x\right)  -\sum_{j=1}^{N}v_{j}\delta_{L}\left(  x-x_{j}\right)
=-\sum_{j=1}^{N}v_{j}\left\vert m\right\vert _{p}^{-1}\Omega\left(
\dfrac{\left\vert x\right\vert _{p}}{\left\vert m\right\vert _{p}}\right)
\ast\delta_{L}\left(  x-x_{j}\right)  .
\]
By using that $\left\vert m\right\vert _{p}^{-1}\Omega\left(  \left\vert
m\right\vert _{p}^{-1}\left\vert x\right\vert _{p}\right)  \in L^{1}\left(
\mathbb{Q}_{p}\right)  $ and $\delta_{L}\left(  x\right)  \in L^{\rho}$,
$1<\rho<\infty$, and applying \cite[Lemma 7.4.2]{Alberio et al} we have
$\Omega\left(  \dfrac{\left\vert x\right\vert _{p}}{\left\vert m\right\vert
_{p}}\right)  \ast\delta_{L}\left(  x-x_{j}\right)  $ $\underrightarrow
{L^{\rho}}$\ $0$ as $\left\vert m\right\vert _{p}\rightarrow\infty$.

(iii) Take $\theta\in\mathcal{L}_{\mathbb{R}}\left(  \mathbb{Q}_{p}\right)  $,
by using the Cauchy--Schwarz inequality,
\begin{align*}
\left\vert \int\nolimits_{\mathbb{Q}_{p}}J_{L,m}\left(  x\right)
\theta\left(  x\right)  dx-\int\nolimits_{\mathbb{Q}_{p}}J_{L}\left(
x\right)  \theta\left(  x\right)  dx\right\vert  &  =\left\vert \int
\nolimits_{\mathbb{Q}_{p}}\theta\left(  x\right)  \left(  J_{L,m}\left(
x\right)  -J_{L}\left(  x\right)  \right)  dx\right\vert \\
&  \leq\Vert\theta\Vert_{2}\Vert J_{L,m}-J_{L}\Vert_{2}\text{.}%
\end{align*}
By the second part $\Vert J_{L,m}-J_{L}\Vert_{2}\rightarrow0$ as $\left\vert
m\right\vert _{p}\rightarrow\infty$.

(iv) See \cite[Chapter 2, Section IX.2]{V-V-Z} \ or \cite[Theorem
9.2.6]{Alberio et al}. (v) It follows from the third part by using the
continuity of the convolution.

(vi) If $\left\vert x-x_{j}\right\vert _{p}>p^{-L}$ for any $j=1,\ldots,N$,
\begin{align}
f_{1}\ast J_{L}\left(  x\right)   &  =\dfrac{1-p}{p\ln p}\sum_{j=1}^{N}%
v_{j}\ln\left\vert x\right\vert _{p}\ast\delta_{L}\left(  x-x_{j}\right)
\nonumber\\
&  =\dfrac{1-p}{p\ln p}\sum_{j=1}^{N}v_{j}p^{L}%
{\displaystyle\int\limits_{x-x_{j}+p^{L}\mathbb{Z}_{p}}}
\ln\left\vert z\right\vert _{p}dz=\dfrac{1-p}{p\ln p}\sum_{j=1}^{N}v_{j}%
\ln\left\vert x-x_{j}\right\vert _{p}\text{.} \label{Eq_Formula_Correcta}%
\end{align}

\end{proof}

\begin{lemma}
\label{Lemma_A}%
\[
\lim_{I\rightarrow\infty}\text{ }\lim_{L\rightarrow\infty}\text{ }%
\lim_{\left\vert m\right\vert _{p}\rightarrow\infty}e^{\sum_{j=1}^{N}%
v_{j}\left(  \delta_{I}\left(  x-x_{j}\right)  ,-\varphi_{L,m}\right)
}=e^{\dfrac{p-1}{p\ln p}\sum_{j=1}^{N}\sum_{i=1,i\neq j}^{N}v_{j}v_{i}%
\ln\left\vert x_{j}-x_{i}\right\vert _{p}}.
\]

\end{lemma}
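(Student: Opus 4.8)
The plan is to turn the statement into an explicit evaluation of the deterministic exponent and then pass to the three limits in the prescribed order. By Lemma~\ref{lemmaJ_l,m}(iv) we have $\varphi_{L,m}=f_{1}\ast J_{L,m}$, so
\[
\sum_{j=1}^{N}v_{j}\left(\delta_{I}(\cdot-x_{j}),-\varphi_{L,m}\right)=-\sum_{j=1}^{N}v_{j}\left(\delta_{I}(\cdot-x_{j}),\,f_{1}\ast J_{L,m}\right),
\]
and, recalling from \eqref{Eq_f1} that $f_{1}$ is the regular distribution $\tfrac{1-p}{p\ln p}\ln|x|_{p}$, the pairing splits into contributions indexed by the $N$ sources building up $J_{L,m}$. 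I would organize these into an \emph{off-diagonal} part, where the source index differs from $j$, and a \emph{diagonal} part, where it equals $j$; the two behave completely differently under the limits.

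For the innermost limit $|m|_{p}\to\infty$ I would use Lemma~\ref{lemmaJ_l,m}(v), namely $\varphi_{L,m}\to f_{1}\ast J_{L}$ in $\mathcal{L}_{\mathbb{R}}^{\prime}(\mathbb{Q}_{p})$. The one subtlety is that $\delta_{I}(\cdot-x_{j})$ is \emph{not} a Lizorkin test function (indeed $\widehat{\delta_{I}}(0)=1$), so the zero-mode component of $\varphi_{L,m}$ does not drop out of the pairing automatically. A direct computation of the smeared subtraction $\Omega(|\cdot|_{p}/|m|_{p})\ast\delta_{L}$ shows that the entire $m$-dependence of the exponent collects into a single term proportional to $\ln|m|_{p}\,(\sum_{j}v_{j})^{2}$; this is exactly where the condition $\sum_{j}v_{j}=0$ (the $l$-th component of momentum conservation $\sum_{j}\boldsymbol{k}_{j}=\boldsymbol{0}$) must be invoked to cancel the divergence and leave a well-defined $m\to\infty$ limit equal to $-\sum_{j}v_{j}(\delta_{I}(\cdot-x_{j}),f_{1}\ast J_{L})$.

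For the middle limit $L\to\infty$ I would treat the off-diagonal terms with Lemma~\ref{lemmaJ_l,m}(vi): on the set where $|x-x_{i}|_{p}>p^{-L}$ one has $f_{1}\ast J_{L}=\tfrac{1-p}{p\ln p}\sum_{i}v_{i}\ln|x-x_{i}|_{p}$. For $i\neq j$ and $I,L$ large the ultrametric inequality forces $|x-x_{i}|_{p}=|x_{j}-x_{i}|_{p}$ on $\operatorname{supp}\delta_{I}(\cdot-x_{j})$, so the pairing equals $\ln|x_{j}-x_{i}|_{p}$; assembling these reproduces precisely $\tfrac{p-1}{p\ln p}\sum_{j}\sum_{i\neq j}v_{j}v_{i}\ln|x_{j}-x_{i}|_{p}$, the exponent in the claimed formula. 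The passage to the limit here is justified by dominated convergence, exactly as in the proof of Lemma~\ref{Lemma3}, using $\delta_{L}\to\delta$.

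The outermost limit $I\to\infty$ is trivial on the off-diagonal terms, which are already $I$-independent once $I>-\log_{p}\min_{i\neq j}|x_{i}-x_{j}|_{p}$; hence the real obstacle is the diagonal contribution. Writing $g_{L}:=\ln|\cdot|_{p}\ast\delta_{L}$, the smeared self-potential is $(\delta_{I}(\cdot-x_{j}),g_{L}(\cdot-x_{j}))=p^{I}\int_{p^{I}\mathbb{Z}_{p}}g_{L}(u)\,du$, which after $L\to\infty$ equals $-I\ln p-\tfrac{\ln p}{p-1}$ and therefore does \emph{not} converge as $I\to\infty$. The crux of the proof is thus to show that this self-energy decouples from the stated answer: one must argue that it contributes only a factor depending on $\sum_{j}v_{j}^{2}$ (i.e., after multiplying over the $D$ field components, on $\sum_{j}|\boldsymbol{k}_{j}|^{2}$), independent of the points $\boldsymbol{x}$ and of the cross terms, so that it is exactly what normal-ordering the vertex operators removes (equivalently, is absorbed into the normalization). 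Making this renormalization precise, together with justifying the interchange of the distributional limits with the pairing against the non-Lizorkin function $\delta_{I}$, is where essentially all of the technical work lies.
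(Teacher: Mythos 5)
Your proposal follows the same route as the paper: express $\varphi_{L,m}=f_{1}\ast J_{L,m}$ via Lemma \ref{lemmaJ_l,m}(iv), pass to the limit $\left\vert m\right\vert _{p}\rightarrow\infty$ using part (v), insert the explicit formula of part (vi) for the $L$-limit, and use the local constancy of $\ln\left\vert \cdot\right\vert _{p}$ on $\mathbb{Q}_{p}^{\times}$ for the $I$-limit on the off-diagonal terms. Your treatment of the off-diagonal terms reproduces the paper's computation, and your remark on the innermost limit is in fact more careful than the paper's: since $\delta_{I}(\cdot-x_{j})$ is not a Lizorkin function, convergence in $\mathcal{L}_{\mathbb{R}}^{\prime}\left(  \mathbb{Q}_{p}\right)  $ does not by itself control the pairing, and the subtraction term in $J_{L,m}$ does contribute a term proportional to $\bigl(\sum_{j}v_{j}\bigr)^{2}\ln\left\vert m\right\vert _{p}$ to the exponent, so momentum conservation $\sum_{j}v_{j}=0$ (equivalently, testing $\varphi_{L,m}$ against the single Lizorkin function $\sum_{j}v_{j}\delta_{I}(\cdot-x_{j})$) really is needed; the paper invokes ``continuity of the pairing'' without addressing this.

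The gap is that you do not finish the proof. The conclusion of the lemma restricts the double sum to $i\neq j$, so the entire content beyond off-diagonal bookkeeping is the disposal of the diagonal term, and your proposal ends by declaring that step (normal-ordering, absorption into the normalization) to be where all the work lies, without performing it. As you correctly compute, the diagonal contribution to the exponent is $\frac{p-1}{p\ln p}\sum_{j}v_{j}^{2}\left(  -I\ln p-\frac{\ln p}{p-1}\right)  $, which tends to $-\infty$; taken literally, the iterated limit on the left-hand side is therefore $0$ rather than the stated right-hand side, and no reordering of the remaining limits changes this. To be fair, the paper's own proof disposes of this term only by asserting that the pairing is ``$-\infty$'' and then writing $\sum_{i\neq j}$ --- an implicit normal-ordering --- so you have put your finger on the genuine soft spot of the argument; but a proof of the lemma as stated must either build the self-energy subtraction into the definition of the vertex operators or justify discarding the divergent factor, and the proposal does neither.
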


\begin{proof}
By using the formula for $f_{1}\ast J_{L}\left(  x\right)  $, in the case
$\left\vert x-x_{j}\right\vert _{p}>p^{-L}$ for any $x_{j}$, see
(\ref{Eq_Formula_Correcta}), and the continuity of the pairing and the
continuity of the convolution,
\[
\sum_{j=1}^{N}v_{j}\left(  \delta_{I}\left(  x-x_{j}\right)  ,-\varphi
_{L,m}\right)  \rightarrow\sum_{j=1}^{N}v_{j}\left(  \delta_{I}\left(
x-x_{j}\right)  ,\frac{p-1}{p\ln p}\sum_{i=1}^{N}v_{i}\ln\left\vert
x-x_{i}\right\vert _{p}\right)
\]
in $\mathcal{L}_{\mathbb{R}}^{\prime}\left(  \mathbb{Q}_{p}\right)  $ as
$\left\vert m\right\vert _{p}\rightarrow\infty$, which implies that
\[
e^{\sum_{j=1}^{N}v_{j}(\delta_{I}\left(  x-x_{j}\right)  ,-\varphi_{L,m}%
)}\rightarrow e^{\sum_{j=1}^{N}v_{j}\left(  \delta_{I}\left(  x-x_{j}\right)
,\frac{p-1}{p\ln p}\sum_{i=1}^{N}v_{i}\ln\left\vert x-x_{i}\right\vert
_{p}\right)  }%
\]
as $\left\vert m\right\vert _{p}\rightarrow\infty$. Now since $\ln\left\vert
x\right\vert _{p}$ is locally constant in $\mathbb{Q}_{p}^{\times}$, and the
$\lim_{t\rightarrow-\infty}e^{t}=0$, we have for $I$ sufficiently large that
\[
\left(  \delta_{I}\left(  x-x_{j}\right)  ,\frac{p-1}{p\ln p}\sum_{i=1}%
^{N}v_{i}\ln\left\vert x-x_{i}\right\vert _{p}\right)  =\frac{p-1}{p\ln p}%
\sum_{i=1}^{N}v_{i}\ln\left\vert x_{j}-x_{i}\right\vert _{p},
\]
if $x_{j}\neq x_{i}$, and $-\infty$ otherwise. Therefore,
\[
e^{\sum_{j=1}^{N}v_{j}\left(  \delta_{I}\left(  x-x_{j}\right)  ,\frac
{p-1}{p\ln p}\sum_{i=1}^{N}v_{i}\ln\left\vert x-x_{i}\right\vert _{p}\right)
}=e^{\frac{p-1}{p\ln p}\sum_{j=1}^{N}\sum_{i=1,i\neq j}^{N}v_{j}v_{i}%
\ln\left\vert x_{j}-x_{i}\right\vert _{p}},
\]
for $I$ sufficiently large.
\end{proof}

\subsubsection{Calculation of the second limit}

We now describe the measure $\widetilde{\mathbb{P}}_{L,m}$. Take
$\varphi_{L,m}\in\mathcal{L}_{\mathbb{R}}^{\prime}\left(  \mathbb{Q}%
_{p}\right)  $, $\widetilde{W}\in\mathcal{L}_{\mathbb{R}}^{\prime}\left(
\mathbb{Q}_{p}\right)  $, by using (\ref{probmeasure}) and changing variables
as $W=\widetilde{W}-\varphi_{L,m}$, we have
\[%
{\displaystyle\int\limits_{\mathcal{L}_{\mathbb{R}}^{\prime}\left(
\mathbb{Q}_{p}\right)  }}
e^{\sqrt{-1}\left(  \widetilde{W}-\varphi_{L,m},g\right)  }d\widetilde
{\mathbb{P}}_{L,m}\left(  \widetilde{W}\right)  =e^{-\frac{1}{2}%
\mathbb{B}\left(  g,g\right)  }\text{,}%
\]
i.e.
\begin{equation}%
{\displaystyle\int\limits_{\mathcal{L}_{\mathbb{R}}^{\prime}\left(
\mathbb{Q}_{p}\right)  }}
e^{\sqrt{-1}\left(  \widetilde{W},g\right)  }d\widetilde{\mathbb{P}}%
_{L,m}\left(  \widetilde{W}\right)  =e^{\sqrt{-1}\left(  \varphi
_{L,m},g\right)  -\frac{1}{2}\mathbb{B}\left(  g,g\right)  }=:\mathcal{C}%
_{L,m}\left(  g\right)  \label{eq:newmeasure}%
\end{equation}
Notice that by Lemma \ref{lemmaJ_l,m},
\[
\lim_{L\rightarrow\infty}\text{ }\lim_{\left\vert m\right\vert _{p}%
\rightarrow\infty}\ \mathcal{C}_{L,m}\left(  g\right)  =e^{\sqrt{-1}\left(
\frac{p-1}{p\ln p}\sum_{j=1}^{N}v_{j}\ln\left\vert x-x_{j}\right\vert
_{p},g\right)  -\frac{1}{2}\mathbb{B}\left(  g,g\right)  }:=\mathcal{C}\left(
g\right)  .
\]
We denote $\widetilde{\mathbb{P}}$ the measure corresponding to $\mathcal{C}%
\left(  g\right)  $.

We now recall that
\[
\mathcal{C}(h)=%
{\displaystyle\int\limits_{\mathcal{L}_{\mathbb{R}}^{\prime}\left(
\mathbb{Q}_{p}\right)  }}
e^{\sqrt{-1}\left(  W,h\right)  }d\mathbb{P}\left(  W\right)  =%
{\displaystyle\int\limits_{\mathbb{R}}}
e^{\sqrt{-1}x}d\mathbb{P}_{h}\left(  x\right)  ,
\]
where $\mathbb{P}_{h}\left(  x\right)  $ is the measure of the half-space
$\left(  W,h\right)  \leq x$ in $\mathcal{L}_{\mathbb{R}}^{\prime}\left(
\mathbb{Q}_{p}\right)  $, see e.g. \cite[Chapter IV, Section 4.1]{Gelf-vol-4}.
Now if $\mathcal{C}(h_{n})\rightarrow\mathcal{C}(\widetilde{h})$, and
$\mathbb{P}_{h_{n}}\left(  \mathbb{R}\right)  \leq1$ for all $n$, then
$\mathbb{P}_{h_{n}}\Rightarrow\mathbb{P}_{\widetilde{h}}$, $\mathcal{C}%
(\widetilde{h})$ is the characteristic function of $\mathbb{P}_{\widetilde{h}%
}$, see e.g. \cite[Theorem 7.8.11]{Ash}. The arrow `$\Rightarrow$' means
\ that%
\begin{equation}%
{\displaystyle\int\limits_{\mathbb{R}}}
l(x)d\mathbb{P}_{h_{n}}\left(  x\right)  \rightarrow%
{\displaystyle\int\limits_{\mathbb{R}}}
l(x)d\mathbb{P}_{\widetilde{h}}\left(  x\right)  \text{ for any bounded
continuous function }l(x)\text{.} \label{Eq_weak_convergence}%
\end{equation}

Therefore%
\[
\widetilde{\mathbb{P}}_{L,m}\Rightarrow\widetilde{\mathbb{P}}\text{ when
}\left\vert m\right\vert _{p}\rightarrow\infty\text{, }L\rightarrow
\infty\text{.}%
\]
Now, if $l(x)\in L^{1}\left(  \mathbb{R},\mathbb{P}_{h_{n}}\right)  $ for any
$n$ and $l(x)\in L^{1}\left(  \mathbb{R},\right)  $, by using the fact that
the bounded continuous functions are dense in $L^{1}\left(  \mathbb{R}%
,\mathbb{P}_{h_{n}}\right)  $\ and $L^{1}\left(  \mathbb{R},\mathbb{P}%
_{\widetilde{h}}\right)  $, see e.g. \cite[Proposition 1.3.22]{Kondratiev et
al}, in (\ref{Eq_weak_convergence}) we can assume that $l(x)$ is an integrable function.

In conclusion, we have the following result.

\begin{lemma}
\label{Lemma_B}%
\[
\lim_{L\rightarrow\infty}\text{ }\lim_{\left\vert m\right\vert _{p}%
\rightarrow\infty}%
{\displaystyle\int\limits_{\mathcal{L}_{\mathbb{R}}\left(  \mathbb{Q}%
_{p}\right)  }}
e^{\sum_{j=1}^{N}v_{j}\left(  \delta_{I}\left(  x-x_{j}\right)  ,\widetilde
{\varphi}\right)  }d\widetilde{\mathbb{P}}_{L,m}\left(  \widetilde{\varphi
}\right)  =%
{\displaystyle\int\limits_{\mathcal{L}_{\mathbb{R}}\left(  \mathbb{Q}%
_{p}\right)  }}
e^{\sum_{j=1}^{N}v_{j}\left(  \delta_{I}\left(  x-x_{j}\right)  ,\widetilde
{\varphi}\right)  }d\widetilde{\mathbb{P}}\left(  \widetilde{\varphi}\right)
,
\]
and
\[
\lim_{I\rightarrow\infty}%
{\displaystyle\int\limits_{\mathcal{L}_{\mathbb{R}}\left(  \mathbb{Q}%
_{p}\right)  }}
e^{\sum_{j=1}^{N}v_{j}\left(  \delta_{I}\left(  x-x_{j}\right)  ,\widetilde
{\varphi}\right)  }d\widetilde{\mathbb{P}}\left(  \widetilde{\varphi}\right)
=%
{\displaystyle\int\limits_{\mathcal{L}_{\mathbb{R}}\left(  \mathbb{Q}%
_{p}\right)  }}
e^{\sum_{j=1}^{N}v_{j}\widetilde{\varphi}\left(  x_{j}\right)  }%
d\widetilde{\mathbb{P}}\left(  \widetilde{\varphi}\right)  .
\]

\end{lemma}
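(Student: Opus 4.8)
The plan is to establish the two identities separately. For the first, fix $I$ and regard the exponent as a single real observable,
\[
Y(\widetilde{\varphi}):=\sum_{j=1}^{N}v_j\big(\delta_I(\cdot-x_j),\widetilde{\varphi}\big),
\]
so that both sides are expectations of $e^{Y}$, under $\widetilde{\mathbb{P}}_{L,m}$ and under $\widetilde{\mathbb{P}}$ respectively. The discussion preceding the statement already supplies what is needed: from $\mathcal{C}_{L,m}\to\mathcal{C}$ (which is Lemma \ref{lemmaJ_l,m}(v)--(vi) together with continuity of the pairing) and \cite[Theorem 7.8.11]{Ash}, the laws of $Y$ on $\mathbb{R}$ converge weakly, i.e. $\widetilde{\mathbb{P}}_{L,m}\Rightarrow\widetilde{\mathbb{P}}$ in the sense of \eqref{Eq_weak_convergence}, as $|m|_p\to\infty$ and then $L\to\infty$.

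The remaining, and crucial, point is that the observable of interest is $e^{Y}$, which is unbounded, whereas \eqref{Eq_weak_convergence} a priori controls only bounded continuous functions. I would close this gap through the integrability upgrade flagged before the statement: bounded continuous functions are dense in the relevant $L^1$ spaces \cite[Proposition 1.3.22]{Kondratiev et al}, so it suffices to check that $e^{Y}$ is integrable with respect to every $\widetilde{\mathbb{P}}_{L,m}$ and to $\widetilde{\mathbb{P}}$, with uniform control. This integrability is obtained exactly as in Lemma \ref{Lemma2}, dominating the exponent by $(W,\theta)$ for a fixed positive $\theta\in\mathcal{L}_{\mathbb{R}}(\mathbb{Q}_p)$ with $\theta(x_j)>1$, together with the observation that $\widetilde{\mathbb{P}}_{L,m}$ is the translate of the canonical Gaussian $\mathbb{P}$ by $\varphi_{L,m}$ (and $\widetilde{\mathbb{P}}$ its translate by the limiting shift $s$ appearing in $\mathcal{C}$): then
\[
\int e^{(W,\theta)}\,d\widetilde{\mathbb{P}}_{L,m}(W)=e^{(\varphi_{L,m},\theta)}\int e^{(W,\theta)}\,d\mathbb{P}(W)<\infty
\]
by \eqref{Eq_White_Noise}, and the prefactors $e^{(\varphi_{L,m},\theta)}$ stay bounded because $(\varphi_{L,m},\theta)\to(s,\theta)$. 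This furnishes the uniform integrability, and the weak limit then extends from bounded continuous observables to $e^{Y}$.

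For the second identity I would argue by dominated convergence in the integration variable, in close analogy with Lemma \ref{Lemma3}. For each fixed Lizorkin test function $\widetilde{\varphi}$, local constancy gives, for every $I$ exceeding its index of local constancy,
\[
\big(\delta_I(\cdot-x_j),\widetilde{\varphi}\big)=p^{I}\!\!\int_{x_j+p^{I}\mathbb{Z}_p}\!\!\widetilde{\varphi}(y)\,dy=\widetilde{\varphi}(x_j),
\]
so the integrand converges pointwise to $e^{\sum_j v_j\widetilde{\varphi}(x_j)}$. To interchange limit and integral I would reuse the $I$-uniform bound from the proof of Lemma \ref{Lemma3}, namely $1_{\mathcal{L}_{\mathbb{R}}(\mathbb{Q}_p)}(\widetilde{\varphi})\,e^{\sum_j v_j(\delta_I(\cdot-x_j),\widetilde{\varphi})}\le e^{(W,\theta)}$ with $\theta$ a fixed positive element of $\mathcal{L}_{\mathbb{R}}(\mathbb{Q}_p)$; this dominating function is $\widetilde{\mathbb{P}}$-integrable by the same translate identity, $\int e^{(W,\theta)}\,d\widetilde{\mathbb{P}}=e^{(s,\theta)}\int e^{(W,\theta)}\,d\mathbb{P}<\infty$.

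The main obstacle is the first identity: weak convergence of the shifted Gaussians only sees bounded continuous observables, while the amplitude is the expectation of an unbounded exponential, so the whole argument hinges on the uniform exponential integrability provided by \eqref{Eq_White_Noise} and the translate structure of the measures. Once that is in hand, the second identity is a routine smoothing-to-point-evaluation limit of the same type as Lemma \ref{Lemma3}.
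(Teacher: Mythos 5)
Your proposal is correct and follows essentially the same route as the paper: the paper likewise obtains the first identity from the convergence of the characteristic functionals $\mathcal{C}_{L,m}\rightarrow\mathcal{C}$, the resulting weak convergence of the pushforward laws on $\mathbb{R}$ via \cite[Theorem 7.8.11]{Ash}, and the extension from bounded continuous observables to integrable ones by density in $L^{1}$, and it treats the second identity as a dominated-convergence limit in the spirit of Lemma \ref{Lemma3}. Your explicit verification of the exponential integrability of $e^{\left(  W,\theta\right)  }$ under the shifted measures through the translation structure of $\widetilde{\mathbb{P}}_{L,m}$ fills in a detail the paper leaves implicit, but it does not constitute a different method.
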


\subsubsection{A formula for $\mathcal{A}_{R}^{\left(  N\right)  }\left(
\boldsymbol{k}\right)  $}

Now, we recall that by using the change of variables
(\ref{eq:change of variab}), we have
\[
\widetilde{\mathcal{A}}_{R}^{\left(  N\right)  }\left(  \boldsymbol{x}%
,\boldsymbol{v};I\right)  =\frac{1}{Z_{0}^{1/D}}e^{\sum_{j=1}^{N}v_{j}\left(
\delta_{I}\left(  x-x_{j}\right)  ,-\varphi_{L,m}\right)  }%
{\displaystyle\int\limits_{\mathcal{L}_{\mathbb{R}}\left(  \mathbb{Q}%
_{p}\right)  }}
e^{\sum_{j=1}^{N}v_{j}\left(  \delta_{I}\left(  x-x_{j}\right)  ,\widetilde
{\varphi}\right)  }d\widetilde{\mathbb{P}}_{L,m}\left(  \widetilde{\varphi
}\right)  \text{,}%
\]
taking $\varphi_{L,m}$ to be the unique solution of $\boldsymbol{D}%
\varphi_{L,m}=J_{L,m}$, for each $m\in\mathbb{Q}_{p}^{\times}$. Then by
applying Lemmas \ref{Lemma3}, \ref{Lemma_A}, \ref{Lemma_B},
\begin{multline*}
\lim_{I\rightarrow\infty}\text{ }\lim_{L\rightarrow\infty}\text{ }%
\lim_{\left\vert m\right\vert _{p}\rightarrow\infty\text{,}}\widetilde
{\mathcal{A}}_{R}^{\left(  N\right)  }\left(  \boldsymbol{x},\boldsymbol{v}%
;I\right)  =\widetilde{\mathcal{A}}_{R}^{\left(  N\right)  }\left(
\boldsymbol{x},\boldsymbol{v}\right) \\
=\dfrac{1}{Z_{0}^{1/D}}\text{ }e^{\dfrac{p-1}{p\ln p}\sum_{j=1}^{N}%
\sum_{i=1,i\neq j}^{N}v_{j}v_{i}\ln\left\vert x_{j}-x_{i}\right\vert _{p}}%
{\displaystyle\int\limits_{\mathcal{L}_{\mathbb{R}}\left(  \mathbb{Q}%
_{p}\right)  }}
e^{\sum_{j=1}^{N}v_{j}\widetilde{\varphi}\left(  x_{j}\right)  }%
d\widetilde{\mathbb{P}}\left(  \widetilde{\varphi}\right)  \text{.}%
\end{multline*}

By using this formula and the definition $\mathcal{A}_{R}^{\left(  N\right)
}\left(  \boldsymbol{k}\right)  $, we establish the following result.

\begin{proposition}
\label{Prop1}The amplitude $\mathcal{A}_{R}^{\left(  N\right)  }\left(
\boldsymbol{k}\right)  $ satisfies
\[
\mathcal{A}_{R}^{\left(  N\right)  }\left(  \boldsymbol{k}\right)  =\frac
{1}{Z_{0}}%
{\displaystyle\int\limits_{B_{R}^{N}}}
\prod_{j<i}^{N}\left\vert x_{j}-x_{i}\right\vert _{p}^{2\frac{\left(
p-1\right)  }{p\ln p}\boldsymbol{k}_{i}\cdot\boldsymbol{k}_{j}}%
{\displaystyle\int\limits_{\mathcal{L}_{\mathbb{R}}^{D}\left(  \mathbb{Q}%
_{p}\right)  }}
e^{\sum_{j=1}^{N}\boldsymbol{k}_{j}\cdot\widetilde{\boldsymbol{\varphi}%
}\left(  x_{j}\right)  }d\widetilde{\mathbb{P}}_{D}\left(  \widetilde
{\boldsymbol{\varphi}}\right)  \prod_{j=0}^{N}dx_{j}\text{.}%
\]

\end{proposition}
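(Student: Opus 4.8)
\emph{Plan.} The proposition is, at heart, an assembly step: it combines the factorized expression for $\mathcal{A}_{R}^{\left(  N\right)  }\left(\boldsymbol{k}\right)$ furnished by Corollary \ref{Cor1} with the pointwise identity for $\widetilde{\mathcal{A}}_{R}^{\left(N\right)}$ derived from Lemmas \ref{Lemma3}, \ref{Lemma_A} and \ref{Lemma_B}. First I would start from the second line of (\ref{eq:amplfinal}), equivalently from Corollary \ref{Cor1}, which gives the finite (for fixed $R$) expression
\[
\mathcal{A}_{R}^{\left(  N\right)  }\left(  \boldsymbol{k}\right)=\frac{1}{Z_0}\int\limits_{B_R^N}\left\{\prod_{l=0}^{D-1}\int\limits_{\mathcal{L}_{\mathbb{R}}\left(\mathbb{Q}_p\right)}e^{\sum_{j=1}^N k_{l,j}\varphi_l\left(x_j\right)}\,d\mathbb{P}\left(\varphi_l\right)\right\}\prod_{j=1}^N dx_j .
\]
Under the notation (\ref{Notation}), each inner factor is precisely $Z_0^{1/D}\,\widetilde{\mathcal{A}}_{R}^{\left(N\right)}(\boldsymbol{x},\boldsymbol{v}^{(l)})$ with $\boldsymbol{v}^{(l)}=(k_{l,1},\dots,k_{l,N})$, by the definition (\ref{eq:aprox1}).

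Next I would substitute the identity assembled immediately before the statement, namely
\[
\widetilde{\mathcal{A}}_{R}^{\left(N\right)}\left(\boldsymbol{x},\boldsymbol{v}\right)=\frac{1}{Z_0^{1/D}}e^{\frac{p-1}{p\ln p}\sum_{j=1}^N\sum_{i=1,\,i\neq j}^N v_j v_i\ln\left\vert x_j-x_i\right\vert_p}\int\limits_{\mathcal{L}_{\mathbb{R}}\left(\mathbb{Q}_p\right)}e^{\sum_{j=1}^N v_j\widetilde{\varphi}\left(x_j\right)}\,d\widetilde{\mathbb{P}}\left(\widetilde{\varphi}\right),
\]
applied with $\boldsymbol{v}=\boldsymbol{v}^{(l)}$ for each $l$. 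Taking the product over $l=0,\dots,D-1$, the prefactors $Z_0^{1/D}$ cancel against the definition of $\widetilde{\mathcal{A}}_{R}^{\left(N\right)}$, and the exponential factors combine into $e^{\frac{p-1}{p\ln p}\sum_{j\neq i}(\sum_{l}k_{l,j}k_{l,i})\ln\left\vert x_j-x_i\right\vert_p}$.

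The only genuinely computational step is then the algebraic bookkeeping. I would use the definition of the Euclidean scalar product, $\sum_{l=0}^{D-1}k_{l,j}k_{l,i}=\boldsymbol{k}_j\cdot\boldsymbol{k}_i$, together with the symmetry of the summand in the pair $(i,j)$, which gives $\sum_{j=1}^N\sum_{i\neq j}=2\sum_{j<i}$; hence the combined exponential equals $\prod_{j<i}^N\left\vert x_j-x_i\right\vert_p^{2\frac{(p-1)}{p\ln p}\boldsymbol{k}_i\cdot\boldsymbol{k}_j}$, the Koba--Nielsen prefactor. The surviving product of one-dimensional integrals reassembles, exactly as in Corollary \ref{Cor1} (Fubini together with the definition $\widetilde{\mathbb{P}}_D=\bigotimes_{l}\widetilde{\mathbb{P}}$), into
\[
\prod_{l=0}^{D-1}\int\limits_{\mathcal{L}_{\mathbb{R}}\left(\mathbb{Q}_p\right)}e^{\sum_{j=1}^N k_{l,j}\widetilde{\varphi}_l\left(x_j\right)}\,d\widetilde{\mathbb{P}}\left(\widetilde{\varphi}_l\right)=\int\limits_{\mathcal{L}_{\mathbb{R}}^D\left(\mathbb{Q}_p\right)}e^{\sum_{j=1}^N\boldsymbol{k}_j\cdot\widetilde{\boldsymbol{\varphi}}\left(x_j\right)}\,d\widetilde{\mathbb{P}}_D\left(\widetilde{\boldsymbol{\varphi}}\right).
\]
Substituting back and integrating over $B_R^N$ yields the claimed formula.

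\emph{Main obstacle.} The step requiring care is that the identity for $\widetilde{\mathcal{A}}_{R}^{\left(N\right)}\left(\boldsymbol{x},\boldsymbol{v}\right)$ is established pointwise in $\boldsymbol{x}$ and only for configurations with $x_j\neq x_i$ for all $j\neq i$, since on the diagonal the logarithm diverges and the exponent is $-\infty$ (cf. Lemma \ref{Lemma_A}). I would dispose of this by observing that the diagonal $\{x_j=x_i\text{ for some }i\neq j\}$ has Haar measure zero in $B_R^N$, so the two integrands coincide almost everywhere; because the left-hand side is already known to be finite by Corollary \ref{Cor1}, no interchange of the limits in $I,L,\left\vert m\right\vert_p$ with the $\boldsymbol{x}$-integration is required---one simply integrates over $B_R^N$ an equality of integrands that holds a.e. This confines all the analytic difficulty to the preceding lemmas and leaves the proof of the proposition itself a clean substitution.
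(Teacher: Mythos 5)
Your proposal is correct and follows essentially the same route as the paper: the paper obtains Proposition \ref{Prop1} by combining the factorization in (\ref{eq:amplfinal}) with the limit formula for $\widetilde{\mathcal{A}}_{R}^{\left(N\right)}\left(\boldsymbol{x},\boldsymbol{v}\right)$ derived from Lemmas \ref{Lemma3}, \ref{Lemma_A}, \ref{Lemma_B}, exactly as you do. Your write-up is in fact more explicit than the paper's (which states the proposition without a displayed proof), notably in spelling out the product over $l$, the $\sum_{l}k_{l,j}k_{l,i}=\boldsymbol{k}_{j}\cdot\boldsymbol{k}_{i}$ bookkeeping, and the observation that the off-diagonal identity suffices because the diagonal has Haar measure zero in $B_{R}^{N}$.
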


We now introduce the `convention' that the insertion points $x_{1}$, $x_{2}%
$,\ldots{}, $x_{N-1}$, $x_{N}$, with $N\geq4$, belong to the $p$-adic
projective line, and then by using the M{\"{o}}bius group, we may take the
normalization
\[
x_{1}=0\text{, }x_{N-1}=1\text{, }x_{N}=\infty\text{.}%
\]
In our framework, the convention $x_{N}=\infty$ means that the $N$-point
amplitudes do not depend on $x_{N}$, then $\mathcal{A}_{R}^{\left(  N\right)
}\left(  \boldsymbol{k}\right)  $ takes the form
\begin{multline*}
\mathcal{A}_{R}^{\left(  N\right)  }\left(  \boldsymbol{k}\right)
=\dfrac{C_{0}}{Z_{0}}%
{\displaystyle\int\limits_{B_{R}^{N-3}}}
\prod\limits_{i=2}^{N-2}\left\vert x_{i}\right\vert _{p}^{2\dfrac{\left(
p-1\right)  }{p\ln p}\boldsymbol{k}_{1}\cdot\boldsymbol{k}_{i}}\left\vert
1-x_{i}\right\vert _{p}^{2\dfrac{\left(  p-1\right)  }{p\ln p}\boldsymbol{k}%
_{N-1}\cdot\boldsymbol{k}_{i}}\\
\times\prod_{2\leq i,j\leq N-2}\left\vert x_{j}-x_{i}\right\vert _{p}%
^{2\dfrac{\left(  p-1\right)  }{p\ln p}\boldsymbol{k}_{i}\cdot\boldsymbol{k}%
_{j}}%
{\displaystyle\int\limits_{\mathcal{L}_{\mathbb{R}}^{D}\left(  \mathbb{Q}%
_{p}\right)  }}
e^{\sum_{j=2}^{N-2}\boldsymbol{k}_{j}\cdot\widetilde{\boldsymbol{\varphi}%
}\left(  x_{j}\right)  }d\widetilde{\mathbb{P}}_{D}\left(  \widetilde
{\boldsymbol{\varphi}}\right)  \prod\limits_{j=2}^{N-2}dx_{j}\text{,}%
\end{multline*}
where the momenta vectors satisfy $\sum_{i=1}^{N}\boldsymbol{k}_{i}%
=\mathbf{0}$ and
\[
C_{0}=%
{\displaystyle\int\limits_{\mathcal{L}_{\mathbb{R}}^{D}\left(  \mathbb{Q}%
_{p}\right)  }}
e^{\boldsymbol{k}_{1}\cdot\widetilde{\boldsymbol{\varphi}}\left(  0\right)
+\boldsymbol{k}_{N-1}\cdot\widetilde{\boldsymbol{\varphi}}\left(  1\right)
}d\widetilde{\mathbb{P}}_{D}\left(  \widetilde{\boldsymbol{\varphi}}\right)
.
\]

We now consider the function
\[
\Theta(\boldsymbol{k},\boldsymbol{x}):=\Theta\left(  \boldsymbol{k}%
,x_{2},\ldots,x_{N-2}\right)  =%
{\displaystyle\int\limits_{\mathcal{L}_{\mathbb{R}}^{D}\left(  \mathbb{Q}%
_{p}\right)  }}
e^{\sum_{j=2}^{N-2}\boldsymbol{k}_{j}\cdot\widetilde{\boldsymbol{\varphi}%
}\left(  x_{j}\right)  }d\widetilde{\mathbb{P}}_{D}\left(  \widetilde
{\boldsymbol{\varphi}}\right)  \text{.}%
\]
By using that
\begin{align*}
e^{\sum_{j=2}^{N-2}\boldsymbol{k}_{j}\cdot\widetilde{\boldsymbol{\varphi}%
}\left(  x_{j}\right)  } &  =\lim_{M\rightarrow\infty}\sum\limits_{r=0}%
^{M}\frac{\left(  ^{\sum_{j=2}^{N-2}\boldsymbol{k}_{j}\cdot\widetilde
{\mathbf{\varphi}}\left(  x_{j}\right)  }\right)  ^{r}}{r!}\\
&  =\lim_{M^{\prime}\rightarrow\infty}\sum\limits_{r=0}^{M^{\prime}}%
F_{r}\left(  \boldsymbol{k},\widetilde{\boldsymbol{\varphi}}\left(
x_{2}\right)  ,\ldots,\widetilde{\boldsymbol{\varphi}}\left(  x_{N-2}\right)
\right)  ,
\end{align*}
where $F_{r}\left(  \boldsymbol{k},\widetilde{\boldsymbol{\varphi}}\left(
x_{2}\right)  ,\ldots,\widetilde{\boldsymbol{\varphi}}\left(  x_{N-2}\right)
\right)  $ is a homogeneous polynomial of degree $r$ in the variables
$k_{l,j}$, $l=0,\ldots,D-1$, $j=2,\ldots,N-2$, whose coefficients are
polynomials in the $\widetilde{\boldsymbol{\varphi}}\left(  x_{2}\right)
,\ldots,\widetilde{\boldsymbol{\varphi}}\left(  x_{N-2}\right)  $. By the
dominated convergence theorem, Corollary \ref{Cor1}, and
\[
\sum\limits_{r=0}^{M^{\prime}}\left\vert F_{r}\left(  \boldsymbol{k}%
,\widetilde{\boldsymbol{\varphi}}\left(  x_{2}\right)  ,\ldots,\widetilde
{\boldsymbol{\varphi}}\left(  x_{N-2}\right)  \right)  \right\vert \leq
e^{\sum_{j=2}^{N-2}\sum_{l=0}^{D-1}\left\vert k_{l,j}\right\vert \left\vert
\widetilde{\varphi_{l}}\left(  x_{j}\right)  \right\vert }\in L^{1}\left(
\mathcal{L}_{\mathbb{R}}^{D}\left(  \mathbb{Q}_{p}\right)  ,\widetilde
{\mathbb{P}}_{D}\right)  \text{,}%
\]
we have
\begin{align*}
\Theta(\boldsymbol{k},\boldsymbol{x}) &  =%
{\displaystyle\int\limits_{\mathcal{L}_{\mathbb{R}}^{D}\left(  \mathbb{Q}%
_{p}\right)  }}
\left\{  \lim_{M^{\prime}\rightarrow\infty}\sum\limits_{r=0}^{M^{\prime}}%
F_{r}\left(  \boldsymbol{k},\widetilde{\boldsymbol{\varphi}}\left(
x_{2}\right)  ,\ldots,\widetilde{\boldsymbol{\varphi}}\left(  x_{N-2}\right)
\right)  \right\}  d\widetilde{\mathbb{P}}_{D}\left(  \boldsymbol{\varphi
}\right)  \\
&  =\lim_{M^{\prime}\rightarrow\infty}\sum\limits_{r=0}^{M^{\prime}}%
{\displaystyle\int\limits_{\mathcal{L}_{\mathbb{R}}^{D}\left(  \mathbb{Q}%
_{p}\right)  }}
F_{r}\left(  \boldsymbol{k},\widetilde{\boldsymbol{\varphi}}\left(
x_{2}\right)  ,\ldots,\widetilde{\boldsymbol{\varphi}}\left(  x_{N-2}\right)
\right)  d\widetilde{\mathbb{P}}_{D}\left(  \widetilde{\boldsymbol{\varphi}%
}\right)  \\
&  =%
{\displaystyle\int\limits_{\mathcal{L}_{\mathbb{R}}^{D}\left(  \mathbb{Q}%
_{p}\right)  }}
\text{{}}d\widetilde{\mathbb{P}}_{D}\left(  \widetilde{\boldsymbol{\varphi}%
}\right)  \text{{}}+\sum\limits_{r=1}^{\infty}\text{ }%
{\displaystyle\int\limits_{\mathcal{L}_{\mathbb{R}}^{D}\left(  \mathbb{Q}%
_{p}\right)  }}
F_{r}\left(  \boldsymbol{k},\widetilde{\boldsymbol{\varphi}}\left(
x_{2}\right)  ,\ldots,\widetilde{\boldsymbol{\varphi}}\left(  x_{N-2}\right)
\right)  \text{{}}d\widetilde{\mathbb{P}}_{D}\left(  \widetilde
{\boldsymbol{\varphi}}\right)  \text{,}%
\end{align*}
where $\boldsymbol{x=}\left(  x_{2},\ldots,x_{N-2}\right)  $. Now by using
that $F_{r}\left(  \boldsymbol{k},\widetilde{\boldsymbol{\varphi}}\left(
x_{2}\right)  ,\ldots,\widetilde{\boldsymbol{\varphi}}\left(  x_{N-2}\right)
\right)  $ are integrable continuous functions in $\boldsymbol{x}$ for
$\boldsymbol{k}$ fixed, we conclude that
\[
G_{r}(\boldsymbol{k},\boldsymbol{x}):=%
{\displaystyle\int\limits_{\mathcal{L}_{\mathbb{R}}^{D}\left(  \mathbb{Q}%
_{p}\right)  }}
F_{r}\left(  \boldsymbol{k},\widetilde{\boldsymbol{\varphi}}\left(
x_{2}\right)  ,\ldots,\widetilde{\boldsymbol{\varphi}}\left(  x_{N-2}\right)
\right)  \text{{}}d\widetilde{\mathbb{P}}_{D}\left(  \widetilde
{\boldsymbol{\varphi}}\right)
\]
is a continuous function in $\boldsymbol{x}$. Therefore
\[
\Theta(\boldsymbol{k},\boldsymbol{x})=C+\sum\limits_{r=1}^{\infty}%
G_{r}(\boldsymbol{k},\boldsymbol{x})\text{.}%
\]
Now by using the formula given in Proposition \ref{Prop1}, and Fubini's
theorem to interchange $\int_{B_{R}^{N-3}}$ and $\sum_{r=1}^{\infty}$, we
obtain the following result.

\begin{theorem}
\label{Prop2}The amplitude $\mathcal{A}_{R}^{\left(  N\right)  }\left(
\boldsymbol{k}\right)  $ admits the following expansion in the momenta:
\begin{multline*}
\mathcal{A}_{R}^{\left(  N\right)  }\left(  \boldsymbol{k}\right)
=\dfrac{CC_{0}}{Z_{0}}%
{\displaystyle\int\limits_{B_{R}^{N-3}}}
\prod\limits_{i=2}^{N-2}\left\vert x_{i}\right\vert _{p}^{2\dfrac{(p-1)}{p\ln
p}\boldsymbol{k}_{1}\cdot\boldsymbol{k}_{j}}\left\vert 1-x_{i}\right\vert
_{p}^{2\dfrac{\left(  p-1\right)  }{\ln p}\boldsymbol{k}_{N-1}\cdot
\boldsymbol{k}_{i}}\\
\times\prod_{2\leq i,j\leq N-2}\left\vert x_{j}-x_{i}\right\vert _{p}%
^{2\dfrac{\left(  p-1\right)  }{p\ln p}\boldsymbol{k}_{i}\cdot\boldsymbol{k}%
_{j}}\prod_{j=2}^{N-2}dx_{j}\\
+\dfrac{C_{0}}{Z_{0}}\sum\limits_{r=1}^{\infty}%
{\displaystyle\int\limits_{B_{R}^{N-3}}}
\prod\limits_{i=2}^{N-2}\left\vert x_{i}\right\vert _{p}^{2\dfrac{\left(
p-1\right)  }{p\ln p}\boldsymbol{k}_{1}\cdot\boldsymbol{k}_{j}}\left\vert
1-x_{i}\right\vert _{p}^{2\dfrac{\left(  p-1\right)  }{p\ln p}\boldsymbol{k}%
_{N-1}\cdot\boldsymbol{k}_{i}}\\
\times\prod_{2\leq i,j\leq N-2}\left\vert x_{j}-x_{i}\right\vert _{p}%
^{2\dfrac{\left(  p-1\right)  }{p\ln p}\boldsymbol{k}_{i}\cdot\boldsymbol{k}%
_{j}}G_{r}(\boldsymbol{k},\boldsymbol{x})\prod_{j=2}^{N-2}dx_{j}\text{.}%
\end{multline*}

\end{theorem}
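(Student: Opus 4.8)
The plan is to assemble the statement from ingredients already in place. By Proposition \ref{Prop1} together with the M\"{o}bius normalization $x_1=0$, $x_{N-1}=1$, $x_N=\infty$, the amplitude takes the form
\[
\mathcal{A}_R^{(N)}(\boldsymbol{k})=\frac{C_0}{Z_0}\int_{B_R^{N-3}}W(\boldsymbol{k},\boldsymbol{x})\,\Theta(\boldsymbol{k},\boldsymbol{x})\prod_{j=2}^{N-2}dx_j,
\]
where $W(\boldsymbol{k},\boldsymbol{x})$ is the product of the factors $|x_i|_p^{2\frac{(p-1)}{p\ln p}\boldsymbol{k}_1\cdot\boldsymbol{k}_i}$, $|1-x_i|_p^{2\frac{(p-1)}{p\ln p}\boldsymbol{k}_{N-1}\cdot\boldsymbol{k}_i}$ and $|x_j-x_i|_p^{2\frac{(p-1)}{p\ln p}\boldsymbol{k}_i\cdot\boldsymbol{k}_j}$, and $\Theta(\boldsymbol{k},\boldsymbol{x})=\int_{\mathcal{L}_\mathbb{R}^D(\mathbb{Q}_p)}e^{\sum_{j=2}^{N-2}\boldsymbol{k}_j\cdot\widetilde{\boldsymbol{\varphi}}(x_j)}\,d\widetilde{\mathbb{P}}_D$. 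First I would record the decomposition $\Theta(\boldsymbol{k},\boldsymbol{x})=C+\sum_{r\ge1}G_r(\boldsymbol{k},\boldsymbol{x})$ obtained above by expanding the exponential, where $C=\int_{\mathcal{L}_\mathbb{R}^D(\mathbb{Q}_p)}d\widetilde{\mathbb{P}}_D$ and each $G_r$ is continuous in $\boldsymbol{x}$.

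Substituting this decomposition, the $r=0$ term $C\,W(\boldsymbol{k},\boldsymbol{x})$ integrates to the first (regularized Koba--Nielsen) line of the asserted formula, namely $\frac{CC_0}{Z_0}\int_{B_R^{N-3}}W(\boldsymbol{k},\boldsymbol{x})\prod_{j=2}^{N-2}dx_j$. The entire content of the theorem thus reduces to moving the sum $\sum_{r\ge1}$ outside the integral $\int_{B_R^{N-3}}$, so the essential step is a Fubini--Tonelli justification.

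To license this I would dominate the tail uniformly in $r$. Using the bound on the partial sums of the exponential series that defines the $F_r$, together with positivity of $\widetilde{\mathbb{P}}_D$ and monotone convergence,
\[
\sum_{r=1}^\infty|G_r(\boldsymbol{k},\boldsymbol{x})|\le\int_{\mathcal{L}_\mathbb{R}^D(\mathbb{Q}_p)}e^{\sum_{j=2}^{N-2}\sum_{l=0}^{D-1}|k_{l,j}|\,|\widetilde{\varphi_l}(x_j)|}\,d\widetilde{\mathbb{P}}_D=:H(\boldsymbol{k},\boldsymbol{x}).
\]
By the white-noise integrability estimate \eqref{Eq_White_Noise} invoked in Lemma \ref{Lemma2} (bounding each $|\widetilde{\varphi_l}(x_j)|$ by a pairing against a fixed positive Lizorkin test function), $H$ is finite, and the dominated-convergence argument of that lemma shows $H$ is continuous, hence bounded, on the compact ball $B_R^{N-3}$. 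Consequently
\[
\sum_{r=1}^\infty\int_{B_R^{N-3}}W(\boldsymbol{k},\boldsymbol{x})\,|G_r(\boldsymbol{k},\boldsymbol{x})|\prod_{j=2}^{N-2}dx_j\le\Bigl(\sup_{B_R^{N-3}}H\Bigr)\int_{B_R^{N-3}}|W(\boldsymbol{k},\boldsymbol{x})|\prod_{j=2}^{N-2}dx_j,
\]
and the remaining integral is a multivariate Igusa/Koba--Nielsen local zeta function that converges for $\boldsymbol{k}$ in the common domain guaranteed by \cite{Bocardo:2020mk}. On that domain the right-hand side is finite, so Tonelli permits the interchange and term-by-term integration yields the stated series.

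The hard part will be exactly this interchange, which rests on two integrability facts combined: the exponential $L^1$-domination of $\sum_r|G_r|$ (depending on \eqref{Eq_White_Noise} and the homogeneous-polynomial structure of the $F_r$) and the integrability of the singular weight $W$ over the ball, valid only on the Koba--Nielsen domain of convergence in $\boldsymbol{k}$. Once both are in hand, the assembly of the formula is routine.
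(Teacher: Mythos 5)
Your proposal is correct and follows essentially the same route as the paper: Proposition \ref{Prop1} plus the M\"{o}bius normalization, the decomposition $\Theta(\boldsymbol{k},\boldsymbol{x})=C+\sum_{r\geq1}G_{r}(\boldsymbol{k},\boldsymbol{x})$ obtained by expanding the exponential under the dominated convergence theorem, and then Fubini to interchange $\sum_{r\geq1}$ with $\int_{B_{R}^{N-3}}$. The only difference is that you spell out the Tonelli domination (bounding $\sum_{r}|G_{r}|$ by the exponential moment $H$ and using convergence of the Koba--Nielsen weight on the appropriate domain of $\boldsymbol{k}$), a step the paper invokes without detail; this is a useful elaboration, not a different argument.
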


To continue the study of the amplitudes $\mathcal{A}_{R}^{\left(  N\right)
}\left(  \boldsymbol{k}\right)  $, we introduce the following notation:
\begin{multline*}
A_{R}^{\left(  N\right)  }\left(  \boldsymbol{k}\right)  =\dfrac{CC_{0}}%
{Z_{0}}\int\nolimits_{B_{R}^{N-3}}\prod\limits_{i=2}^{N-2}\left\vert
x_{i}\right\vert _{p}^{2\dfrac{(p-1)}{\ln p}\boldsymbol{k}_{1}\cdot
\boldsymbol{k}_{j}}\left\vert 1-x_{i}\right\vert _{p}^{2\dfrac{\left(
p-1\right)  }{\ln p}\boldsymbol{k}_{N-1}\cdot\boldsymbol{k}_{i}}\\
\times\prod_{2\leq i,j\leq N-2}\left\vert x_{j}-x_{i}\right\vert _{p}%
^{2\dfrac{\left(  p-1\right)  }{\ln p}\boldsymbol{k}_{i}\cdot\boldsymbol{k}%
_{j}}\prod_{j=2}^{N-2}dx_{j}\text{,}%
\end{multline*}%
\begin{multline*}
Z_{G_{r},R}^{(N)}(\boldsymbol{k})=\dfrac{C_{0}}{Z_{0}}%
{\displaystyle\int\limits_{\mathbb{Q}_{p}^{N-3}}}
\prod\limits_{i=2}^{N-2}\left\vert x_{i}\right\vert _{p}^{2\dfrac{(p-1)}{p\ln
p}\boldsymbol{k}_{1}\cdot\boldsymbol{k}_{j}}\left\vert 1-x_{i}\right\vert
_{p}^{2\dfrac{\left(  p-1\right)  }{p\ln p}\boldsymbol{k}_{N-1}\cdot
\boldsymbol{k}_{i}}\\
\times\prod_{2\leq i,j\leq N-2}\left\vert x_{j}-x_{i}\right\vert _{p}%
^{2\dfrac{\left(  p-1\right)  }{p\ln p}\boldsymbol{k}_{i}\cdot\boldsymbol{k}%
_{j}}1_{B_{R}^{N-3}}(\boldsymbol{x})G_{r}(\boldsymbol{k},\boldsymbol{x}%
)\prod_{j=2}^{N-2}dx_{j}\text{.}%
\end{multline*}
Notice that $1_{B_{R}^{N-3}}(\boldsymbol{x})G_{r}(\boldsymbol{k}%
,\boldsymbol{x})$ is a continuous function in $\boldsymbol{x}$ with support
contained in $B_{R}^{N-3}$.

\section{Regularization of $p$-adic open string amplitudes, and multivariate
local zeta functions}

\subsection{The $p$-adic Koba-Nielsen local zeta functions}

Take $N\geq4$ and $s_{ij}\in\mathbb{C}$ satisfying $s_{ij}=s_{ji}$ for $1\leq
i<j\leq N-1$. The $p$-adic Koba-Nielsen local zeta function (or $p$-adic open
string $N$-point zeta function) is defined as
\begin{equation}
Z^{(N)}\left(  \mathbf{s}\right)  =%
{\displaystyle\int\limits_{\mathbb{Q}_{p}^{N-3}\smallsetminus\Lambda}}
{\prod_{i=2}^{N-2}}\left\vert x_{i}\right\vert _{p}^{s_{1i}}\left\vert
1-x_{i}\right\vert _{p}^{s_{(N-1)i}}\text{{}}{\prod_{2\leq i<j\leq N-2}%
}\left\vert x_{i}-x_{j}\right\vert _{p}^{s_{ij}}{\prod_{i=2}^{N-2}}dx_{i},
\label{zeta_funtion_string}%
\end{equation}
where $\boldsymbol{s}=\left(  s_{ij}\right)  \in\mathbb{C}^{D_{0}}$, here
$D_{0}$ denotes the total number of possible subsets $\left\{  i,j\right\}  $,
$%
{\textstyle\prod_{i=2}^{N-2}}
dx_{i}$ is the normalized Haar measure of $\mathbb{Q}_{p}^{N-3}$, and
\[
\Lambda:=\left\{  \left(  x_{2},\ldots,x_{N-2}\right)  \in\mathbb{Q}_{p}%
^{N-3};\text{{}}{\prod_{i=2}^{N-2}}x_{i}\left(  1-x_{i}\right)  \text{{}%
}{\prod_{2\leq i<j\leq N-2}}\left(  x_{i}-x_{j}\right)  =0\right\}  \text{.}%
\]
These functions were introduced in \cite{Zun-B-C-LMP}, see also
\cite{Bocardo:2020mk}. The functions $Z^{(N)}\left(  \mathbf{s}\right)  $ are
holomorphic in a certain domain of $\mathbb{C}^{D_{0}}$ and admit analytic
continuations to $\mathbb{C}^{D_{0}}$ (denoted also as $Z^{\left(  N\right)
}\left(  \mathbf{s}\right)  $) as rational functions in the variables
\[
p^{-s_{ij}},i,j\in\left\{  1,\ldots,N-1\right\}  ,
\]
see \cite[Theorem 1]{Zun-B-C-LMP}, \cite[Theorem 6.1]{Bocardo:2020mk}.

If $\phi\left(  x_{2},\ldots,x_{N-2}\right)  $ is a locally constant function
with compact support, then
\begin{gather*}
Z_{\phi}^{(N)}(\boldsymbol{s})=\\%
{\displaystyle\int\limits_{\mathbb{Q}_{p}^{N-3}\smallsetminus\Lambda}}
\phi\left(  x_{2},\ldots,x_{N-2}\right)  {\prod_{i=2}^{N-2}}\left\vert
x_{i}\right\vert _{p}^{s_{1i}}\left\vert 1-x_{i}\right\vert _{p}^{s_{\left(
N-1\right)  i}}{\prod_{2\leq i<j\leq N-2}}\left\vert x_{i}-x_{j}\right\vert
_{p}^{s_{i}\text{{}}_{j}}{\prod_{i=2}^{N-2}}dx_{i}\text{,}%
\end{gather*}
for $\operatorname*{Re}(s_{ij})>0$ for any $ij$, is a multivariate Igusa local
zeta function. These functions admit analytic continuations as rational
functions of the variables $p^{-s_{ij}}$, \cite{Loeser}. If we take $\phi$ to
be the characteristic function of\ $B_{R}^{N-3}$, the ball centered at the
origin with radius $p^{R}$, the dominated convergence theorem and
\cite[Theorem 1]{Zun-B-C-LMP}, imply that
\begin{gather}
\lim_{R\rightarrow\infty}Z_{R}^{(N)}(\boldsymbol{s}):=\lim_{R\rightarrow
\infty}%
{\displaystyle\int\limits_{B_{R}^{N-3}\smallsetminus\Lambda}}
{\prod_{i=2}^{N-2}}\left\vert x_{i}\right\vert _{p}^{s_{1i}}\left\vert
1-x_{i}\right\vert _{p}^{s_{\left(  N-1\right)  i}}{\prod_{2\leq i<j\leq N-2}%
}\left\vert x_{i}-x_{j}\right\vert _{p}^{s_{i}\text{{}}_{j}}{\prod_{i=2}%
^{N-2}}dx_{i}\label{REsult_A}\\
=Z^{(N)}\left(  \mathbf{s}\right)  ,\nonumber
\end{gather}
for any $\boldsymbol{s}$ in the natural domain of $Z^{(N)}\left(
\mathbf{s}\right)  $.

In \cite{B-F-O-W}, Brekke, Freund, Olson and Witten work out the $N$-point
amplitudes in explicit form and investigate how these can be obtained from an
effective Lagrangian. The $p$-adic open string $N$-point tree amplitudes are
defined as
\begin{gather}
A_{\mathcal{M}}^{(N)}\left(  \boldsymbol{k}\right)  =\label{Amplitude}\\%
{\displaystyle\int\limits_{\mathbb{Q}_{p}^{N-3}}}
{\displaystyle\prod\limits_{i=2}^{N-2}}
\left\vert x_{i}\right\vert _{p}^{\boldsymbol{k}_{1}\boldsymbol{k}_{i}%
}\left\vert 1-x_{i}\right\vert _{p}^{\boldsymbol{k}_{N-1}\boldsymbol{k}_{i}%
}\text{ }%
{\displaystyle\prod\limits_{2\leq i<j\leq N-2}}
\left\vert x_{i}-x_{j}\right\vert _{p}^{\boldsymbol{k}_{i}\boldsymbol{k}_{j}}%
{\displaystyle\prod\limits_{i=2}^{N-2}}
dx_{i}\text{,}\nonumber
\end{gather}
where $%
{\textstyle\prod\nolimits_{i=2}^{N-2}}
dx_{i}$ is the normalized Haar measure of $\mathbb{Q}_{p}^{N-3}$,
$\boldsymbol{k}=\left(  \boldsymbol{k}_{1},\ldots,\boldsymbol{k}_{N}\right)
$, $\boldsymbol{k}_{i}=\left(  k_{0,i},\ldots,k_{D-1,i}\right)  $,
$i=1,\ldots,N$, $N\geq4$, is the momentum vector of the $i$-th tachyon (with
Minkowski product $\boldsymbol{k}_{i}\boldsymbol{k}_{j}=-k_{0,i}%
k_{0,j}+k_{1,i}k_{1,j}+\cdots+k_{D-1,i}k_{D-1,j}$) obeying
\[
\sum_{i=1}^{N}\boldsymbol{k}_{i}=\boldsymbol{0}\text{, \ \ \ \ \ }%
\boldsymbol{k}_{i}\boldsymbol{k}_{i}=2\text{ \ for }i=1,\ldots,N.
\]
In \cite{Zun-B-C-LMP}, \cite{Bocardo:2020mk}, the $p$-adic open string
$N$-point tree integrals $Z^{(N)}(\boldsymbol{s})$ are used as regularizations
of the amplitudes $A_{\mathcal{M}}^{(N)}\left(  \boldsymbol{k}\right)  $. More
precisely, the amplitude $A_{\mathcal{M}}^{(N)}\left(  \boldsymbol{k}\right)
$ can be re-define as
\[
A_{\mathcal{M}}^{(N)}\left(  \boldsymbol{k}\right)  =Z^{(N)}(\boldsymbol{s}%
)\mid_{s_{ij}=\boldsymbol{k}_{i}\boldsymbol{k}_{j}}\text{with }i\in\left\{
1,\ldots,N-1\right\}  \text{, }j\in T\text{ or }i,j\in T,
\]
where $T=\left\{  2,\ldots,N-2\right\}  $. Then the amplitudes $A_{\mathcal{M}%
}^{(N)}\left(  \boldsymbol{k}\right)  $ are well-defined rational functions of
the variables $p^{-\boldsymbol{k}_{i}\boldsymbol{k}_{j}}$, $i$, $j\in\left\{
1,\ldots,N-1\right\}  $, which agree with integrals (\ref{Amplitude}) when
they converge.

\begin{remark}
In \cite{Zun-B-C-LMP}, \cite{Bocardo:2020mk}, the local zeta functions
$Z^{(N)}(\boldsymbol{s})$ were used to regularize Koba-Nielsen amplitudes
$A_{\mathcal{M}}^{(N)}\left(  \boldsymbol{k}\right)  $, when the momenta
$\boldsymbol{k}$ belong to the Minkowski space. In this article, we use \ the
functions $Z^{(N)}(\boldsymbol{s})$ to regularize Koba-Nielsen amplitudes
$A^{(N)}\left(  \boldsymbol{k}\right)  $ when the momenta $\boldsymbol{k}$
belong to the Euclidean space. This is possible because $Z^{(N)}%
(\boldsymbol{s})$ is a rational function in the variables $p^{-s_{ij}}$,
$s_{ij}\in\mathbb{C}$, for $i$, $j\in\left\{  1,\ldots,N-1\right\}  $.
\end{remark}

\begin{remark}
\label{Nota_poles}We denote by $Z_{\cdot}^{(N)}(\boldsymbol{s})$ the
distribution $\phi\rightarrow Z_{\phi}^{(N)}(\boldsymbol{s})$. Then the
mapping
\begin{equation}%
\begin{array}
[c]{ccc}%
\mathbb{C}^{D_{0}} & \rightarrow & \mathcal{D}^{\prime}(\mathbb{Q}_{p}%
^{N-3})\\
&  & \\
\boldsymbol{s} & \rightarrow & Z_{\cdot}^{(N)}(\boldsymbol{s})
\end{array}
\label{Map_1}%
\end{equation}
is a meromorphic function of $\boldsymbol{s}$. By using the fact that
$\mathcal{D}(\mathbb{Q}_{p}^{N-3})$ is dense in the space of continuous
functions with compact support $\mathcal{C}_{c}(\mathbb{Q}_{p}^{N-3})$, the
functional $\phi\rightarrow Z_{\phi}^{(N)}(\boldsymbol{s})$ has a unique
extension to $\mathcal{C}_{c}(\mathbb{Q}_{p}^{N-3})$. Furthermore,if
$\boldsymbol{s}_{0}$ is a pole of $Z_{\phi}^{(N)}(\boldsymbol{s})$, by using
Gel'fand-Shilov method of analytic continuation, see e.g. \cite[pgs.
65-67]{Igusa},%
\[
Z_{\phi}^{(N)}(\boldsymbol{s})=%
{\displaystyle\sum\limits_{\boldsymbol{k}\in\mathbb{Z}^{D_{0}}}}
c_{\boldsymbol{k}}\left(  \phi\right)  \left(  \boldsymbol{s-s}_{0}\right)
^{\boldsymbol{k}},
\]
where the $c_{\boldsymbol{k}}$s are distributions from $\mathcal{D}^{\prime
}(\mathbb{Q}_{p}^{N-3})$. The density of $\mathcal{D}(\mathbb{Q}_{p}^{N-3})$
in $\mathcal{C}_{c}(\mathbb{Q}_{p}^{N-3})$ implies that $c_{\boldsymbol{k}%
}\neq0$ in $\mathcal{D}^{\prime}(\mathbb{Q}_{p}^{N-3})$ if and only if
$c_{\boldsymbol{k}}\neq0$ in $\mathcal{C}_{c}^{\prime}(\mathbb{Q}_{p}^{N-3})$,
the strong dual space of $\mathcal{C}_{c}(\mathbb{Q}_{p}^{N-3})$. This implies
that the mapping%
\[%
\begin{array}
[c]{ccc}%
\mathbb{C}^{D_{0}} & \rightarrow & \mathcal{C}_{c}^{\prime}(\mathbb{Q}%
_{p}^{N-3})\\
&  & \\
\boldsymbol{s} & \rightarrow & Z_{\cdot}^{(N)}(\boldsymbol{s})
\end{array}
\]
is a meromorphic function in $\boldsymbol{s}$ having the same poles of the
mapping (\ref{Map_1}).
\end{remark}

\subsection{The limit $\lim_{R\rightarrow\infty}\mathcal{A}_{R}^{\left(
N\right)  }\left(  \boldsymbol{k}\right)  $}

We now apply the above-mentioned results to study the limit
\[
\lim_{R\rightarrow\infty}\mathcal{A}_{R}^{\left(  N\right)  }\left(
\boldsymbol{k}\right)  \text{.}%
\]
First, notice that by (\ref{REsult_A}),
\[
\frac{Z_{0}}{CC_{0}}\lim_{R\rightarrow\infty}A_{R}^{\left(  N\right)  }\left(
\boldsymbol{k}\right)  =\lim_{R\rightarrow\infty}\left(  Z_{R}^{(N)}%
(\boldsymbol{s})\mid_{s_{ij}=2\frac{(p-1)}{p\ln p}\boldsymbol{k}_{i}%
\cdot\boldsymbol{k}_{j}}\right)  =Z^{(N)}\left(  \boldsymbol{s}\right)
\mid_{s_{ij}=2\frac{\left(  p-1\right)  }{p\ln p}\boldsymbol{k}_{i}%
\cdot\boldsymbol{k}_{j}}\text{.}%
\]
Now by using the fact that $Z^{(N)}\left(  \boldsymbol{s}\right)  $ is a
holomorphic function in a certain domain of $\mathbb{C}^{D_{0}}$, we conclude
that $\lim_{R\rightarrow\infty}A_{R}^{\left(  N\right)  }\left(
\boldsymbol{k}\right)  $ exists for $\boldsymbol{k}$ belonging a non-empty
subset of $\mathbb{C}^{D_{0}}$.

Second, by using Remark \ref{Nota_poles}, we may assume that $1_{B_{R}^{N-3}%
}(\boldsymbol{x})G_{r}(\boldsymbol{k},\boldsymbol{x})=\phi$ is a test function
in $\boldsymbol{x}$,\ and then $Z_{G_{r},R}^{(N)}(\boldsymbol{k})=\frac{C_{0}%
}{Z_{0}}Z_{\phi}^{(N)}(\boldsymbol{s})\mid_{s_{ij}=2\frac{\left(  p-1\right)
}{p\ln p}\boldsymbol{k}_{i}\cdot\boldsymbol{k}_{j}}$ is a multivariate local
zeta function. Furthermore,
\begin{multline*}
|Z_{G_{r},R}^{(N)}(\boldsymbol{k})|\leq\frac{C_{0}}{Z_{0}}%
{\displaystyle\int\limits_{\mathbb{Q}_{p}^{N-3}}}
\prod\limits_{i=2}^{N-2}\left\vert x_{i}\right\vert _{p}^{2\dfrac{\left(
p-1\right)  }{p\ln p}\boldsymbol{k}_{1}\cdot\boldsymbol{k}_{j}}\left\vert
1-x_{i}\right\vert _{p}^{2\dfrac{\left(  p-1\right)  }{p\ln p}\boldsymbol{k}%
_{N-1}\cdot\boldsymbol{k}_{i}}\\
\times\prod_{2\leq i,j\leq N-2}\left\vert x_{j}-x_{i}\right\vert _{p}%
^{2\dfrac{\left(  p-1\right)  }{p\ln p}\boldsymbol{k}_{i}\cdot\boldsymbol{k}%
_{j}}\left\vert G_{r}(\boldsymbol{k},\boldsymbol{x})\right\vert \prod
_{j=2}^{N-2}dx_{j}\text{,}%
\end{multline*}
which implies that $\left\vert Z_{G_{r},R}^{(N)}(\boldsymbol{k})\right\vert
\leq\frac{C_{0}\text{ }C_{r}\left(  \boldsymbol{k},R\right)  }{Z_{0}}%
Z^{(N)}\left(  \boldsymbol{k}\right)  $, where
\[
C_{r}\left(  \boldsymbol{k},R\right)  =\sup_{\boldsymbol{x}\in B_{R}^{N-3}%
}\left\vert G_{r}(\boldsymbol{k},\boldsymbol{x})\right\vert \text{.}%
\]
Since $Z^{(N)}\left(  \boldsymbol{k}\right)  $\ converges in a non-empty open
set, we conclude that all the $Z_{G_{r},R}^{(N)}(\boldsymbol{k})$s converges
in the open set where $Z^{(N)}\left(  \boldsymbol{k}\right)  $\ converges.

In conclusion we have the following result.

\begin{theorem}
\label{Prop3}The amplitudes $\mathcal{A}_{R}^{\left(  N\right)  }\left(
\boldsymbol{k}\right)  $\ satisfy the following. For $R$ fixed,
\[
\mathcal{A}_{R}^{\left(  N\right)  }\left(  \boldsymbol{k}\right)
=A_{R}^{\left(  N\right)  }\left(  \boldsymbol{k}\right)  +\sum\limits_{r=1}%
^{\infty}Z_{G_{r},R}^{(N)}(\boldsymbol{k})\text{,}%
\]
where $A_{R}^{\left(  N\right)  }\left(  \boldsymbol{k}\right)  $, and all the
$Z_{G_{r},R}^{(N)}(\boldsymbol{k})$s are multivariate Igusa's local zeta
functions,\ all of them converging in a common non-empty open set.
Furthermore,
\[
\lim_{R\rightarrow\infty}A_{R}^{\left(  N\right)  }\left(  \boldsymbol{k}%
\right)  =\frac{CC_{0}}{Z_{0}}Z^{(N)}\left(  \boldsymbol{k}\right)  \text{,}%
\]
which is the $p$-adic Koba-Nielsen open string amplitude.
\end{theorem}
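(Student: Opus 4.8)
The plan is to assemble the statement from the expansion already established in Theorem \ref{Prop2}, the analytic properties of multivariate Igusa zeta functions recorded in Remark \ref{Nota_poles}, and the limit formula (\ref{REsult_A}). The decomposition $\mathcal{A}_R^{(N)}(\boldsymbol{k}) = A_R^{(N)}(\boldsymbol{k}) + \sum_{r\geq 1} Z_{G_r,R}^{(N)}(\boldsymbol{k})$ is, at bottom, a rewriting of Theorem \ref{Prop2} in the notation introduced just afterward: the constant-term integral is by definition $A_R^{(N)}(\boldsymbol{k})$, and each integral over $B_R^{N-3}$ carrying the factor $G_r(\boldsymbol{k},\boldsymbol{x})$ coincides with $Z_{G_r,R}^{(N)}(\boldsymbol{k})$ once the restriction of the domain to $B_R^{N-3}$ is written as multiplication of the integrand by $1_{B_R^{N-3}}(\boldsymbol{x})$ followed by integration over all of $\mathbb{Q}_p^{N-3}$. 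For $R$ fixed the series converges because, by Theorem \ref{Prop2} and the finiteness from Corollary \ref{Cor1}, its sum equals the finite quantity $\mathcal{A}_R^{(N)}(\boldsymbol{k}) - A_R^{(N)}(\boldsymbol{k})$ on the relevant domain.

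Next I would check that each summand is a multivariate Igusa local zeta function. For $A_R^{(N)}(\boldsymbol{k})$ this is immediate: it equals $\frac{CC_0}{Z_0}Z_\phi^{(N)}(\boldsymbol{s})$ evaluated at $s_{ij}=2\frac{(p-1)}{p\ln p}\boldsymbol{k}_i\cdot\boldsymbol{k}_j$ with $\phi=1_{B_R^{N-3}}$, a genuine Bruhat--Schwartz test function. For $Z_{G_r,R}^{(N)}(\boldsymbol{k})$ the weight is $\phi=1_{B_R^{N-3}}(\boldsymbol{x})\,G_r(\boldsymbol{k},\boldsymbol{x})$, which by the remark following Theorem \ref{Prop2} is continuous with support contained in the compact set $B_R^{N-3}$; it need not be locally constant, but by Remark \ref{Nota_poles} the functional $\phi\mapsto Z_\phi^{(N)}(\boldsymbol{s})$ extends from $\mathcal{D}(\mathbb{Q}_p^{N-3})$ to $\mathcal{C}_c(\mathbb{Q}_p^{N-3})$ and stays meromorphic in $\boldsymbol{s}$, rational in the $p^{-s_{ij}}$ with polar locus independent of the weight. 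Hence $Z_{G_r,R}^{(N)}(\boldsymbol{k}) = \frac{C_0}{Z_0}Z_\phi^{(N)}(\boldsymbol{s})\mid_{s_{ij}=2\frac{(p-1)}{p\ln p}\boldsymbol{k}_i\cdot\boldsymbol{k}_j}$ is of the required type.

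To obtain the common domain of convergence I would invoke the majorization established just before the statement: $|Z_{G_r,R}^{(N)}(\boldsymbol{k})|\leq \frac{C_0\,C_r(\boldsymbol{k},R)}{Z_0}Z^{(N)}(\boldsymbol{k})$, where $C_r(\boldsymbol{k},R)=\sup_{\boldsymbol{x}\in B_R^{N-3}}|G_r(\boldsymbol{k},\boldsymbol{x})|<\infty$ by continuity of $G_r$ on a compact set, and $Z^{(N)}(\boldsymbol{k})$ is the absolutely convergent integral with exponents $s_{ij}=2\frac{(p-1)}{p\ln p}\boldsymbol{k}_i\cdot\boldsymbol{k}_j$. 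Since $Z^{(N)}$ is holomorphic, hence absolutely convergent, on a non-empty open subset $U$ (see \cite[Theorem 1]{Zun-B-C-LMP}), the same $U$ serves simultaneously for $A_R^{(N)}$ (the case $G_r\equiv C$) and for every $Z_{G_r,R}^{(N)}$, which is exactly the ``common non-empty open set'' asserted.

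Finally, for the limit I would use (\ref{REsult_A}): writing $A_R^{(N)}(\boldsymbol{k})=\frac{CC_0}{Z_0}Z_R^{(N)}(\boldsymbol{s})\mid_{s_{ij}=2\frac{(p-1)}{p\ln p}\boldsymbol{k}_i\cdot\boldsymbol{k}_j}$ and letting $R\to\infty$, the dominated-convergence argument behind (\ref{REsult_A}) yields $\lim_{R\to\infty}A_R^{(N)}(\boldsymbol{k})=\frac{CC_0}{Z_0}Z^{(N)}(\boldsymbol{s})\mid_{s_{ij}=2\frac{(p-1)}{p\ln p}\boldsymbol{k}_i\cdot\boldsymbol{k}_j}=\frac{CC_0}{Z_0}Z^{(N)}(\boldsymbol{k})$ on the natural domain of $Z^{(N)}$, the regularized $p$-adic Koba--Nielsen amplitude. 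The one genuinely delicate point, and the step I would treat most carefully, is the passage from test-function theory to the $\mathcal{C}_c$ setting for the weight $1_{B_R^{N-3}}G_r$: one must confirm that evaluating the meromorphic family $\boldsymbol{s}\mapsto Z_{\cdot}^{(N)}(\boldsymbol{s})$ at a continuous compactly supported weight indeed produces a function rational in the $p^{-s_{ij}}$ whose poles do not depend on the weight, which is precisely what the density of $\mathcal{D}(\mathbb{Q}_p^{N-3})$ in $\mathcal{C}_c(\mathbb{Q}_p^{N-3})$ together with the Gel'fand--Shilov expansion in Remark \ref{Nota_poles} supplies.
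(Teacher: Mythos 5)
Your proposal is correct and follows essentially the same route as the paper: it assembles the decomposition from Theorem \ref{Prop2} via the notation $A_{R}^{\left(N\right)}$ and $Z_{G_{r},R}^{(N)}$, identifies each term as a multivariate Igusa zeta function by extending the functional $\phi\mapsto Z_{\phi}^{(N)}(\boldsymbol{s})$ to $\mathcal{C}_{c}(\mathbb{Q}_{p}^{N-3})$ as in Remark \ref{Nota_poles}, obtains the common domain of convergence from the majorization $|Z_{G_{r},R}^{(N)}(\boldsymbol{k})|\leq\frac{C_{0}C_{r}(\boldsymbol{k},R)}{Z_{0}}Z^{(N)}(\boldsymbol{k})$, and derives the limit from (\ref{REsult_A}). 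The one point you flag as delicate --- passing from test-function weights to continuous compactly supported weights while preserving rationality in the $p^{-s_{ij}}$ and the weight-independence of the polar locus --- is exactly the step the paper also delegates to Remark \ref{Nota_poles}.
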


\subsection{$\boldsymbol{\varphi}^{4}$-theories}

Consider the family of $\boldsymbol{\varphi}^{4}$-interacting quantum field
theories:
\[
\frac{1_{\mathcal{L}_{\mathbb{R}}^{D}\left(  \mathbb{Q}_{p}\right)  }\left(
\boldsymbol{\varphi}\right)  \text{{}}e^{-\lambda E_{int}(\boldsymbol{\varphi
})}d\mathbb{P}_{D}\left(  \boldsymbol{\varphi}\right)  }{Z}\text{, for
}\lambda>0\text{,}%
\]
where
\[
E_{int}(\boldsymbol{\varphi})=\sum\limits_{j=0}^{D-1}%
{\displaystyle\int\limits_{\mathbb{Q}_{p}}}
\varphi_{j}^{4}(x)dx\text{,\ and\ \ }Z=%
{\displaystyle\int\limits_{\mathcal{L}_{\mathbb{R}}^{D}\left(  \mathbb{Q}%
_{p}\right)  }}
e^{-\lambda E_{int}(\boldsymbol{\varphi})}d\mathbb{P}_{D}\left(
\boldsymbol{\varphi}\right)  \text{.}%
\]
The amplitudes of such theories are defined as
\[
\mathcal{A}_{R}^{\left(  N\right)  }\left(  \boldsymbol{k},\lambda\right)
=\frac{1}{Z}%
{\displaystyle\int\limits_{B_{R}^{N-3}}}
\left\{
{\displaystyle\int\limits_{\mathcal{L}_{\mathbb{R}}^{D}\left(  \mathbb{Q}%
_{p}\right)  }}
e^{\sum_{j=2}^{N-2}\boldsymbol{k}_{j}\cdot\boldsymbol{\varphi}\left(
x_{j}\right)  -\lambda E_{int}(\boldsymbol{\varphi})}d\mathbb{P}_{D}\left(
\boldsymbol{\varphi}\right)  \right\}  \prod_{j=2}^{N-2}dx_{j}\text{.}%
\]
These amplitudes admit expansions of the type given in Proposition
\ref{Prop2}, where the functions $G_{r}(\boldsymbol{k},\boldsymbol{x})$ are
replaced by continuous functions in $\boldsymbol{x}$ depending on
$\boldsymbol{k}$ and $\lambda$. The behavior of these quantum field theories
is completely different from the standard ones due to the fact that we are
computing the correlation functions for a very particular class of
observables, which are products of vertex operators.

\textbf{Conflicts of Interest:} The authors declare no conflict of interest.

\textbf{ Author Contributions:} all the authors contributed to the manuscript
equally. All authors have read and agreed to the published version of the manuscript.

\begin{acknowledgement}
The authors wish to thank the referee for his/her careful reading of the
original manuscript and the suggestions made.
\end{acknowledgement}

\end{document}